\title{The Next 700 Policy Miners: A Universal Method for Building Policy Miners}
\author{
        Carlos Cotrini, Luca Corinzia, Thilo Weghorn, David Basin \\
                Department of Computer Science\\
        ETH Z\"urich\\
        \{ccarlos, luca.corinzia, thilo.weghorn, basin\}@inf.ethz.ch
}
\date{\today}

\documentclass[9pt]{extarticle}

\hyphenation{op-tical}

\usepackage{extsizes}
\usepackage[a4paper, left=2.5cm, right=2.5cm, top=2cm, bottom=3cm]{geometry}
\usepackage{color, soul}
\usepackage{array}
\usepackage{amsmath}
\usepackage{amsthm}
\usepackage{amssymb}
\usepackage[linesnumbered, ruled]{algorithm2e}
\usepackage{booktabs}
\usepackage{caption}
\usepackage{enumitem}
\usepackage{hyperref}
\usepackage{pgfplots}
\usepackage{rotating}
\usepackage{subcaption}
\usepackage{stmaryrd}
\usepackage{libertine}
\usepackage{xcolor}

\theoremstyle{definition}
\newtheorem{ndefinition}{Definition}

\newtheorem{corollary}{Corollary}
\newtheorem{lemma}{Lemma}
\newtheorem{observation}{Observation}
\newtheorem{theorem}{Theorem}
\newtheorem{nexample}{Example}

\newenvironment{customthm}[1]
  {\innercustomthm}
  {\endinnercustomthm}

\newcommand\xqed[1]{%
  \leavevmode\unskip\penalty9999 \hbox{}\nobreak\hfill
  \quad\hbox{#1}}
\newenvironment{definition}{\begin{ndefinition}}{\xqed{$\square$}\end{ndefinition}}
\newenvironment{example}{\begin{nexample}}{\xqed{$\square$}\end{nexample}}

\newcommand{\polymver}{\textsc{Unicorn}}

\newcommand{\ricons}[1]{\underline{\mathsf{#1}}}

\newcommand{\Users}{\textbf{USERS}}
\newcommand{\Perms}{\textbf{PERMS}}

\newcommand{\NumRoles}{\mathit{N}}

\newcommand{\prob}[1]{\mathit{P}\left(#1\right)}

\newcommand{\Lang}{\mathcal{L}}

\newcommand{\carlos}[1]{
}

\newcommand{\Auth}{\mathit{Auth}}
\newcommand{\Roles}{\SortRoles}

\renewcommand{\prob}{\mathbb{P}}
\newcommand{\SortUsers}{\sortFont{USERS}}
\newcommand{\SortRoles}{\sortFont{ROLES}}
\newcommand{\SortPermissions}{\sortFont{PERMS}}
\newcommand{\sortFont}[1]{\textbf{#1}}

\newcommand{\lossfunc}{L}
\newcommand{\cons}[1]{\textsf{#1}}
\newcommand{\domi}[1]{\textsf{#1}}
\newcommand{\vari}[1]{\textit{#1}}
\newcommand{\Integers}{\textbf{INTS}}
\newcommand{\Strings}{\textbf{STRS}}
\newcommand{\Bool}{\textbf{BOOL}}

\begin{document}
\maketitle

\begin{abstract}
A myriad of access control policy languages have been and continue to be proposed. The design of policy miners for each such language is a challenging task that has required specialized machine learning and combinatorial algorithms. We present an alternative method, \emph{universal access control policy mining} ({\polymver}). We show how this method streamlines the design of policy miners for a wide variety of policy languages including ABAC, RBAC, RBAC with user-attribute constraints, RBAC with spatio-temporal constraints, and an expressive fragment of XACML. For the latter two, there were no known policy miners until now. 

To design a policy miner using {\polymver}, one needs a policy language and a metric quantifying how well a policy fits an assignment of permissions to users. From these, one builds the policy miner as a search algorithm that computes a policy that best fits the given permission assignment. We experimentally evaluate the policy miners built with {\polymver} on logs from Amazon and access control matrices from other companies. Despite the genericity of our method, our policy miners are competitive with and sometimes even better than specialized state-of-the-art policy miners. The true positive rates of policies we mined differ by only 5\% from the policies mined by the state of the art and the false positive rates are always below 5\%. In the case of ABAC, it even outperforms the state of the art.\looseness=-1
\end{abstract}

\section{Introduction}
\label{sec:introduction}

\subsection{Motivation and research problem}
\label{sub:intro_problem}

Numerous access control policy languages have been proposed over the last decades, e.g., RBAC (Role-Based Access Control)~\cite{ferraiolo2001proposed}, ABAC (Attribute-Based Access Control)~\cite{hu2013guide}, XACML (eXtended Access-Control Markup Language)~\cite{godik2002oasis}, and new proposals are continually being developed, e.g., \cite{yang2016time,mukherjee2017attribute,
biswas2016label,cheng2016extended,
bhatt2017abac}. To facilitate the policy specification and maintenance process, policy miners have been proposed, e.g., \cite{frank2009probabilistic,xu2015mining,
mitra2013toward,bui19mining,gautam2017poster, cotrini2018rhapsody,molloy2012generative,amazonchallenge}.
These are algorithms that receive an assignment of permissions to users and output a policy that grants permissions to users that match as closely as possible the given assignment.\looseness=-1

Designing a policy miner is challenging and requires sophisticated combinatorial or machine-learning techniques. Moreover, policy miners are tailor-made for the specific policy language they were designed for and they are inflexible in that any modification to the miner's requirements necessitates its redesign and reimplementation. For example, miners that mine RBAC policies from access control matrices~\cite{frank2009probabilistic} are substantially different from those that mine RBAC policies from access logs~\cite{molloy2012generative}. As evidence for the difficulty of this task, despite extensive work in policy mining, no miner exists for XACML~\cite{godik2002oasis}, which is a well-known, standardized language.\looseness=-1 

Any organization that wishes to benefit from policy mining faces the challenge of designing a policy miner that fits its own policy language and its own requirements. This problem, which we examine in Section~\ref{sec:limitations}, is summarized with the following question: \emph{is there a more general and more practical method to design policy miners?}\looseness=-1

\begin{figure*}[h!btp]
\centering
\includegraphics[width=0.7\textwidth]{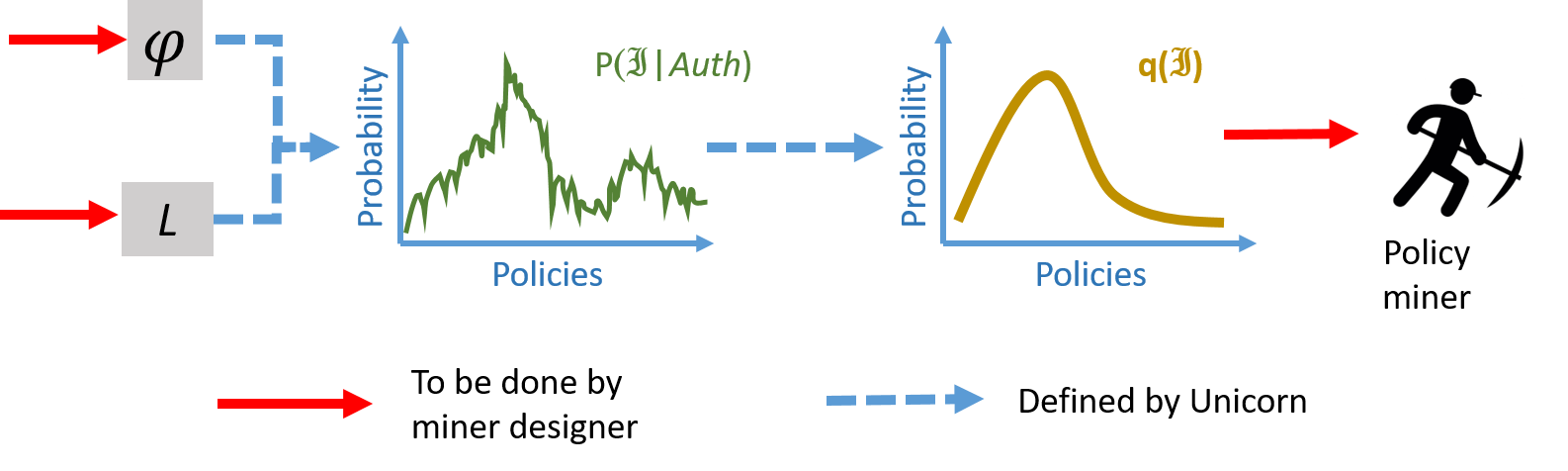}
\caption{Workflow for designing a policy miner using {\polymver}.}
\label{fig:unicorn}
\end{figure*}

\subsection{Contribution}
\label{sub:intro_contribution}

We propose a radical shift in the way policy miners are built. Rather than designing specialized mining algorithms, one per policy language, we propose {\polymver}, \emph{a universal method for building policy miners}. Using this method, the designers of policy miners no longer must be experts in machine learning or combinatorial optimization to design effective policy miners. Our method gives a step-by-step procedure to build a policy miner from just the \emph{policy language} and an \emph{objective function} that measures how well a policy fits an assignment of permissions to users.

Let $\Gamma$ be a policy language. We sketch below and in Figure~\ref{fig:unicorn} the workflow for designing a policy miner for $\Gamma$ using {\polymver}.

\paragraph{Policy language and objective function (Sections~\ref{sec:uni_syntax} and~\ref{sec:prob_dist})} The miner designer specifies a \emph{template formula for $\Gamma$} in a fragment $\Lang$ of first-order logic. Template formulas are explained in Section~\ref{sec:uni_syntax}. 
The designer also specifies an objective function $L$ that measures how well a policy fits a permission assignment. 
\vspace{-5pt}
\paragraph{Probability distribution (Section~\ref{sec:prob_dist})} From $\varphi$ and $L$, we define a probability distribution $\mathbb{P}$ on policies, conditioned on permission assignments. A permission assignment is a relation between the set of users and the set of permissions. 
The policy miner is a program that receives as input a permission assignment $\mathit{Auth}$ and aims to compute the most likely policy conditioned on $\Auth$; that is, the policy $\mathfrak{I}$ that maximizes $\mathbb{P}\left(\mathfrak{I} \mid \mathit{Auth}\right)$. 
\vspace{-5pt}
\paragraph{Approximation (Section~\ref{sec:build_var_infer})} Computing $\max_{\mathfrak{I}}\mathbb{P}\left(\mathfrak{I} \mid \mathit{Auth}\right)$ takes time exponential in the size of $\mathit{Auth}$ and $\mathfrak{I}$, encoded as strings. Moreover, the function $\mathbb{P}\left(\mathfrak{I} \mid \mathit{Auth}\right)$ has many local maxima. Hence, we use \emph{deterministic annealing} and \emph{mean-field approximation}~\cite{rose1992vector,rose1998deterministic, blei2017variational, bishop2006pattern} to derive an iterative procedure that computes a distribution $q$ on policies that approximates $\mathbb{P}\left(\mathfrak{I} \mid \mathit{Auth}\right)$. Computing $\arg\max_{\mathfrak{I}}q\left(\mathfrak{I}\right)$ takes time polynomial in the size of $\mathit{Auth}$ and $\mathfrak{I}$.
\vspace{-5pt}
\paragraph{Implementation (Section~\ref{sec:polymer})} The policy miner is a procedure that computes and maximizes $q$. One need not understand mean-field approximations or deterministic annealing to implement the policy miner. We provide a set of rewriting rules and pseudocode that guide step by step $q$'s computation and maximization (see Algorithm~\ref{algo:varinfer} and Lemma~\ref{lem:expectations_flas}).\looseness=-1

In summary, designing a policy miner for a policy language previously required expertise in machine learning and combinatorial algorithms. {\polymver} reduces this to the task of specifying a template formula and implementing $q$'s maximization. We illustrate how specifying template formulas requires only the background in first-order logic provided in this paper and how it amounts to just formalizing the language's semantics in first order logic, a task that is substantially simpler than designing a machine-learning or a combinatorial algorithm.

\subsection{Applications and evaluation}
\label{sub:intro_using}

Using {\polymver}, we have built miners for different policy languages like RBAC, ABAC, and RBAC with user attributes. Furthermore, we have built policy miners for RBAC with spatio-temporal constraints and an expressive fragment of XACML, for which no miner existed before. We present them in Sections~\ref{sec:examples} and ~\ref{sub:st_rbac} and in the appendix.\looseness=-1

In Section~\ref{sec:experiments}, we conduct an extensive experimental evaluation using datasets from all publicly available real-world case studies on policy mining. We compare the miners we built with state-of-the-art miners on both real-world and synthetic datasets. The true positive rates of the policies mined by our miners are within 5\% of the true positive rates of the policies mined by the state of the art. For policy languages like XACML or RBAC with spatio-temporal constraints, the true positive rates are above 75\% in all cases and above 80\% in most of them. The false positive rates are always below 5\%. For ABAC policies, we mine policies with a substantially lower complexity and higher precision than those mined by the state of the art. This demonstrates that with {\polymver} we can build a wide variety of policy miners, including new ones, that are competitive with or even better than the state of the art.\looseness=-1

{\polymver}'s effectiveness follows from the wide applicability of deterministic annealing (DA). This technique has been applied to different optimization problems like the traveling salesman problem~\cite{rose1998deterministic}, clustering~\cite{rose1992vector}, and image segmentation~\cite{hofmann1997pairwise}. DA can also be applied to policy mining. However, in our case, computing the distribution $\mathbb{P}\left(\mathfrak{I} \mid \mathit{Auth}\right)$ required by DA is intractable. Hence, we use mean-field approximation (MFA) to compute a distribution $q$ that approximates $\mathbb{P}$. This distribution $q$ is much easier to compute. Moreover, our approach of DA with MFA turns out to generalize to a wide variety of policy languages.

We examine related work and draw conclusions in Sections~\ref{sec:related_work} and~\ref{sec:uni_conclusion}. For details on deterministic annealing and mean-field approximation, we refer to the literature ~\cite{rose1992vector, rose1998deterministic, blei2017variational, bishop2006pattern}.\looseness=-1 

\vspace{-5pt}

\section{Preliminaries}

\subsection{Policy mining}
\label{sub:policy_miners}

Organizations define \emph{organizational policies} that specify which permissions each user in the organization has. Such policies are usually described in a high-level language. To be machine enforceable, policy administrators must specify this policy as an \emph{(access control) policy} in a machine-readable format. This policy assigns \emph{permissions} to \emph{users} and is formalized in a \emph{policy language}. The policy is then enforced by mechanisms that intercept each \emph{request} (a pair consisting of a user and a permission) and check whether it is authorized.\looseness=-1 

Organizations are highly dynamic. New users come and existing users may go. Moreover, groups of users may be transferred to other organizational units. Such changes induce changes in the access control policy, which are usually manually implemented, giving rise to the following problems. First, the policy may become convoluted and policy administrators no longer have an overview on who is authorized to do what. Second, policy administrators may have granted to users more permissions than needed to do their jobs. This makes the organization vulnerable to abuse by its own users, who may exploit the additional permissions and harm the organization.\looseness=-1

To address these problems, numerous \emph{policy miners} have been proposed~\cite{frank2009probabilistic,xu2015mining,
mitra2013toward,bui19mining,gautam2017poster, cotrini2018rhapsody,molloy2012generative,amazonchallenge}. We describe some of them in Section~\ref{sub:status_quo}. Policy miners are algorithms that receive as input the current \emph{permission assignment}, which is a relation between the set of users and the set of permissions. The permission assignment might be given as an access control matrix or a log of access requests showing the access decisions previously made for each request. It describes the organization's implemented knowledge on which permissions should be assigned to which users. The miner then constructs a policy that is as consistent as possible with the permission assignment and can be expressed using the organization's policy language.

A policy miner aims to solve the two problems mentioned above. First, it can mine succinct policies that grant permissions consistent with the given permission assignment. 
Second, policy miners can mine policies that assign only those permissions that users necessarily need. An administrator can then compare the mined policy with the currently implemented policy in order to detect permissions that are granted by the current policy, but that are not being exercised by the users. Policy administrators can then inspect those permissions and decide if they are necessary for those users.

The problem of policy mining is defined as follows. Given a permission assignment and an objective function, compute a policy that minimizes the objective function. Usually, objective functions measure how well a policy fits a permission assignment and how complex a policy is. We give examples of objective functions later in Sections~\ref{sec:prob_dist} and~\ref{sec:examples}.

\subsection{Quality criteria for policy miners}
\label{sub:quality_criteria}

Policy miners can be regarded as machine-learning algorithms. Therefore, they are evaluated by the quality of the policies they mine, and here two criteria are used:

\paragraph{Generalization~\cite{frank2013role,cotrini2018rhapsody,molloy2012generative}} A mined policy should not only authorize requests consistent with the given permission assignment. It must also correctly decide what \emph{other} permissions should be granted to users who perform similar functions in the organization. This is particularly important when mining from logs. For example, if most of students in a university have requested and been granted access to a computer room, then the mined policy should grant all students access to the computer room rather than just to those who previously requested access to it. For a formal definition of generalization, we refer to previous work and standard references in machine learning~\cite{cotrini2018rhapsody, frank2013role, bishop2006pattern}. One popular machine-learning method to evaluate generalization is cross-validation~\cite{friedman2001elements, bishop2006pattern}, which we recall in Appendix~\ref{sub:cross_val}.

\paragraph{Complexity~\cite{bui19mining, xu2015mining}} A mined policy should not be unnecessarily complex, as the policies are usually reviewed and audited by humans. This is especially important when mining with the goal of refactoring an existing policy or migrating to a new policy language. However, there is no standard formalization of a policy's complexity, not even for established policy languages like RBAC or ABAC. Each previous work has defined its own metrics to quantify complexity~\cite{cotrini2018rhapsody, xu2014mining, xu2015mining, frank2010definition}. We discuss some of these metrics in Section~\ref{sec:examples} and show how {\polymver} is able to work with all of them.

\section{The problem of designing policy miners}
\label{sec:limitations}

\subsection{Status quo: specialized solutions}
\label{sub:status_quo}

Numerous policy languages exist for specifying access control policies, which fulfill different organizational requirements. 
Moreover, new languages are continually being proposed. Some of them formulate new concepts, like extensions of RBAC that can express temporal and spatial constraints~\cite{ray2007spatio, chen2008spatio, kumar2006strbac, toahchoodee2009ensuring, chandran2005lot, aich2007starbac, cui2007ex}. Other languages facilitate policy specification in specialized settings such as distributed systems~\cite{godik2002oasis, tsankov2014decentralized} or social networks~\cite{fong2011relationship}.\looseness=-1

Motivated by the practical problem of maintaining access control configurations, researchers have proposed policy miners for a variety of policy languages. Moreover, for some policy languages, these miners optimize different objectives. For example, initial RBAC miners mined policies with a minimal number of roles~\cite{vaidya2007role, lu2008optimal, schlegelmilch2005role, vaidya2006roleminer, zhang2007role}. Subsequent miners mined policies that are as consistent as possible with the user-attribute information~\cite{molloy2012generative, frank2013role, xu2012algorithms}. \looseness=-1

The development of policy miners is non-trivial and generally requires sophisticated combinatorial and machine-learning algorithms. Recent ABAC miners have used association rule mining~\cite{cotrini2018rhapsody} and classification trees~\cite{chari2016generation}. The most effective RBAC miners use deterministic annealing~\cite{frank2013role} and latent Dirichlet allocation~\cite{molloy2012generative}.

The proposed miners are so specialized that it is usually unclear how to apply them to other policy languages or even to extensions of the languages for which they were conceived. For example, different extensions of RBAC that support spatio-temporal constraints have been proposed over the last two decades, e.g.,~\cite{ray2007spatio, chen2008spatio, kumar2006strbac, toahchoodee2009ensuring, chandran2005lot, aich2007starbac, cui2007ex}. However, not a single miner has been proposed for these extensions. Miners have only recently emerged that mine RBAC policies with constraints, albeit only temporal ones~\cite{mitra2015generalized, mitra2016mining, stoller2016mining}. As a result, if an organization wants to use a specialized policy language, it must invent its own policy miner, which is challenging and time-consuming.

\subsection{Alternative: A universal method}
\label{sub:contribution} 

To facilitate the development of policy miners, we propose a new method, \emph{universal access control policy mining} ({\polymver}). With this method, organizations no longer need to spend substantial effort designing specialized policy miners for their unique and specific policy languages; they only need to perform the following tasks (see also Figure~\ref{fig:unicorn}). First, they specify a \emph{template formula} $\varphi$ for the organization's policy language. We explain later in Section~\ref{sec:uni_syntax} what a template formula is. Second, they specify an objective function. Finally, 
they implement the miner as indicated by the algorithm template in Section~\ref{sec:polymer}.
We formalize these tasks in the next sections.\looseness=-1

\section{A universal policy language}
\label{sec:uni_syntax}
In order to obtain a universal method, we need a framework for specifying policy languages. We choose \emph{many-sorted first-order logic}~\cite{enderton2001mathematical, ebbinghaus2013mathematical}, which has been used to model and reason about numerous policy languages, e.g.~\cite{arkoudas2014sophisticated,jha2014security,cotrini2015analyzing,turkmen2015analysis}.

Let $\Gamma$ be a policy language for which we want to design a policy miner. In this section we explain the first task: the miner designer must specify a template formula $\varphi_{\Gamma}$ for $\Gamma$. This is a first-order formula that fulfills some conditions that we explain later in Definition~\ref{def:uni_template_formula}. We show how $\Gamma$ can be identified with $\varphi_{\Gamma} \in \Lang$ and how policies in $\Gamma$ can be identified with \emph{interpretation functions} that interpret $\varphi_{\Gamma}$'s symbols. We thereby reduce the problem of designing a policy miner to designing an algorithm that searches for a particular interpretation function.\looseness=-1

We start by recalling first-order logic (Section~\ref{sec:back_fol}). Then we provide some intuition on template formulas using RBAC (Section~\ref{sub:motivating_example}). Afterwards, we propose a fragment $\Lang$ of first-order logic that is powerful enough to contain template formulas for a variety of policy languages like RBAC, ABAC, and an expressive fragment of XACML (Section~\ref{sub:uni_lang_def}). We then define template formulas (Section~\ref{sub:uni_template_formulas}) and give an example of a template formula for RBAC (Section~\ref{sub:example_rbac}).\looseness=-1

\subsection{Background in first-order logic}
\label{sec:back_fol}

We provide here an overview of basic many-sorted first-order logic and conventions we employ. 
The reader familiar with logic can read this section lightly.
We work only with \emph{finite} first-order structures. That is, structures whose carrier sets are finite. Later, in our examples, we will see that finite structures are still powerful enough to model practical scenarios, as organizations do not need to handle infinite sets. Even for the case of strings and integers, organizations often only use a finite subset of them. \looseness=-1
%

\begin{definition}
A \emph{signature} is a tuple $\left(\mathbb{S}, \mathbb{R}, \mathbb{F}, \mathbb{V}\right)$ fulfilling the following, where $\mathbb{S}$ is a finite non-empty set of \emph{sorts}, $\mathbb{R}$ is a finite non-empty set of \emph{relation symbols}, $\mathbb{F}$ is a finite non-empty set of \emph{function symbols}, and $\mathbb{V}$ is a countable set of \emph{variables}.

Each relation and each function symbol has an associated \emph{type}, which is a sequence of sorts. 
Furthermore, we assume the existence of two sorts $\Users, \Perms \in \mathbb{S}$, denoting the users and the permissions in the organization, respectively. We also assume the existence of the sorts $\Bool$, $\Integers$, $\Strings$, which represent Boolean values, integers, and strings, respectively. 
\end{definition}

We denote sorts with \textbf{\textsf{CAPITAL BOLD}} letters, relation symbols with $\mathit{CAPITAL}$ $\mathit{ITALIC}$ letters, and function symbols and variables with $\mathit{small}$ $\mathit{italic}$ letters. To agree with standard notation, we write a relation symbol's type $(\sortFont{S}_1, \ldots, \sortFont{S}_k)$ as $\sortFont{S}_1 \times \ldots \times \sortFont{S}_k$ instead. We write a function's symbol's type $(\sortFont{S}_1, \ldots, \sortFont{S}_k)$ as $\sortFont{S}_1 \times \ldots \times \sortFont{S}_{k-1} \to \sortFont{S}_k$ instead. We allow $k = 1$ and, in that case, we call function symbols \emph{constant symbols}. We denote constant symbols with \textsf{small serif} letters.\looseness=-1

\begin{definition}
Let $\Sigma$ be a signature. We define \emph{(first-order) terms} as those expressions built from $\Sigma$'s variables and function symbols in the standard way. We also define \emph{(first-order) formulas} as those expressions obtained from terms by using relation symbols, terms, and logical operators in the standard way.
\label{def:bkg_formulas}
\end{definition}

We only allow well-typed terms and formulas and associate to every term a type in the standard way. In addition, we consider only quantifier-free formulas. For a formula $\varphi$, if $\{x_1, \ldots, x_n\}$ is the set of all variables occurring in it, then we sometimes write $\varphi\left(x_1, \ldots, x_n\right)$ instead of $\varphi$ to clarify which variables occur in $\varphi$.


\begin{definition}
Let $\Sigma$ be a signature. A \emph{$\Sigma$-structure} is a pair $\mathbb{K} = \left(\mathfrak{S}, \mathfrak{I}\right)$. Here, $\mathfrak{S}$ is a function mapping each sort $\sortFont{S}$ in $\Sigma$ to a finite non-empty set $\sortFont{S}^\mathfrak{S}$, called \emph{$\sortFont{S}$'s carrier set}. $\mathfrak{S}$ must map $\Bool$, $\Integers$, and $\Strings$ to the sets of Boolean values, a finite set of integers, and a finite set of strings, respectively. $\mathfrak{I}$ is a function mapping (i) each relation symbol $R$ in $\Sigma$ of type $\sortFont{S}_1 \times \ldots \times \sortFont{S}_k$ to a relation $R^\mathfrak{I} \subseteq \sortFont{S}^\mathfrak{S}_1 \times \ldots \times \sortFont{S}^\mathfrak{S}_k$ and (ii) each function symbol $f$ in $\Sigma$ of type $\sortFont{S}_1 \times \ldots \times \sortFont{S}_{k-1} \to \sortFont{S}_k$ to a function $f^\mathfrak{I} : \sortFont{S}^\mathfrak{S}_1 \times \ldots \times \sortFont{S}^\mathfrak{S}_{k-1} \to \sortFont{S}^\mathfrak{S}_k$. In particular, a constant symbol of sort $\sortFont{S}$ is mapped to an element in $\sortFont{S}^{\mathfrak{S}}.$
For any symbol $W$ in $\Sigma$, we call $W^{\mathfrak{I}}$, \emph{$\mathbb{K}$'s interpretation of $W$}. The function $\mathfrak{I}$ is called an \emph{interpretation function}.
\label{def:bkg_sigma_structure}
\end{definition}

When $\Sigma$ is irrelevant or clear from the context, we simply say structure instead of $\Sigma$-structure. We denote elements of carrier sets with \textsf{small serif} letters like $\domi{a}$ and $\domi{b}$. \looseness=-1

Let $\left(\mathfrak{S}, \mathfrak{I}\right)$. The interpretation function $\mathfrak{I}$ gives rise in the standard way to a function that maps any formula $\varphi\left(x_1, \ldots, x_n\right)$, with $x_i$ of sort $\sortFont{W}_i$, to a relation $\varphi^{\mathfrak{I}} \subseteq \sortFont{W}_1^{\mathfrak{S}} \times \ldots \times \sortFont{W}_n^{\mathfrak{S}}$. For $(\domi{a}_1, \ldots \domi{a}_n) \in \sortFont{W}_1^\mathfrak{S} \times \ldots \times \sortFont{W}_n^\mathfrak{S}$, $\varphi^{\mathfrak{I}}(\domi{a}_1, \ldots \domi{a}_n)$ holds if the formula $\varphi$ evaluates to true after replacing each $x_i$ with $a_i$. 

\subsection{Motivating example}
\label{sub:motivating_example}


We present an example of a template formula $\varphi^{\mathit{RBAC}}_N$ for the language $\Gamma_N$ of all RBAC policies with at most $N$ roles. We then show that \emph{every RBAC policy in $\Gamma$ can be identified with an interpretation function}. With this example, we provide some intuition on an argument we give later in Section~\ref{sub:uni_template_formulas}: mining a policy in a policy language $\Gamma$ is equivalent to searching for an interpretation function that interprets the symbols occurring in a template formula for $\Gamma$.

\begin{definition}
An \emph{RBAC policy} is a tuple $\pi = (U, \mathit{Ro}, P, \mathit{Ua}, \mathit{Pa})$. $U$ and $P$ are non-empty sets denoting, respectively, the sets of users and permissions in an organization. $\mathit{Ro}$ is a set denoting the organization's roles. $\mathit{Ua} \subseteq \mathit{U} \times \mathit{Ro}$ and $\mathit{Pa} \subseteq \mathit{Ro} \times \mathit{P}$ are binary relations. The policy $\pi$ \emph{assigns} a permission $\domi{p} \in P$ to a user $\domi{u} \in U$ if $(\domi{u},\domi{p}) \in \mathit{Ua} \circ \mathit{Pa}$ (i.e., if there is a role $\domi{r} \in \mathit{Ro}$ such that $(\domi{u},\domi{r}) \in \mathit{Ua}$ and $(\domi{r},\domi{p}) \in \mathit{Pa}$).\looseness=-1
\label{def:pre_rbac}
\end{definition}

Consider the language $\Gamma_N$ of RBAC policies with at most $N$ roles. We now present a template formula for $\Gamma_N$. We only provide some intuition here and give formal justifications in Section~\ref{sub:example_rbac}. Let $\Sigma$ be a signature with two relation symbols $\mathit{UA}$ and $\mathit{PA}$ of types $\Users \times \Roles$ and $\Roles \times \Perms$, respectively. Let
\begin{equation}
\varphi^{\mathit{RBAC}}_N\left(\vari{u}, \vari{p}\right) := \bigvee_{i \leq N} \left(\mathit{UA}\left(\vari{u}, \cons{r}_i\right) \land \mathit{PA}\left(\cons{r}_i, \vari{p}\right)\right).
\label{eq:example_forbac_formula}
\end{equation}
Here, $\vari{u}$ and $\vari{p}$ are variables of sorts $\SortUsers$ and $\SortPermissions$, respectively, and $\cons{r}_i$, for $i \leq N$, is a constant of sort $\SortRoles$. We now make two observations about $\varphi^{\mathit{RBAC}}_N\left(\vari{u}, \vari{p}\right)$.

\paragraph{1) Each RBAC policy in $\Gamma_N$ corresponds to at least one interpretation function} Note that for any $\Sigma$-structure $\mathbb{K} = \left(\mathfrak{S}, \mathfrak{I}\right),$ the tuple 
\begin{equation}
\pi_{\mathbb{K}} = \left(\Users^{\mathfrak{S}}, \Roles^{\mathfrak{S}}, \Perms^{\mathfrak{S}}, \mathit{UA}^{\mathfrak{I}}, \mathit{PA}^{\mathfrak{I}}\right)
\end{equation}
is an RBAC policy. Conversely, one can show that every RBAC policy in $\Gamma_N$ can be associated with a $\Sigma$-structure. Observe now that, when an organization wants to mine an RBAC policy, $\mathfrak{S}$ is already known. Indeed, the organization knows the set of users and permissions. It may not known the set of roles, but it can deduce them from $\mathit{UA}^{\mathfrak{I}}$ and $\mathit{PA}^{\mathfrak{I}}$, once it knows $\mathfrak{I}$. Analogously, for all policy languages we studied, we observed that $\mathfrak{S}$ was always known by the organization. Therefore, we always assume $\mathfrak{S}$ given and fixed and we conclude that every RBAC policy in $\Gamma_N$ corresponds to at least one interpretation function. 

\paragraph{2) The formula $\varphi^{\mathit{RBAC}}_N$ describes $\Gamma_N$'s semantics} More precisely, if $\pi_{\mathbb{K}}$ has at most $N$ roles, then for any user $\domi{u}$ in $\mathbb{K}$ and any $\domi{p}$ in $\mathbb{K}$: $\pi_{\mathbb{K}}$ assigns $\domi{p}$ to $\domi{u}$ iff $\left(\varphi^{\mathit{RBAC}}_N\right)^{\mathfrak{I}}\left(\domi{u}, \domi{p}\right)$. This follows from two arguments. First, by definition, $\pi_{\mathbb{K}}$ assigns $\domi{p}$ to $\domi{u}$ if $(\domi{u}, \domi{p}) \in \mathit{UA}^{\mathfrak{I}} \circ \mathit{PA}^{\mathfrak{I}}$. Second, $\mathit{UA}^{\mathfrak{I}} \circ \mathit{PA}^{\mathfrak{I}} = \left(\varphi^{\mathit{RBAC}}_N\right)^{\mathfrak{I}}$.

These two observations describe the essence of a template formula. Template formulas define (i) how interpretation functions can represent policies of a policy language and (ii) how a policy (represented by an interpretation function) decides if a permission is assigned to a user.


\subsection{Language definition}
\label{sub:uni_lang_def}

Template formulas are built from the fragment $\Lang$ of quantifier-free first-order formulas.

For any signature, we require the organization to specify, for every relation and function symbol, whether it is \emph{rigid} or \emph{flexible}. Rigid symbols are those for which the organization already knows the interpretation function. Flexible symbols are those for which an interpretation function must be found using mining. For example, a function that maps each user to a unique identifier should be modeled with a rigid function symbol, as the organization is not interested in mining new identifiers. In contrast, when mining RBAC policies, one should define a flexible relation symbol to denote the assignment of roles to users, as the organization does not know this assignment and wants to compute it using mining.

\newcommand{\fixedinter}{\mathfrak{I}_r}
%

Let $\mathbb{K} = \left(\mathfrak{S}, \mathfrak{I}\right)$ be a structure. We can see $\mathfrak{I}$ as the union of two interpretation functions $\fixedinter$ and $\mathfrak{I}_f$, where $\fixedinter$ takes as input rigid symbols and $\mathfrak{I}_f$ takes as input flexible symbols. The goal of policy mining is to search for an interpretation function $\mathfrak{I}_f$ for the flexible symbols that minimizes an objective function. It does not need to search for $\mathfrak{S}$ as these function defines the carrier sets for sorts like $\Users$ and $\Perms$, which the organization already knows. It does not need to search for $\fixedinter$ either. Hence, we assume that $\mathfrak{S}$ and $\fixedinter$ are fixed and known to the organization. We also let $U = \Users^\mathfrak{S}$ and $P = \Perms^\mathfrak{S}$. We \underline{underline} rigid symbols and do not distinguish between $\underline{W}$ and $\underline{W}^{\mathfrak{I}_r}$.\looseness=-1


\subsection{Template formulas}
\label{sub:uni_template_formulas}

We now formalize template formulas. Let $\Gamma$ be a policy language and let $\mathit{Pol}(\Gamma)$ be the set of all policies that can be specified with $\Gamma$. Suppose also that the set of access requests is modeled with a set $\sortFont{T}^\mathfrak{S}_1 \times \ldots \times \sortFont{T}^\mathfrak{S}_\ell$, where $\sortFont{T}_1, \ldots, \sortFont{T}_\ell$ are sorts. For example, for RBAC and many other policy languages that we discuss here, the set of requests is $U \times P = \SortUsers^{\mathfrak{S}} \times \SortPermissions^{\mathfrak{S}}$. 

We assume that the semantics of $\Gamma$ defines a relation $\mathit{assign}_{\Gamma} \subseteq \mathit{Pol}\left(\Gamma\right) \times \sortFont{T}^\mathfrak{S}_1 \times \ldots \times \sortFont{T}^\mathfrak{S}_\ell$, such that for $\left(\domi{t}_1, \ldots, \domi{t}_\ell\right) \in \sortFont{T}^\mathfrak{S}_1 \times \ldots \times \sortFont{T}^\mathfrak{S}_\ell$ and $\pi \in \mathit{Pol}(\Gamma)$, $(\pi, \domi{t}_1, \ldots, \domi{t}_\ell) \in \mathit{assign}_{\Gamma}$ iff $\pi$ authorizes $\left(\domi{t}_1, \ldots, \domi{t}_\ell\right)$. For example, in RBAC, $\left(\pi, \domi{u}, \domi{p}\right) \in \mathit{assign}_{\mathit{RBAC}}$ iff $\pi$ assigns $\domi{p}$ to $\domi{u}$.

\begin{definition}
Let $\Gamma$ be a policy language and $\varphi(t_1, \ldots, t_\ell)$ be a formula in $\Lang$, where $t_1, \ldots, t_\ell$ are variables of sorts $\sortFont{T}_1, \ldots, \sortFont{T}_\ell$, respectively. The formula $\varphi(t_1, \ldots, t_\ell)$ is a \emph{template formula for $\Gamma$} if there is a function $\mathcal{M}$ such that (i) $\mathcal{M}$ is a surjective function from the set of interpretation functions to $\mathit{Pol}\left(\Gamma\right)$ and (ii) for any interpretation function $\mathfrak{I}$ and any request $(\domi{t}_1, \ldots, \domi{t}_\ell) \in \sortFont{T}^\mathfrak{S}_1 \times \ldots \times \sortFont{T}^\mathfrak{S}_\ell$, we have that $(\domi{t}_1, \ldots, \domi{t}_\ell) \in \varphi^\mathfrak{I}$ iff $\left(\mathcal{M}\left(\mathfrak{I}\right), \domi{t}_1, \ldots, \domi{t}_\ell\right) \in \mathit{assign}_{\Gamma}$.
\label{def:uni_template_formula}
\end{definition}

The mapping $\mathcal{M}$ provides a correspondence between interpretations and policies. $\mathcal{M}$ guarantees that each policy is represented by at least one interpretation. Therefore, we can search for an interpretation instead of a policy. For this reason, for the rest of the paper, we identify every formula in $\Lang$ with a \emph{policy language} and also refer to interpretation functions as \emph{policies}. 

%

\subsection{Formalizing the example}
\label{sub:example_rbac}

We now formally define the formula $\varphi^{\mathit{RBAC}}_N(\vari{u}, \vari{p}) \in \Lang$, introduced in Section~\ref{sub:motivating_example}, and show that it is a template formula for the language $\Gamma_N$ of all RBAC policies with at most $N$ roles.
Allowing a maximum number of roles is sufficient as one always can estimate a trivial bound on the maximum number of roles in an organization.


\paragraph{Template formula definition:} Consider a signature with a sort $\Roles$ denoting roles and with two (flexible) binary relation symbols $\mathit{UA}$ and $\mathit{PA}$ of types $\Users \times \Roles$ and $\Roles \times \Perms$, respectively. Define the formula
\begin{equation}
\varphi^{\mathit{RBAC}}_N(\vari{u}, \vari{p}) := \bigvee_{i \leq N} \left(\mathit{UA}(\vari{u},\ricons{r}_i) \land \mathit{PA}(\ricons{r}_i, \vari{p})\right).
\end{equation}
Here, $\ricons{r}_i$, for $1 \leq i \leq N$, is a rigid constant symbol of sort $\Roles$ (recall that we underline rigid symbols and denote constant symbols with \textsf{serif letters}). One could also use flexible constant symbols for roles, but, as we see later, the difficulty of implementing the policy miner increases with the number of flexible symbols.

\paragraph{Correctness proof:} We now define a mapping $\mathcal{M}$ that proves that $\varphi^{\mathit{RBAC}}_N$ is a template formula for $\Gamma_N$. For any interpretation function $\mathfrak{I}$, let $\mathcal{M}\left(\mathfrak{I}\right) = \left(U, \{\ricons{r}_1, \ldots, \ricons{r}_N\}, P, \mathit{UA}^{\mathfrak{I}}, \mathit{PA}^{\mathfrak{I}}\right).$
Observe that $\mathcal{M}\left(\mathfrak{I}\right)$ is an RBAC policy. Moreover, for $(\domi{u},\domi{p}) \in U \times P$, $(\domi{u},\domi{p}) \in \left(\varphi^{\mathit{RBAC}}_N\right)^\mathfrak{I}$ iff $(\domi{u},\domi{p}) \in \mathit{UA}^{\mathfrak{I}} \circ \mathit{PA}^{\mathfrak{I}}$ iff $(\pi_{\mathfrak{I}}, \domi{u}, \domi{p}) \in \mathit{assign}_{\mathit{RBAC}}$. It is also easy to prove that $\mathcal{M}$ is surjective on the set of all RBAC policies with at most $N$ roles. Hence, we can identify $\varphi^{\mathit{RBAC}}_N$ with the language of all RBAC policies with at most $N$ roles.\xqed{$\square$}

\begin{example}
To facilitate understanding $\mathcal{M}$, we show an RBAC policy $\pi$ and an interpretation function $\mathfrak{I}$ such that $\mathcal{M}\left(\mathfrak{I}\right) = \pi$.

Let $N = 2$ and assume that $U = \{\domi{Alice}, \domi{Bob}, \domi{Charlie}\}$ and that $P = \{\domi{c}, \domi{m}, \domi{d}\}$. The permissions in $P$ stand for ``create'', ``modify'', and ``delete''. Let $\domi{r}_1$ and $\domi{r}_2$ denote two roles. Consider the RBAC policy defined by Tables~\ref{tab:rbac_ua_1} and~\ref{tab:rbac_pa_1}.

\begin{table}[h!btp]%
\centering
\begin{minipage}{0.2\textwidth}%
\centering
\begin{tabular}{|c||c|c|}
\hline
& $\domi{r}_1$ & $\domi{r}_2$\\
\hline
\hline
$\domi{Alice}$ & $\times$ &\\
\hline
$\domi{Bob}$ & $\times$ &\\
\hline
$\domi{Charlie}$ & & $\times$ \\
\hline
\end{tabular}
\caption{User-assignment {relation}}
\label{tab:rbac_ua_1}
\end{minipage}%
\qquad
\begin{minipage}{0.2\textwidth}%
\centering
\begin{tabular}{|c||c|c|c|}
\hline
& $\domi{c}$ & $\domi{m}$ & $\domi{d}$\\
\hline
\hline
$\domi{r}_1$ & $\times$ & $\times$ &\\
\hline
$\domi{r}_2$ & & & $\times$ \\
\hline
\end{tabular}
\caption{Permission-assignment relation}
\label{tab:rbac_pa_1}
\end{minipage}%
%
\end{table}

We can define an interpretation function $\mathfrak{I}$ such that $\mathcal{M}\left(\mathfrak{I}\right)$ corresponds to the RBAC policy above. $\mathfrak{I}$ interprets the relation symbols $\mathit{UA}$ and $\mathit{PA}$ in the formula $\varphi^{\mathit{RBAC}}_N\left(\vari{u}, \vari{p}\right)$ as follows. For $\domi{u} \in U$ and $i \leq 2$, $\mathit{UA}^{\mathfrak{I}}\left(\domi{u},\ricons{r}_i\right)$ iff $(\domi{u},\domi{r}_i)$ is marked with an $\times$ in Table~\ref{tab:rbac_ua_1}. Similarly, for $\domi{p} \in P$ and $i \leq 2$, $\mathit{PA}^{\mathfrak{I}}\left(\ricons{r}_i, \domi{p}\right)$ iff $(\domi{r}_i, \domi{p})$ is marked with an $\times$ in Table~\ref{tab:rbac_pa_1}.
\label{ex:uni_rbac_template_fla}
\end{example}

%

\section{Probability distribution}
\label{sec:prob_dist}

\newcommand{\posprob}[2]{\mathbb{P}\left({#1} \mid #2\right)}

Let $\varphi \in \Lang$ be a policy language. We assume for the rest of the paper that $\varphi$ has two free variables $\vari{u}$ and $\vari{p}$ of sorts $\SortUsers$ and $\SortPermissions$, respectively. Our presentation extends in a straightforward way to more general cases.\looseness=-1

To design a policy miner using {\polymver} one must specify an \emph{objective function} $\lossfunc$. This is any function taking two inputs: a permission assignment $\Auth \subseteq U \times P$, which is a relation on $U$ and $P$ indicating what permissions each user has, and a policy $\mathfrak{I}$. An objective function outputs a value in $\mathbb{R}^+$ measuring how well $\varphi^{\mathfrak{I}}$ fits $\Auth$ and other policy requirements. The policy miner designer is in charge of specifying such a function. In Section~\ref{sec:examples}, we give other examples of objective functions.\looseness=-1

For illustration, consider the objective function

\begin{equation}
\lossfunc\left(\Auth, \mathfrak{I}; \varphi\right) = \sum_{(\domi{u},\domi{p}) \in U \times P} \left|\Auth(\domi{u},\domi{p}) - \varphi^{\mathfrak{I}}(\domi{u},\domi{p})\right|.
\label{eq:loss_func}
\end{equation}
Here, we identify the value 1 with the Boolean value \texttt{true} and the value 0 with the Boolean value \texttt{false}. Observe that $\lossfunc\left(\Auth, \mathfrak{I}; \varphi\right)$ is the size of the symmetric difference of the relations $\Auth$ and $\varphi^{\mathfrak{I}}$. Hence, lower values for $\lossfunc\left(\Auth, \mathfrak{I}; \varphi\right)$ are better.

The policy miners built with {\polymver} are \emph{probabilistic}. They receive as input a permission assignment $\Auth$ and compute a probability distribution over the set of all policies in a fixed policy language $\Gamma$. We use a Bayesian instead of a frequentist interpretation of probability. The probability of a policy $\mathfrak{I}$ does not measure how often $\mathfrak{I}$ is the outcome of an experiment, but rather how strong we believe $\mathfrak{I}$ to be the policy that decided the requests in $\Auth$.

We now define, given a permission assignment $\Auth$, a probability distribution $\posprob{\cdot}{\Auth}$ on policies. We first provide some intuition on $\posprob{\cdot}{\Auth}$'s definition and afterwards define it. For a permission assignment $\Auth$ and a policy $\mathfrak{I}$, we can see $\posprob{\cdot}{\Auth}$ as a quantity telling us how much we believe $\mathfrak{I}$ to be the organization's policy, given that $\Auth$ is the organization's permission assignment.\looseness=-1

Policy miners receive as input a permission assignment $\Auth$ and then search for a policy $\mathfrak{I}^*$ that maximizes $\posprob{\cdot}{\Auth}$. Here, $\posprob{\cdot}{\Auth}$ is defined as the ``most general'' distribution that fulfills the following requirement: \emph{for any policy $\mathfrak{I}$, the lower $\lossfunc\left(\Auth, \mathfrak{I}; \varphi\right)$ is, the more likely $\mathfrak{I}$ is.} Following the principle of maximum entropy~\cite{kesavan1990maximum}, the most general distribution that achieves this is
\begin{equation}
\posprob{\mathfrak{I}}{\Auth} = \frac{\exp\left(- \beta L(\Auth, \mathfrak{I}; \varphi)\right)}{\sum_{\mathfrak{I}'} \exp\left(- \beta L(\Auth, \mathfrak{I}'; \varphi)\right)},
\label{eq:prob_dist}
\end{equation}
where $\mathfrak{I}'$ ranges over all policies. Recall that we consider only finite structures. Hence, all our carrier sets are finite, so there are only finitely many policies.

The value $\beta > 0$ is a parameter that the policy miner varies during the search for the most likely policy. The search uses deterministic annealing, an optimization procedure inspired by simulated annealing~\cite{kirkpatrick1983optimization, rose1992vector, rose1998deterministic}. In our case, it initially sets $\beta$ to a very low value, so that all policies are almost equally likely. Then it gradually increases $\beta$ while, at the same time, searching for the most likely policy. As $\beta$ increases, those policies that minimize $L(\Auth, \cdot\,; \varphi)$ become more likely. In this way, deterministic can escape from low-quality local maxima of $\posprob{\cdot}{\Auth}$. When $\beta \to \infty$, only those policies that minimize $L(\Auth, \cdot\,; \varphi)$ have a positive probability and the search converges to a local maximum of $\posprob{\cdot}{\Auth}$.\looseness=-1
%

We now define the probability distribution given in Equation~\ref{eq:prob_dist}.

\begin{definition}
For a formula $\varphi \in \Lang$, we define the probability space $\mathfrak{P}_\varphi = \left(\Omega, 2^{\Omega}, \posprob{\cdot}{\Auth}\right)$ as follows. 
\begin{itemize}
\item $\Omega$ is the set of all interpretation functions (i.e., policies).
\item $2^\Omega$ is the set of all subsets of $\Omega$. Since all carrier sets of all sorts are finite (Definition~\ref{def:bkg_sigma_structure}), $\Omega$ and $2^\Omega$ are finite.

\item For $\mathfrak{I} \in \Omega$,
\vspace{-5pt}
\begin{equation}
\posprob{\mathfrak{I}}{\Auth} = \frac{\exp\left(-\beta L(\Auth, \mathfrak{I}; \varphi)\right)}{\sum_{\mathfrak{I}'} \exp\left(-\beta L(\Auth, \mathfrak{I}'; \varphi)\right)}.
\end{equation}
Finally, for $O \in 2^\Omega$, let $\posprob{O}{\Auth} = \sum_{\mathfrak{I} \in O}\posprob{\mathfrak{I}}{\Auth}$.
\end{itemize}
\label{def:uni_gibbs_prob}
\end{definition}

The following theorem proves that $\posprob{\cdot}{\Auth}$ is the ``most general'' distribution that fulfills the requirement mentioned above. More precisely, $\posprob{\cdot}{\Auth}$ is the maximum-entropy probability distribution 
where the probability of a policy $\mathfrak{I}$ increases whenever $\lossfunc(\Auth, \mathfrak{I}; \varphi)$ decreases~\cite{jaynes1957information,tikochinsky1984alternative}.\looseness=-1

\begin{theorem}
$\posprob{\cdot}{\Auth}$ is the distribution $P$ on policies that maximizes $P$'s entropy and is subject to the following constraints.
\begin{itemize}
\item $\sum_{\mathfrak{I}} P\left(\mathfrak{I}\right)\lossfunc(\Auth, \mathfrak{I}; \varphi) \leq \ell$, for some fixed bound $\ell$.
\item If $\beta > 0$, then $P\left(\mathfrak{I}\right) > P\left(\mathfrak{I}'\right)$, for any two policies $\mathfrak{I}$ and $\mathfrak{I}'$ with $\lossfunc(\Auth, \mathfrak{I}; \varphi) < \lossfunc(\Auth, \mathfrak{I}'; \varphi)$.
\end{itemize}
\label{thm:uni_max_entrop_dist}
\end{theorem}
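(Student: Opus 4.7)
The plan is to recognize this as the standard Gibbs/Boltzmann maximum-entropy characterization and derive it via Lagrange multipliers on the finite-dimensional simplex. Let me write the entropy $H(P) = -\sum_{\mathfrak{I}} P(\mathfrak{I}) \log P(\mathfrak{I})$, where the sum is over the (finite) set $\Omega$ of interpretation functions. The optimization problem is to maximize $H(P)$ subject to (a) $\sum_{\mathfrak{I}} P(\mathfrak{I}) = 1$, (b) $P(\mathfrak{I}) \geq 0$, (c) $\sum_{\mathfrak{I}} P(\mathfrak{I}) L(\Auth, \mathfrak{I}; \varphi) \leq \ell$, and (d) the strict monotonicity requirement whenever $\beta > 0$.

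First I would reduce to an equality-constrained problem by observing that the inequality constraint (c) can without loss be made tight: since entropy is strictly concave in $P$ and the feasible region is convex, at any optimum where (c) is strict we can perturb $P$ toward the uniform distribution to increase entropy while keeping the expected loss below $\ell$. So I may replace $\leq \ell$ by $= \ell$, choosing $\ell$ to be precisely the expected loss of the claimed optimum. I would also drop the nonnegativity constraint temporarily and check it post hoc, since the exponential form will make it automatic.

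Next I would form the Lagrangian
\begin{equation}
\mathcal{L}(P, \lambda, \beta) = -\sum_{\mathfrak{I}} P(\mathfrak{I}) \log P(\mathfrak{I}) - \lambda \Bigl(\sum_{\mathfrak{I}} P(\mathfrak{I}) - 1\Bigr) - \beta \Bigl(\sum_{\mathfrak{I}} P(\mathfrak{I}) L(\Auth, \mathfrak{I}; \varphi) - \ell\Bigr),
\end{equation}
and set $\partial \mathcal{L}/\partial P(\mathfrak{I}) = -\log P(\mathfrak{I}) - 1 - \lambda - \beta L(\Auth, \mathfrak{I}; \varphi) = 0$. Solving gives $P(\mathfrak{I}) \propto \exp(-\beta L(\Auth, \mathfrak{I}; \varphi))$; normalizing absorbs $\lambda$ and yields precisely the distribution in Definition~\ref{def:uni_gibbs_prob}. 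Nonnegativity is automatic because $\exp$ is positive. Because $H$ is strictly concave on the probability simplex and the constraints are affine, this stationary point is the unique global maximizer, which I would note explicitly to rule out other extrema.

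Finally I would connect the Lagrange multiplier $\beta$ to the monotonicity constraint (d): on the Gibbs form, $P(\mathfrak{I}) > P(\mathfrak{I}')$ whenever $L(\Auth, \mathfrak{I}; \varphi) < L(\Auth, \mathfrak{I}'; \varphi)$ iff $\beta > 0$, so requiring (d) is equivalent to requiring $\beta > 0$, which is the regime where the loss-bound constraint is active and the formula matches~\eqref{eq:prob_dist}. The main obstacle I anticipate is a clean justification that the inequality (c) can be taken as an equality without loss; the argument needs the strict concavity of $H$ and the fact that the uniform distribution is in the interior of the simplex, so a convex combination with the uniform strictly increases entropy whenever we have slack. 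Everything else is a routine calculation with Lagrange multipliers on a finite-dimensional concave program.
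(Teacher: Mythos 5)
Your proposal is correct and follows essentially the same route as the paper's proof: the paper also drops the monotonicity constraint, applies Lagrange multipliers to the entropy maximization under the expected-loss bound to obtain the Gibbs form $\posprob{\cdot}{\Auth}$, and then observes that the resulting distribution satisfies the second constraint when $\beta > 0$. You simply carry out in full the details the paper leaves implicit (tightening the inequality constraint, strict concavity for uniqueness, and automatic nonnegativity), all of which are sound.
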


\begin{proof} 
It suffices to drop the second constraint and use Lagrange multipliers to verify that $\posprob{\mathfrak{I}}{\Auth}$ is the optimal distribution. Observe that $\posprob{\mathfrak{I}}{\Auth}$ satisfies the second constraint.
\end{proof}

\newcommand{\factorpdf}[2]{q_{#1} \left(#2\right)}
\newcommand{\bestfpdfpram}[1]{\widehat{q}_{#1}}
\newcommand{\meanfield}[3]{\mathbb{E}_{\substack{#2}}{\left[#3\right]}}
\newcommand{\mnfld}[2]{E_{#1, #2}}
\newcommand{\atomics}[1]{\mathcal{A}\left(#1\right)}
\newcommand{\atomfacts}[2]{\mathfrak{F}\left(#1\right)}



\begin{example}
We illustrate the probability distribution defined above for the language of all RBAC policies with at most $N$ roles, defined in Section~\ref{sub:example_rbac}. For simplicity, we fix $N = 2$ in this example. Assume that $U = \{\domi{Alice}, \domi{Bob}, \domi{Charlie}\}$ and that $P = \{\domi{c}, \domi{m}, \domi{d}\}$, as defined in Example~\ref{ex:uni_rbac_template_fla}. Assume given a permission assignment $\mathit{Auth}$ and two policies $\mathfrak{I}_1$ and $\mathfrak{I}_2$ as shown in Tables~\ref{tab:auth_1}--\ref{tab:phi_i_2}. 

Recall that $\left(\varphi^{\mathit{RBAC}}_N\right)^{\mathfrak{I}_1}$ and $\left(\varphi^{\mathit{RBAC}}_N\right)^{\mathfrak{I}_2}$ are the permission assignments induced by $\mathfrak{I}_1$ and $\mathfrak{I}_2$, respectively. Observe that $\left(\varphi^{\mathit{RBAC}}_N\right)^{\mathfrak{I}_1}$ and $\Auth$ differ by one entry, whereas $\left(\varphi^{\mathit{RBAC}}_N\right)^{\mathfrak{I}_2}$ and $\Auth$ differ by two. Hence, 
%
$\lossfunc\left(\Auth, \mathfrak{I}_1; \varphi^{\mathit{RBAC}}_N\right) = 1 < 2 = \lossfunc\left(\Auth, \mathfrak{I}_2; \varphi^{\mathit{RBAC}}_N\right).$
%
As a result, for any $\beta > 0$, we get that
%
$\posprob{\mathfrak{I}_1}{\Auth} = \frac{\exp\left(-\beta\right)}{Z} > \frac{\exp\left(-2\beta\right)}{Z} = \posprob{\mathfrak{I}_2}{\Auth},$
%
where $Z = \sum_{\mathfrak{I}'} \exp\left(-\beta L(\Auth, \mathfrak{I}'; \varphi)\right)$.
\label{ex:uni_rbac_fla}
\end{example}

\newcommand{\colwidth}{0.2\textwidth}
\newcommand{\readaction}{$\domi{c}$}
\newcommand{\writeaction}{$\domi{m}$}
\newcommand{\createaction}{$\domi{d}$}

\begin{table}[h!btp]%
\centering
\begin{minipage}{\colwidth}%
\centering
\begin{tabular}{|c||c|c|c|}
\hline
& \readaction & \writeaction & \createaction\\
\hline
\hline
$\domi{Alice}$ & $\times$ & $\times$ &\\
\hline
$\domi{Bob}$ & $\times$ & $\times$ &\\
\hline
$\domi{Charlie}$ & $\times$ & & $\times$ \\
\hline
\end{tabular}
\caption{$\Auth$}
\label{tab:auth_1}
\end{minipage}%
\end{table}
\begin{table}[h!btp]%
\centering
\begin{minipage}{\colwidth}%
\centering
\begin{tabular}{|c||c|c|}
\hline
& $\ricons{r}_1^{\mathfrak{I}_1}$ & $\ricons{r}_2^{\mathfrak{I}_1}$\\
\hline
\hline
$\domi{Alice}$ & $\times$ &\\
\hline
$\domi{Bob}$ & $\times$ &\\
\hline
$\domi{Charlie}$ & & $\times$ \\
\hline
\end{tabular}
\caption{$\mathit{UA}^{\mathfrak{I}_1}$}
\label{tab:ua_i_1}
\end{minipage}%
\begin{minipage}{\colwidth}%
\centering
\begin{tabular}{|c||c|c|}
\hline
& $\ricons{r}_1^{\mathfrak{I}_2}$ & $\ricons{r}_2^{\mathfrak{I}_2}$\\
\hline
\hline
$\domi{Alice}$ & $\times$ &\\
\hline
$\domi{Bob}$ & $\times$ &\\
\hline
$\domi{Charlie}$ & & $\times$ \\
\hline
\end{tabular}
\caption{$\mathit{UA}^{\mathfrak{I}_2}$}
\label{tab:ua_i_2}
\end{minipage}%
%
\end{table}
\begin{table}[h!btp]%
\centering
\begin{minipage}{\colwidth}%
\centering
\begin{tabular}{|c||c|c|c|}
\hline
& \readaction & \writeaction & \createaction\\
\hline
\hline
$\ricons{r}_1^{\mathfrak{I}_1}$ & $\times$ & $\times$ &\\
\hline
$\ricons{r}_2^{\mathfrak{I}_1}$ & & & $\times$ \\
\hline
\end{tabular}
\caption{$\mathit{PA}^{\mathfrak{I}_1}$}
\label{tab:pa_i_1}
\end{minipage}%
\begin{minipage}{\colwidth}%
\centering
\begin{tabular}{|c||c|c|c|}
\hline
& \readaction & \writeaction & \createaction\\
\hline
\hline
$\ricons{r}_1^{\mathfrak{I}_2}$ & & $\times$ &\\
\hline
$\ricons{r}_2^{\mathfrak{I}_2}$ & $\times$ & & $\times$ \\
\hline
\end{tabular}
\caption{$\mathit{PA}^{\mathfrak{I}_2}$}
\label{tab:pa_i_2}
\end{minipage}
%
\end{table}
\begin{table}[h!btp]
\centering
\begin{minipage}{\colwidth}%
\centering
\begin{tabular}{|c||c|c|c|}
\hline
& \readaction & \writeaction & \createaction\\
\hline
\hline
$\domi{Alice}$ & $\times$ & $\times$ &\\
\hline
$\domi{Bob}$ & $\times$ & $\times$ &\\
\hline
$\domi{Charlie}$ & & & $\times$\\
\hline
\end{tabular}
\caption{$\left(\varphi^{\mathit{RBAC}}_N\right)^{\mathfrak{I}_1}$}
\label{tab:phi_i_1}
\end{minipage}%
\qquad
\begin{minipage}{\colwidth}%
\centering
\begin{tabular}{|c||c|c|c|}
\hline
& \readaction & \writeaction & \createaction\\
\hline
\hline
$\domi{Alice}$ & & $\times$ &\\
\hline
$\domi{Bob}$ & & $\times$ &\\
\hline
$\domi{Charlie}$ & $\times$ & & $\times$\\
\hline
\end{tabular}
\caption{$\left(\varphi^{\mathit{RBAC}}_N\right)^{\mathfrak{I}_2}$}
\label{tab:phi_i_2}
\end{minipage}
\end{table}


\section{Applying mean-field approximation}
\label{sec:build_var_infer}

The policy miner that is built with {\polymver} is an algorithm that receives as input a permission assignment $\Auth$ and computes a policy $\mathfrak{I}$ that approximately maximizes $\posprob{\cdot}{\Auth}$, while letting $\beta \to \infty$. Since computing $\posprob{\cdot}{\Auth}$ is intractable, we use \emph{mean-field approximation}~\cite{bishop2006pattern}, a technique that defines an iterative procedure to approximate $\posprob{\cdot}{\Auth}$ with a distribution $q\left(\cdot\right)$. It turns out that computing and maximizing $q(\cdot)$ is much easier than computing and maximizing $\posprob{\cdot}{\Auth}$. The policy miner is then an algorithm implementing the computation of $q$ and its maximization.

We next introduce some random variables that help to measure the probability that a policy authorizes a particular request $(\domi{u},\domi{p}) \in U \times P$ (Section~\ref{sub:aux_defs}). Afterwards, we present the approximating distribution $q$ (Section~\ref{sub:approx_conditional}).

\subsection{Random variables}
\label{sub:aux_defs}

\newcommand{\subflarv}[4]{\lceil #2^{\mathfrak{X}\langle #1\rangle}\left(#3, #4\right)\rceil}
\newcommand{\termrv}[2]{\lceil #2^{\mathfrak{X}\langle #1\rangle}\rceil}
\newcommand{\evtermrv}[4]{\lceil #2^{\mathfrak{X}\langle #1\rangle}\left(#3, #4\right)\rceil}
\newcommand{\rangerandfact}[1]{\mathit{Range}\left(#1\right)}

Recall that the sample space $\Omega$ of the distribution $\posprob{\cdot}{\Auth}$ from Definition~\ref{def:uni_gibbs_prob} is the set of all policies $\mathfrak{I}$. Let $\mathfrak{X}$ be a random variable mapping $\mathfrak{I} \in \Omega$ to $\mathfrak{I}$. Although $\mathfrak{X}$'s definition is trivial, it will help us to understand other random variables that we introduce later. We can understand $\mathfrak{X}$ as an ``unknown policy'' and, for a policy $\mathfrak{I}$, the probability statement $\posprob{\mathfrak{X} = \mathfrak{I}}{\Auth}$ measures how much we believe that $\mathfrak{X}$ is actually $\mathfrak{I}$, given that the organization's permission assignment is $\Auth$. By definition, $\posprob{\mathfrak{X} = \mathfrak{I}}{\Auth} = \posprob{\mathfrak{I}}{\Auth}$.

%
%

\begin{definition}
Let $\varphi \in \Lang$ and let $W$ be a flexible relation symbol occurring in $\varphi$ of type $\sortFont{S}_1 \times \ldots \times \sortFont{S}_k$ and let $f$ be a flexible function symbol occurring in $\varphi$ of type $\sortFont{S}_1 \times \ldots \times \sortFont{S}_{k} \to \sortFont{S}$. Let $(\domi{a}_1, \ldots, \domi{a}_k) \in \sortFont{S}^{\mathfrak{S}}_1 \times \ldots \times \sortFont{S}^{\mathfrak{S}}_k$. Recall that $\mathfrak{S}$ maps sorts to carrier sets. We define the random variable $W^{\mathfrak{X}}\left(\domi{a}_1, \ldots, \domi{a}_{k}\right) : \Omega \to \{0, 1\}$ that maps $(\Auth, \mathfrak{I}) \in \Omega$ to $W^{\mathfrak{I}}\left(\domi{a}_1, \ldots, \domi{a}_{k}\right) \in \{0, 1\}$. Similarly, we define the random variable $f^{\mathfrak{X}}\left(\domi{a}_1, \ldots, \domi{a}_{k}\right) : \Omega \to \sortFont{S}^{\mathfrak{S}}$ that maps $(\Auth, \mathfrak{I}) \in \Omega$ to $f^{\mathfrak{I}}\left(\domi{a}_1, \ldots, \domi{a}_{k}\right) \in \sortFont{S}^{\mathfrak{S}}$. We call these random variables \emph{random facts of $\varphi$}.
\label{def:random_fact}
\end{definition}

\begin{example}
Let us examine some random facts of the formula $\varphi^{\mathit{RBAC}}_N$ from Example~\ref{ex:uni_rbac_fla}. One such random fact is $\mathit{UA}^{\mathfrak{X}}\left(\domi{Alice}, \ricons{r}_1\right)$, which can take the values 0 and 1, so $\mathit{UA}^{\mathfrak{X}}\left(\domi{Alice}, \ricons{r}_1\right)$ is a Bernoulli random variable whose probability distribution is defined by
\begin{align}
\nonumber&\posprob{\mathit{UA}^{\mathfrak{X}}\left(\domi{Alice}, \ricons{r}_1\right) = 1}{\Auth} =\\
&\qquad \posprob{\left\{ \mathfrak{I} \in \Omega \mid \mathit{UA}^{\mathfrak{I}}\left(\domi{Alice}, \ricons{r}_1\right) = 1\right\}}{\Auth}.
\end{align}
More generally, the set of random facts for $\varphi^{\mathit{RBAC}}_N$ is

\begin{align}
\nonumber
\atomfacts{\varphi^{\mathit{RBAC}}_N}{\mathfrak{I}} = &\left\{\mathit{UA}^{\mathfrak{X}}\left(\domi{u}, \ricons{r}_i\right) \mid \domi{u} \in U, i \leq N\right\} \cup \\
&\left\{\mathit{PA}^{\mathfrak{X}}\left(\ricons{r}_i, \domi{p}\right) \mid \domi{p} \in P, i \leq N\right\}.
\end{align}

If we set $N = 2$ and replace each random fact with a Boolean value, as indicated by Tables~\ref{tab:ua_i_1} and~\ref{tab:ua_i_2}, then we get an RBAC policy.\looseness=-1

Just like a statement of the form $\posprob{\mathfrak{X} = \mathfrak{I}}{\Auth}$ quantifies how much we believe that $\mathfrak{X} = \mathfrak{I}$ for a given $\Auth$, a statement of the form $\posprob{\mathit{UA}^{\mathfrak{X}}\left(\domi{Alice}, \ricons{r}_1\right) = 1}{\Auth}$ quantifies how much we believe that role $\ricons{r}_1$ is assigned to $\domi{Alice}$ for a given $\Auth$.
\end{example}

%
%

\begin{observation}
Since we assume carrier sets to be finite, a random fact always has a discrete distribution. In particular, random facts built from flexible relation symbols have Bernoulli distributions as they can only take Boolean values.\hfill$\square$
\label{obs:random_fact_disc_distr}
\end{observation}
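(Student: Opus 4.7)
The plan is to derive both claims directly from the finiteness of the carrier sets, without needing any nontrivial machinery. First, I would recall from Definition~\ref{def:bkg_sigma_structure} that for every sort $\sortFont{S}$ the carrier set $\sortFont{S}^{\mathfrak{S}}$ is finite, and hence (as already observed in Definition~\ref{def:uni_gibbs_prob}) the sample space $\Omega$ of interpretation functions is itself finite. By Definition~\ref{def:random_fact}, every random fact is a random variable $X : \Omega \to C$, where the codomain $C$ is either $\{0,1\}$ (when $X$ comes from a flexible relation symbol) or a carrier set $\sortFont{S}^{\mathfrak{S}}$ (when $X$ comes from a flexible function symbol). In both cases $C$ is finite.

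Next, I would invoke the standard fact that a random variable whose codomain is finite has a discrete distribution: define the probability mass function by $\prob(X = \domi{v} \mid \Auth) = \posprob{X^{-1}(\domi{v})}{\Auth}$ for each $\domi{v} \in C$, which is well-defined because $X^{-1}(\domi{v}) \in 2^{\Omega}$ and $\posprob{\cdot}{\Auth}$ is already defined on $2^{\Omega}$ in Definition~\ref{def:uni_gibbs_prob}. Since $\{X^{-1}(\domi{v})\}_{\domi{v} \in C}$ partitions $\Omega$, the masses sum to one, and the distribution is discrete with finite support. This establishes the first sentence of the observation.

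For the second sentence, I would specialize to the case $X = W^{\mathfrak{X}}(\domi{a}_1,\ldots,\domi{a}_k)$ for a flexible relation symbol $W$, whose codomain is $\{0,1\}$ by construction. Then $X$ is supported on at most two values, so by definition its distribution is Bernoulli with parameter $p = \prob(X = 1 \mid \Auth)$. I do not foresee a genuine obstacle in this argument; the observation is essentially a reformulation of the finiteness assumption, and the only point that merits being made explicit is the measurability remark that $X^{-1}(\domi{v}) \in 2^{\Omega}$ for every $\domi{v}$, which is automatic because the $\sigma$-algebra in $\mathfrak{P}_{\varphi}$ is the full power set of $\Omega$.
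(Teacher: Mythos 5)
Your argument is correct and is exactly the reasoning the paper leaves implicit: the observation is stated without proof because it follows immediately from Definitions~\ref{def:bkg_sigma_structure}, \ref{def:uni_gibbs_prob}, and~\ref{def:random_fact} in just the way you describe (finite codomain plus power-set $\sigma$-algebra gives a discrete distribution, and a $\{0,1\}$-valued codomain gives a Bernoulli one). Your only addition is making the measurability remark explicit, which is harmless and, as you note, automatic.
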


We usually denote random facts with Fraktur letters $\mathfrak{f}, \mathfrak{g}, \ldots$
For a random fact $\mathfrak{f}$ of the form $W^{\mathfrak{X}}\left(\domi{a}_1, \ldots, \domi{a}_k\right)$, we denote by $\mathfrak{f}^{\mathfrak{I}}$ the Boolean value $W^{\mathfrak{I}}\left(\domi{a}_1, \ldots, \domi{a}_k\right)$. Similarly, when $\mathfrak{f}$ is of the form $f^{\mathfrak{X}}\left(\domi{a}_1, \ldots, \domi{a}_k\right)$, we denote by $\mathfrak{f}^{\mathfrak{I}}$ the value $f^{\mathfrak{I}}\left(\domi{a}_1, \ldots, \domi{a}_k\right)$. Finally, we denote $\mathfrak{f}$'s range with $\rangerandfact{\mathfrak{f}}$.\looseness=-1

For a policy language $\varphi \in \Lang$, we denote by $\atomfacts{\varphi}{\mathfrak{I}}$ the set of all random facts of $\varphi$. Recall that we assume all our carrier sets to be finite, so $\atomfacts{\varphi}{\mathfrak{I}}$ is finite.

Observe that, for any formula $\varphi \in \Lang$, replacing each random fact $\mathfrak{f}$ in $\atomfacts{\varphi}{\mathfrak{I}}$ with a value in $\rangerandfact{\mathfrak{f}}$ yields a policy. Hence, a policy miner, instead of searching for a policy $\mathfrak{I}$, it just searches for adequate values for all random facts in $\atomfacts{\varphi}{}$. We formalize this in Lemma~\ref{lem:cond_prob_is_vec_prob}, whose proof is in Appendix~\ref{sec:proof_cond_prob_is_vec_prob}.

\begin{lemma} For a policy language $\varphi \in \Lang$,
\begin{equation}
\posprob{\mathfrak{I}}{\Auth} = \posprob{\left(\mathfrak{f}^{\mathfrak{X}}\right)_{\mathfrak{f} \in \atomfacts{\varphi}{}} = \left(\mathfrak{f}^{\mathfrak{I}}\right)_{\mathfrak{f} \in \atomfacts{\varphi}{}}}{\Auth}.
\end{equation}
\label{lem:cond_prob_is_vec_prob}
\end{lemma}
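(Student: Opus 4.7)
The plan is to show that specifying the values $\mathfrak{f}^{\mathfrak{I}}$ for every random fact $\mathfrak{f} \in \atomfacts{\varphi}{}$ uniquely determines $\mathfrak{I}$ among the elements of $\Omega$, so the event on the right-hand side is precisely the singleton event $\{\mathfrak{I}\}$, and the equality follows directly from Definition~\ref{def:uni_gibbs_prob}.

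First, I would unpack the right-hand side. By definition of the joint event,
\begin{equation}
\left\{\left(\mathfrak{f}^{\mathfrak{X}}\right)_{\mathfrak{f} \in \atomfacts{\varphi}{}} = \left(\mathfrak{f}^{\mathfrak{I}}\right)_{\mathfrak{f} \in \atomfacts{\varphi}{}}\right\} = \bigcap_{\mathfrak{f} \in \atomfacts{\varphi}{}} \left\{\mathfrak{J} \in \Omega \,\middle|\, \mathfrak{f}^{\mathfrak{J}} = \mathfrak{f}^{\mathfrak{I}}\right\}.
\end{equation}
Thus the claim reduces to showing this intersection is exactly $\{\mathfrak{I}\}$. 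One inclusion is trivial: $\mathfrak{I}$ itself lies in every set in the intersection. For the other inclusion, I would argue that if $\mathfrak{J} \in \Omega$ satisfies $\mathfrak{f}^{\mathfrak{J}} = \mathfrak{f}^{\mathfrak{I}}$ for every $\mathfrak{f} \in \atomfacts{\varphi}{}$, then $\mathfrak{J} = \mathfrak{I}$.

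To see this, recall that by the convention established in Section~\ref{sub:uni_lang_def}, the interpretation of every rigid symbol in $\Sigma$ is fixed by $\mathfrak{I}_r$ and shared by every policy in $\Omega$, so $\mathfrak{I}$ and $\mathfrak{J}$ already agree on all rigid symbols. For the flexible symbols, Definition~\ref{def:random_fact} produces a random fact $W^{\mathfrak{X}}(\domi{a}_1,\ldots,\domi{a}_k)$ for every flexible relation symbol $W$ and every tuple $(\domi{a}_1,\ldots,\domi{a}_k)$ in the appropriate product of carrier sets, and analogously $f^{\mathfrak{X}}(\domi{a}_1,\ldots,\domi{a}_k)$ for every flexible function symbol $f$ and every argument tuple. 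Hence the hypothesis $\mathfrak{f}^{\mathfrak{J}} = \mathfrak{f}^{\mathfrak{I}}$ for all $\mathfrak{f} \in \atomfacts{\varphi}{}$ forces $W^{\mathfrak{J}} = W^{\mathfrak{I}}$ on all argument tuples for every flexible $W$, and similarly $f^{\mathfrak{J}} = f^{\mathfrak{I}}$ for every flexible $f$. Together with the agreement on rigid symbols, this gives $\mathfrak{J} = \mathfrak{I}$.

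Combining, the set on the right-hand side equals $\{\mathfrak{I}\}$, and therefore
\begin{equation}
\posprob{\left(\mathfrak{f}^{\mathfrak{X}}\right)_{\mathfrak{f} \in \atomfacts{\varphi}{}} = \left(\mathfrak{f}^{\mathfrak{I}}\right)_{\mathfrak{f} \in \atomfacts{\varphi}{}}}{\Auth} = \posprob{\{\mathfrak{I}\}}{\Auth} = \posprob{\mathfrak{I}}{\Auth},
\end{equation}
where the last equality is the extension of $\posprob{\cdot}{\Auth}$ from singletons to sets given in Definition~\ref{def:uni_gibbs_prob}. The only subtlety, and the step I would write out most carefully, is confirming that $\atomfacts{\varphi}{}$ enumerates a random fact for \emph{every} flexible symbol together with \emph{every} admissible argument tuple, so that the joint event really does pin down $\mathfrak{I}_f$ entirely rather than only some fragment of it; this is exactly what Definition~\ref{def:random_fact} guarantees, given that all carrier sets are finite.
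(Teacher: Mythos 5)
Your proof is correct and follows essentially the same route as the paper: both reduce the claim to showing that the event $\left\{\mathfrak{J} \in \Omega \mid \left(\mathfrak{f}^{\mathfrak{J}}\right)_{\mathfrak{f} \in \atomfacts{\varphi}{}} = \left(\mathfrak{f}^{\mathfrak{I}}\right)_{\mathfrak{f} \in \atomfacts{\varphi}{}}\right\}$ is the singleton $\left\{\mathfrak{I}\right\}$, because the values of all random facts completely determine the interpretation of the flexible symbols while the rigid part is fixed. Your version actually spells out the ``completely determines $\mathfrak{I}$'' step more explicitly than the paper does.
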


We denote by $h\left(\cdot\right)$ the function $\posprob{\left(\mathfrak{f}^{\mathfrak{X}}\right)_{\mathfrak{f} \in \atomfacts{\varphi}{}} = \cdot}{\Auth}$. To avoid cluttered notation, we write $h\left(\mathfrak{I}\right)$ instead of $h\left(\left(\mathfrak{f}^{\mathfrak{I}}\right)_{\mathfrak{f} \in \atomfacts{\varphi}{}}\right)$.

We conclude this section by defining some other useful random variables. Recall that $\mathfrak{X}$ is the random variable that maps $\mathfrak{I} \in \Omega$ to $\mathfrak{I}$.\looseness=-1

\begin{definition}
For $(\domi{u},\domi{p}) \in U \times P$, $\varphi \in \Lang$, we define the random variable $\varphi^{\mathfrak{X}}\left(\domi{u},\domi{p}\right) : \Omega \to \{0, 1\}$ as the function mapping $(\Auth, \mathfrak{I})$ to $\varphi^{\mathfrak{I}}\left(\domi{u},\domi{p}\right)$.
\label{def:random_formula}
\end{definition}

\begin{definition}
For $\varphi \in \Lang$, $\Auth \subseteq U \times P$, we define the following random variable:
\begin{equation}
\lossfunc\left(\Auth, \mathfrak{X}; \varphi\right) := \sum_{(\domi{u},\domi{p}) \in U \times P} \left|\Auth(\domi{u},\domi{p}) - \varphi^{\mathfrak{X}}(\domi{u},\domi{p})\right|.
\label{eq:rv_loss_func}
\end{equation}
\end{definition}

\subsection{Approximating the distribution}
\label{sub:approx_conditional}

A mean-field approximation of the probability distribution $h$ is a distribution $q$ defined by
\begin{equation}
q\left(\mathfrak{I}\right) := \prod_{\mathfrak{f} \in \atomfacts{\varphi}{\mathfrak{I}}} \factorpdf{\mathfrak{f}}{\mathfrak{f}^\mathfrak{I}},
\label{eq:post_approx_dist}
\end{equation}
%
where $q_{\mathfrak{f}}: \rangerandfact{\mathfrak{f}} \to [0, 1]$ is a probability mass function for $\mathfrak{f}$. Hence, $\sum_{\domi{b} \in \rangerandfact{\mathfrak{f}}} \factorpdf{\mathfrak{f}}{\domi{b}} = 1$. For $\domi{b} \in \rangerandfact{\mathfrak{f}}$, the value $\factorpdf{\mathfrak{f}}{\domi{b}}$ denotes the probability, according to $q_{\mathfrak{f}}$, that $\mathfrak{f} = \domi{b}$.\looseness=-1


Observe that $q\left(\mathfrak{I}\right)$'s factorization implies that the set of random facts is mutually independent. This is not true in general, as $h$ may not be necessarily factorized like $q$. This independence assumption is imposed by mean-field theory to facilitate computations. Our experimental results in Section~\ref{sec:experiments} show that, despite this approximation, we still mine high quality policies.\looseness=-1

According to mean-field theory, the distributions $\left\{\bestfpdfpram{\mathfrak{f}} \mid \mathfrak{f} \in \atomfacts{\varphi}{\mathfrak{I}}\right\}$ that make $q$ best approximate $h$ are given by
\begin{equation}
	\bestfpdfpram{\mathfrak{f}}\left(\domi{b}\right) 	
	= \frac{\exp\left(
		-\beta \meanfield{V \neq \mathfrak{f}}{\mathfrak{f} \mapsto \domi{b}}
			{\lossfunc\left(\Auth, \mathfrak{X}; \varphi\right)}
		\right)}
		{\sum_{\domi{b}' \in \rangerandfact{\mathfrak{f}}}\exp\left(
		-\beta \meanfield{V \neq \mathfrak{f}}{\mathfrak{f} \mapsto \domi{b}'} 
			{\lossfunc\left(\Auth, \mathfrak{X}; \varphi\right)}
		\right)},
\label{eq:updatebestfpdf}
\end{equation}
where $\domi{b} \in \rangerandfact{\mathfrak{f}}$ and $\meanfield{V \neq \mathfrak{f}}{\mathfrak{f} \mapsto \domi{b}} {\lossfunc\left(\Auth, \mathfrak{X}; \varphi\right)}$ is the expectation of $\lossfunc\left(\Auth, \mathfrak{X}; \varphi\right)$ after replacing every occurrence of the random fact $\mathfrak{f}$ with $\domi{b}$~\cite{bishop2006pattern}. This expectation is computed using the distribution $q$. Therefore,
\begin{equation}
\begin{array}{l}
\meanfield{V \neq \mathfrak{f}}{\mathfrak{f} \mapsto \domi{b}} {\lossfunc\left(\Auth, \mathfrak{X}; \varphi\right)}\\[5pt] 
\qquad  = \displaystyle\sum_{\mathfrak{I}} \displaystyle\prod_{\substack{\mathfrak{g} \in \atomfacts{\varphi}{\mathfrak{I}}\\\mathfrak{g} \neq \mathfrak{f}}} \bestfpdfpram{\mathfrak{g}}\left(\mathfrak{g}^\mathfrak{I}\right) \left(\lossfunc\left(\Auth, \mathfrak{I}; \varphi\right)\left\{\mathfrak{f} \mapsto \domi{b}\right\}\right).
\end{array}
\label{eq:mf_approx}
\end{equation}
%
%
%
Here, $\lossfunc\left(\Auth, \mathfrak{I}; \varphi\right)\left\{\mathfrak{f} \mapsto \domi{b}\right\}$ is obtained from $\lossfunc\left(\Auth, \mathfrak{I}; \varphi\right)$ by replacing $\mathfrak{f}$ with $\domi{b}$. 

Using Lemma~\ref{lem:cond_prob_is_vec_prob} and the distribution $q$, we can approximate $\arg\max_{\mathfrak{I}} \prob\left(\mathfrak{I} \mid \Auth\right)$ by maximizing $q$.

\begin{observation}
%
$\max_{\mathfrak{I}}\posprob{\mathfrak{I}}{\Auth} = \max_{\mathfrak{I}}h\left(\mathfrak{I}\right) \approx \max_{\mathfrak{I}}q\left(\mathfrak{I}\right).$
\end{observation}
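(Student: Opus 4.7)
The plan is to handle the two parts of the displayed relation separately: first the genuine equality $\max_{\mathfrak{I}}\posprob{\mathfrak{I}}{\Auth} = \max_{\mathfrak{I}}h(\mathfrak{I})$, which is a strict identity, and then the approximation $\max_{\mathfrak{I}}h(\mathfrak{I}) \approx \max_{\mathfrak{I}}q(\mathfrak{I})$, which is not an equality but follows from how $q$ was constructed.

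For the first equality, I would simply invoke Lemma~\ref{lem:cond_prob_is_vec_prob}, which states that for every policy $\mathfrak{I}$, $\posprob{\mathfrak{I}}{\Auth} = \posprob{(\mathfrak{f}^{\mathfrak{X}})_{\mathfrak{f}\in\atomfacts{\varphi}{}} = (\mathfrak{f}^{\mathfrak{I}})_{\mathfrak{f}\in\atomfacts{\varphi}{}}}{\Auth}$, and by definition the right-hand side is exactly $h(\mathfrak{I})$. Taking maxima of two pointwise-equal functions yields the same value, so the first equality is immediate.

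For the approximation, I would rely on the fact that $q$ was defined in Section~\ref{sub:approx_conditional} as a mean-field approximation of $h$, with factors $\bestfpdfpram{\mathfrak{f}}$ chosen (via Equation~\ref{eq:updatebestfpdf}) to minimize the Kullback--Leibler divergence $\mathrm{KL}(q\,\|\,h)$ over all factorized distributions of the form given in Equation~\ref{eq:post_approx_dist}. Since a small KL divergence forces $q(\mathfrak{I}) \approx h(\mathfrak{I})$ on the bulk of the probability mass, and in particular near the modes of $h$, maximizing $q$ over $\mathfrak{I}$ yields a policy whose $h$-value is close to the true maximum. I would cite the standard mean-field literature~\cite{bishop2006pattern, blei2017variational} rather than attempt a quantitative bound, since the statement is only claimed up to $\approx$ and is justified empirically in Section~\ref{sec:experiments}.

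The main obstacle, and the reason this is only labeled an observation and not a theorem, is that the second part admits no general quantitative guarantee: mean-field approximations can concentrate on a wrong mode when $h$ is highly multimodal. I would therefore frame the second step as ``by construction of $q$ as the mean-field approximation of $h$'' rather than attempting a rigorous inequality, and note that deterministic annealing (varying $\beta$ from small to large) is precisely the device used in the rest of the paper to mitigate this mode-collapse issue and make the approximation reliable in practice.
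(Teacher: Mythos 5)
Your proposal is correct and matches the paper's own (essentially implicit) justification: the first equality is immediate from Lemma~\ref{lem:cond_prob_is_vec_prob} together with the definition of $h$, and the second relation is only an approximation inherited from the mean-field construction of $q$ in Section~\ref{sub:approx_conditional}, with no quantitative guarantee claimed. Your additional remarks on the possible mode mismatch of mean-field approximations and the mitigating role of deterministic annealing are consistent with, and slightly more explicit than, what the paper states.
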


The desired miner is then an algorithm that computes $q$, while letting $\beta \to \infty$, and then computes the policy $\mathfrak{I}^*$ that maximizes $q$.

\section{Building the policy miner}
\label{sec:polymer}

To compute $q$, as given by Equation~\ref{eq:post_approx_dist}, the desired policy miner could use Equation~\ref{eq:updatebestfpdf} to compute $\bestfpdfpram{\mathfrak{f}}$, for each $\mathfrak{f} \in \atomfacts{\varphi}{\mathfrak{I}}$. Observe, however, that Equation~\ref{eq:updatebestfpdf} is recursive, since the computation of the expectations on the right hand side requires $\left\{\bestfpdfpram{\mathfrak{f}} \mid \mathfrak{f} \in \atomfacts{\varphi}{\mathfrak{I}}\right\}$, as indicated by Equation~\ref{eq:mf_approx}. This recursive dependency is handled by iteratively computing, for each $\mathfrak{f} \in \atomfacts{\varphi}{\mathfrak{I}}$, a function $\tilde{q_\mathfrak{f}}$ that approximates $\bestfpdfpram{\mathfrak{f}}$~\cite{bishop2006pattern}. We illustrate this in the step~\ref{step:pram_update} below.

Algorithm~\ref{algo:varinfer} gives the pseudocode for computing and maximizing $q$, which is the essence of the desired policy miner. We give next an overview.

\begin{enumerate}
\item \textbf{Initialization (lines~\ref{state:init}--\ref{state:init_setup}). }Each distribution $\tilde{q_{\mathfrak{f}}}$ is randomly set to an arbitrary function such that $\sum_{\domi{b}} \tilde{q_{\mathfrak{f}}}(\domi{b}) = 1$.
\item \textbf{Update loop (lines~\ref{state:for_start}--\ref{state:for_end}). }We perform a sequence of iterations that update $\left\{\tilde{q_\mathfrak{f}} \mid \mathfrak{f} \in \atomfacts{\varphi}{\mathfrak{I}}\right\}$ and $\beta$. The number $T$ of iterations is fixed before execution.
\begin{enumerate}
\item\label{step:pram_update} \textbf{Parameter update (line~\ref{state:for_fact}--\ref{state:end_for_fact}). }At each iteration, we compute a random ordering $\mathit{RS}(\atomfacts{\varphi}{\mathfrak{I}})$ of all the random facts. Then, for each $\mathfrak{f}$ in that order, $\tilde{q_{\mathfrak{f}}}$ is updated to the right-hand side of Equation~\ref{eq:updatebestfpdf} (lines~\ref{state:second_for_b}--\ref{state:update_pram}), 
but instead of using $\left\{\bestfpdfpram{f} \mid \mathfrak{f} \in \atomfacts{\varphi}{\mathfrak{I}}\right\}$, we use $\left\{\tilde{q_{\mathfrak{f}}} \mid \mathfrak{f} \in \atomfacts{\varphi}{\mathfrak{I}}\right\}$ to compute the expectations.
\item \textbf{Hyper-parameter update (line~\ref{state:update_beta}). }After each iteration, we increase $\beta$ by a factor of $\alpha$, defined before execution. This approach, originally defined for deterministic annealing, avoids that the algorithm is trapped in a bad local maximum in the early iterations~\cite{rose1992vector,rose1998deterministic}. 
\end{enumerate}
\item \textbf{Policy computation (line~\ref{state:compute_pol}). }Finally, we compute the policy $\mathfrak{I}^* = \arg\max_{\mathfrak{I}}q\left(\mathfrak{I}\right)$. By looking at Equation~\ref{eq:post_approx_dist}, we see that to maximize $q$, it suffices to maximize $q_\mathfrak{f}$, for every $\mathfrak{f} \in \atomfacts{\varphi}{\mathfrak{I}}$. Hence, we let $\mathfrak{I}^*$ be the policy that satisfies $\mathfrak{f}^{\mathfrak{I}^*} = \arg\max_{\domi{b} \in \rangerandfact{\mathfrak{f}}}\tilde{q}_{\mathfrak{f}}(\domi{b})$.
\end{enumerate}


%
\begin{algorithm}[h!]
\SetStartEndCondition{ }{}{}%
\SetKwProg{Fn}{}{\string:}{}
\SetKwFor{For}{for}{\string:}{}%
\SetKwIF{If}{ElseIf}{Else}{if}{:}{elif}{else:}{}%
\DontPrintSemicolon
\LinesNumbered
\Fn(){\textsc{PolicyMiner}$\left(L, \Auth, \varphi, \alpha, \beta, T\right)$}{
\For{$\mathfrak{f} \in \atomfacts{\varphi}{\mathfrak{I}}$\label{state:init}}
{\text{Randomly initialize $\tilde{q}_{\mathfrak{f}}$.}}
\label{state:init_setup}
\For{$i = 1 \ldots T$\label{state:for_start}}
{\For{$\mathfrak{f} \in \mathit{RS}\left(\atomfacts{\varphi}{\mathfrak{I}}\right)$\label{state:for_fact}}
{
\For{$\domi{b} \in \rangerandfact{\mathfrak{f}}$\label{state:second_for_b}}
{$\tilde{q}_{\mathfrak{f}}\left(\domi{b}\right) \gets \displaystyle\frac{\exp\left(-\beta\meanfield{V \neq W}{\mathfrak{f} \mapsto \domi{b}}{\lossfunc\left(\Auth, \mathfrak{X}; \varphi\right)}\right)}{\sum_{\domi{b}'}\exp\left(-\beta\meanfield{V \neq W}{\mathfrak{f} \mapsto \domi{b}'}{\lossfunc\left(\Auth, \mathfrak{X}; \varphi\right)}\right)}.$}\label{state:update_pram}}\label{state:end_for_fact}
$\beta \gets \alpha \times \beta.$\label{state:update_beta}}\label{state:for_end}
Define $\mathfrak{I}^*$ by letting $\mathfrak{f}^{\mathfrak{I}^*} = \arg\max_{\domi{b}}\tilde{q}_{\mathfrak{f}}(\domi{b})$, for $\mathfrak{f} \in \atomfacts{\varphi}{}$.\label{state:compute_pol}\\
\textbf{return} $\mathfrak{I}^*$.}
\caption{The policy miner.}
\label{algo:varinfer}
\end{algorithm}

Observe that the policy miner requires values for the hyper-parameters $\alpha$, $\beta$, and $T$ as input. Adequate values can be computed using machine-learning methods like grid search~\cite{hyperparamsearch}, which we briefly recall in Appendix~\ref{sub:grid_search}.

\subsection{Simplifying the computation of expectations}
\label{sub:simplifying}

One need not be knowledgeable about deterministic annealing or mean-field approximations to implement Algorithm~\ref{algo:varinfer} in a standard programming language. The only part requiring knowledge in probability theory is the computation of the expectations in line~\ref{state:update_pram}. We now define the notion of \emph{diverse} random variables and show that expectations of some diverse random variables can easily be computed recursively using some basic equalities.

\begin{definition}
A random variable $X$ is \emph{diverse} if (i) it can be constructed from constant values and random facts using only arithmetic and Boolean operations and (ii) any random fact is used in the construction at most once.
\label{def:diverse_rv}
\end{definition}

\begin{example}
Let $(\domi{u}, \domi{p}) \in U \times P$ and let $V$, $W$, and $Y$ be flexible relation symbols. Then $V^\mathfrak{X}\left(\domi{u}, \domi{p}\right) + W^\mathfrak{X}\left(\domi{u}, \domi{p}\right)$ is diverse, but $V^\mathfrak{X}\left(\domi{u}, \domi{p}\right)W^\mathfrak{X}\left(\domi{u}, \domi{p}\right) + W^\mathfrak{X}\left(\domi{u}, \domi{p}\right)Y^\mathfrak{X}\left(\domi{u}, \domi{p}\right) + V^\mathfrak{X}\left(\domi{u}, \domi{p}\right)Y^\mathfrak{X}\left(\domi{u}, \domi{p}\right)$ is not, since each random fact there occurs more than once.
\end{example}

\begin{corollary}
Let $\varphi \in \Lang$ and $(\domi{u}, \domi{p}) \in U \times P$, then $\varphi^{\mathfrak{X}}\left(\domi{u}, \domi{p}\right)$ is diverse iff every atomic formula that occurs in $\varphi$ occurs exactly once.
\label{cor:occurs_once}
\end{corollary}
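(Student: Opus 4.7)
The plan is to prove the corollary by structural induction on the syntactic form of $\varphi \in \Lang$, exploiting the fact that the random variable $\varphi^{\mathfrak{X}}(\domi{u}, \domi{p})$ is built from the atomic subformulas of $\varphi$ in lockstep with $\varphi$'s syntactic structure. Concretely, after the substitution $\vari{u} \mapsto \domi{u}$, $\vari{p} \mapsto \domi{p}$, each atomic subformula $\alpha$ of $\varphi$ evaluates to a Boolean-valued random fact $\alpha^{\mathfrak{X}}(\domi{u}, \domi{p})$, and the Boolean connectives $\land, \lor, \lnot$ translate into arithmetic/Boolean operations on $\{0,1\}$-valued quantities (e.g., multiplication, $1 - (\cdot)$, etc.). This immediately shows that condition (i) of Definition~\ref{def:diverse_rv} holds unconditionally, so the whole statement reduces to characterizing when condition (ii) — each random fact used at most once — is satisfied.

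For the base case, if $\varphi$ is itself an atomic formula, then $\varphi^{\mathfrak{X}}(\domi{u}, \domi{p})$ is a single random fact, used exactly once, and $\varphi$ contains exactly one atomic formula occurring exactly once; both sides of the biconditional hold trivially.

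For the inductive step, write $\varphi = \psi_1 \star \psi_2$ for a binary Boolean connective $\star$ (the case $\varphi = \lnot \psi$ is analogous and simpler). By the inductive construction, the multiset of random facts appearing in the expression for $\varphi^{\mathfrak{X}}(\domi{u}, \domi{p})$ is the disjoint union of the corresponding multisets for $\psi_1^{\mathfrak{X}}(\domi{u}, \domi{p})$ and $\psi_2^{\mathfrak{X}}(\domi{u}, \domi{p})$, and similarly the multiset of atomic subformula occurrences in $\varphi$ is the union of those in $\psi_1$ and $\psi_2$. Since $\varphi$'s only free variables are $\vari{u}$ and $\vari{p}$, after substitution each atomic formula $\alpha$ is determined by its syntactic form and hence corresponds bijectively to a well-defined random fact. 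Consequently, no random fact is repeated in $\varphi^{\mathfrak{X}}(\domi{u}, \domi{p})$ iff no atomic formula has more than one occurrence in $\varphi$, which is exactly the stated condition. Combining the base case and the inductive step yields the biconditional.

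The main obstacle I anticipate is ensuring cleanly that the correspondence between atomic subformula occurrences and random fact occurrences is one-to-one — in particular, that two syntactically distinct atomic formulas cannot collapse to the same random fact after the substitution. Since the only variables in play are $\vari{u}$ and $\vari{p}$, and every atomic formula's induced random fact is determined by its outermost relation symbol together with its ground argument tuple, this collapse is ruled out, but it is the step that must be stated carefully so that the induction on multisets of occurrences goes through.
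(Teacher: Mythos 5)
Your proposal is correct and takes essentially the same route as the paper, which simply declares the corollary a direct consequence of Definition~\ref{def:diverse_rv}; your structural induction is just the natural unpacking of that observation, tracking how the multiset of random-fact usages in the canonical arithmetic construction of $\varphi^{\mathfrak{X}}(\domi{u},\domi{p})$ mirrors the multiset of atomic-formula occurrences in $\varphi$. Your closing caveat about distinct atomic formulas collapsing to the same random fact is a reasonable point of care that the paper glosses over entirely, and handling it at the level you do matches the paper's own level of rigor.
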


This corollary is a direct consequence of Definition~\ref{def:diverse_rv}. Observe that, for $\varphi \in \Lang$, one can check in time linear in $\varphi$'s length that every atomic formula occurring in $\varphi$ occurs exactly once.

\begin{example}
Recall the formula $\varphi^{\mathit{RBAC}}_{N}$ defined in Section~\ref{sub:example_rbac}. Observe that each atomic formula occurs exactly once. Hence, for $(\domi{u}, \domi{p}) \in U \times P$, the random variable $\left(\varphi^{\mathit{RBAC}}_{N}\right)^{\mathfrak{X}}(\domi{u}, \domi{p})$ is diverse. 
\label{ex:rbac_abac_diverse}
\end{example}

The following lemma, proved in Appendix~\ref{sec:app_uni_expectation_computation}, shows how to recursively compute $\meanfield{V \neq \mathfrak{f}}{\mathfrak{f} \mapsto \domi{b}}{\lossfunc\left(\Auth, \mathfrak{X}; \varphi\right)}$ when $\varphi^\mathfrak{X}(\domi{u},\domi{p})$ is diverse.

\begin{lemma} Let $\mathfrak{f}$ and $\mathfrak{g}$ be facts, $\varphi$ be a formula in $\Lang$, $(\domi{u},\domi{p}) \in U \times P$, and 
 $\{\psi_i\}_i \subseteq \Lang$. Assume that $\varphi^{\mathfrak{X}}(\domi{u},\domi{p})$ and $\left(\bigwedge_i \psi_i\right)^{\mathfrak{X}}(\domi{u},\domi{p})$ are diverse. Then the following equalities hold.
%
\begin{align*}
\meanfield{V \neq \mathfrak{f}}{\mathfrak{f} \mapsto \domi{b}} {\mathfrak{g}} &= 
\begin{cases}
\domi{b} & \text{ if $\mathfrak{f} = \mathfrak{g}$ and}\\
\sum_{\domi{b} \in \rangerandfact{\mathfrak{g}}}\tilde{q}_{\mathfrak{g}}\left(\domi{b}\right) \domi{b} & \text{ otherwise. }
\end{cases}\\
\meanfield{V \neq \mathfrak{f}}{\mathfrak{f} \mapsto \domi{b}} {\left(\neg \varphi\right)^\mathfrak{X}(\domi{u},\domi{p})} &= 1 - \meanfield{V \neq \mathfrak{f}}{\mathfrak{f} \mapsto \domi{b}} {\varphi^\mathfrak{X}(\domi{u},\domi{p})}.\\
\meanfield{V \neq \mathfrak{f}}{\mathfrak{f} \mapsto \domi{b}} {\left(\bigwedge_i \psi_i\right)^\mathfrak{X}(\domi{u},\domi{p})} &= 
\prod_i\meanfield{V \neq \mathfrak{f}}{\mathfrak{f} \mapsto \domi{b}} {\psi_i^\mathfrak{X}(\domi{u},\domi{p})}.\\
%
%
\meanfield{V \neq \mathfrak{f}}{\mathfrak{f} \mapsto \domi{b}}{\lossfunc\left(\Auth, \mathfrak{X}; \varphi\right)} &=\displaystyle\sum_{(\domi{u}, \domi{p}) \in U \times P} \left|\mathit{Auth}(\domi{u}, \domi{p}) - \meanfield{V \neq \mathfrak{f}}{\mathfrak{f} \mapsto \domi{b}}{\varphi^\mathfrak{X}(\domi{u}, \domi{p})}\right|.
\end{align*}
\label{lem:expectations_flas}
\end{lemma}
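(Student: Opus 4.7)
The plan is to prove the four equalities one by one, all of them leaning on the crucial property that under the mean-field distribution $q$ defined by Equation~\ref{eq:post_approx_dist}, the random facts in $\atomfacts{\varphi}{\mathfrak{I}}$ are mutually independent (this independence was explicitly noted as a consequence of the factorized form of $q$). The conditional expectation $\meanfield{V \neq \mathfrak{f}}{\mathfrak{f} \mapsto \domi{b}}{\cdot}$ is the expectation under the mean-field distribution after pinning $\mathfrak{f}$ to the value $\domi{b}$, so it marginalizes the remaining facts using their marginals $\tilde{q}_\mathfrak{g}$, and these remaining facts are still mutually independent.

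First I would dispatch the base case, the equality for a random fact $\mathfrak{g}$. If $\mathfrak{f} = \mathfrak{g}$, the substitution makes $\mathfrak{g}$ the constant $\domi{b}$, so its expectation is $\domi{b}$. Otherwise $\mathfrak{g}$ is untouched and its expectation under $q$ is, by definition of a marginal, $\sum_{\domi{b}' \in \rangerandfact{\mathfrak{g}}} \tilde{q}_\mathfrak{g}(\domi{b}')\,\domi{b}'$. The negation equality is then immediate: since $\varphi^\mathfrak{X}(\domi{u},\domi{p}) \in \{0,1\}$, we have $(\neg\varphi)^\mathfrak{X}(\domi{u},\domi{p}) = 1 - \varphi^\mathfrak{X}(\domi{u},\domi{p})$, and the claim follows from linearity of expectation.

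The conjunction equality is the main technical step and the one where the diversity hypothesis is essential. Because $(\bigwedge_i \psi_i)^\mathfrak{X}(\domi{u},\domi{p})$ is diverse, each random fact occurs in at most one conjunct, so the random variables $\psi_1^\mathfrak{X}(\domi{u},\domi{p}), \psi_2^\mathfrak{X}(\domi{u},\domi{p}), \ldots$ are functions of pairwise disjoint sets of random facts. Under the factorized distribution $q$, these variables are therefore mutually independent, and pinning $\mathfrak{f}$ to $\domi{b}$ preserves this (at worst it removes one variable from its set of dependencies and keeps the remaining ones independent). Since each $\psi_i^\mathfrak{X}(\domi{u},\domi{p})$ takes values in $\{0,1\}$, the identity $(\bigwedge_i \psi_i)^\mathfrak{X}(\domi{u},\domi{p}) = \prod_i \psi_i^\mathfrak{X}(\domi{u},\domi{p})$ holds, and the expectation of a product of independent $\{0,1\}$-valued random variables factors as the product of expectations. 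This is the obstacle to watch: I would want to make the disjoint-support argument explicit, since without diversity two conjuncts could share a random fact and the factorization would fail.

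Finally, for the loss-function equality I would apply linearity of expectation to commute $\meanfield{V \neq \mathfrak{f}}{\mathfrak{f} \mapsto \domi{b}}{\cdot}$ past the sum over $(\domi{u},\domi{p})$, and then analyse the summand $|\mathit{Auth}(\domi{u},\domi{p}) - \varphi^\mathfrak{X}(\domi{u},\domi{p})|$ by cases on $\mathit{Auth}(\domi{u},\domi{p}) \in \{0,1\}$. When $\mathit{Auth}(\domi{u},\domi{p}) = 0$, the summand equals $\varphi^\mathfrak{X}(\domi{u},\domi{p})$, whose expectation lies in $[0,1]$, so it equals its own absolute value and matches $|\mathit{Auth}(\domi{u},\domi{p}) - \meanfield{}{\varphi^\mathfrak{X}(\domi{u},\domi{p})}|$. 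When $\mathit{Auth}(\domi{u},\domi{p}) = 1$, the summand equals $1 - \varphi^\mathfrak{X}(\domi{u},\domi{p})$, whose expectation is $1 - \meanfield{}{\varphi^\mathfrak{X}(\domi{u},\domi{p})} \in [0,1]$ and thus also coincides with $|1 - \meanfield{}{\varphi^\mathfrak{X}(\domi{u},\domi{p})}|$. This pushing of the expectation inside the absolute value is only legal because $\mathit{Auth}(\domi{u},\domi{p})$ is Boolean and $\varphi^\mathfrak{X}(\domi{u},\domi{p})$ is $\{0,1\}$-valued; I would flag this explicitly to avoid the (false) impression that $\mathbb{E}|X|$ and $|\mathbb{E}X|$ are always equal.
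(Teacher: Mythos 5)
Your proof is correct and follows essentially the same route as the paper's: the base case and negation via linearity, the conjunction via the observation that diversity forces the conjuncts to depend on disjoint sets of random facts, which are mutually independent under the factorized distribution $q$, and the loss-function equality via linearity after splitting on the Boolean value of $\mathit{Auth}(\domi{u},\domi{p})$. You are in fact slightly more careful than the paper in two places --- making the disjoint-support argument for independence explicit, and justifying why the expectation may be pushed inside the absolute value --- but these are refinements of the same argument, not a different one.
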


Recall that $\land$ and $\neg$ form a complete set of Boolean operators. So one can also use this lemma to compute expectations of diverse random variables of the form $\left(\varphi \to \psi\right)^{\mathfrak{X}}(\domi{u}, \domi{p})$ and $\left(\varphi \lor \psi\right)^{\mathfrak{X}}(\domi{u}, \domi{p})$.

\section{RBAC mining with {\polymver}}
\label{sec:examples}

We explain next how to use {\polymver} to build an RBAC miner.

\subsection{RBAC policies}
\label{sub:rbac}

We already explained how the formula $\varphi^{\mathit{RBAC}}_N \in \Lang$ is a template formula for the language of all RBAC policies with at most $N$ roles. To implement Algorithm~\ref{algo:varinfer}, we only need a procedure to compute $\meanfield{V \neq \mathfrak{f}}{\mathfrak{f} \mapsto \domi{b}}{\lossfunc\left(\Auth, \mathfrak{X}; \varphi^\mathit{RBAC}_N\right)}$. Since, as noted in Example~\ref{ex:rbac_abac_diverse}, $\left(\varphi^\mathit{RBAC}_N\right)^{\mathfrak{X}}\left(\domi{u}, \domi{p}\right)$ is diverse, we can apply Lemma~\ref{lem:expectations_flas} to show that
\begin{align*}
&\meanfield{V \neq \mathfrak{f}}{\mathfrak{f} \mapsto \domi{b}}{\lossfunc\left(\Auth, \mathfrak{X}; \varphi^\mathit{RBAC}_N\right)} =\\
&\qquad \displaystyle\sum_{(\domi{u}, \domi{p}) \in U \times P} \left|\mathit{Auth}(\domi{u}, \domi{p}) - \meanfield{V \neq \mathfrak{f}}{\mathfrak{f} \mapsto \domi{b}}{\left(\varphi^\mathit{RBAC}_N\right)^\mathfrak{X}(\domi{u}, \domi{p})}\right|.
\end{align*}
\begin{align*}
&\meanfield{V \neq \mathfrak{f}}{\mathfrak{f} \mapsto \domi{b}}{\left(\varphi^\mathit{RBAC}_N\right)^\mathfrak{X}(\domi{u}, \domi{p})} =\\ 
&\qquad 1 - \prod_{i \leq N}\left(1 - \meanfield{V \neq \mathfrak{f}}{\mathfrak{f} \mapsto \domi{b}}{\mathit{UA}^\mathfrak{X}(\domi{u},\ricons{r}_i)}\meanfield{V \neq \mathfrak{f}}{\mathfrak{f} \mapsto \domi{b}}{\mathit{PA}^\mathfrak{X}(\ricons{r}_i,\domi{p})}\right),
\end{align*}
where, 
$$\meanfield{V \neq \mathfrak{f}}{\mathfrak{f} \mapsto \domi{b}}{\mathit{UA}^\mathfrak{X}(\domi{u},\ricons{r}_i)} = 
\begin{cases}
\domi{b} & \text{ if $\mathit{UA}^\mathfrak{X}(\domi{u},\ricons{r}_i) = \mathfrak{f}$}\\
\sum_{\domi{b}}\tilde{q}_{\mathit{UA}^\mathfrak{X}(\domi{u},\ricons{r}_i)}\left(\domi{b}\right) \domi{b} & \text{ otherwise. }
\end{cases}$$
$\meanfield{V \neq \mathfrak{f}}{\mathfrak{f} \mapsto \domi{b}}{\mathit{PA}^\mathfrak{X}(\ricons{r}_i,\domi{p})}$ is computed analogously. 

Observe how the computations of expectations is reduced to a simple rewriting procedure by applying Lemma~\ref{lem:expectations_flas}. We can now implement an RBAC miner by implementing Algorithm~\ref{algo:varinfer} in a standard programming language and using the results above to compute the needed expectations.

%
%


\subsection{Simple RBAC policies}
\label{sub:succinct_rbac}

The objective function used above has a limitation. When the number of role constants $N$ used by $\varphi^{\mathit{RBAC}}_N\left(\vari{u}, \vari{p}\right)$ is very large, we might obtain a policy $\tilde{\mathfrak{I}}$ that assigns each role to exactly one user. The role assigned to a user would be assigned all permissions that the user needs. As a result, $\lossfunc(\Auth, \tilde{\mathfrak{I}}; \varphi^{\mathit{RBAC}}_N) = 0$, but $\tilde{\mathfrak{I}}$ is not a desirable policy. We can avoid mining such policies by introducing in the objective function a \emph{regularization term} that measures the complexity of the mined policy $\mathfrak{I}$. A candidate regularization term is:
\begin{equation*}
\left\|\mathfrak{I}\right\| = \sum_{i \leq N} \left(\sum_{\domi{u} \in U} \mathit{UA}^\mathfrak{I}(\domi{u},\ricons{r}_i) + \sum_{\domi{p} \in P}\mathit{PA}^\mathfrak{I}(\ricons{r}_i,\domi{p})\right).
\label{eq:reg_rbac}
\end{equation*}
Observe that $\left\|\mathfrak{I}\right\|$ measures the sizes of the relations $\mathit{UA}^\mathfrak{I}$ and $\mathit{PA}^\mathfrak{I}$, for $i \leq N$, thereby providing a measure of $\mathfrak{I}$'s complexity. 
We now define the following loss function:
\begin{equation*}
\lossfunc^r_{\mathit{RBAC}}(\Auth, \mathfrak{I}) = \lambda \left\|\mathfrak{I}\right\| + \lossfunc(\Auth, \mathfrak{I}; \varphi).
\end{equation*}
Here $\lambda > 0$ is a trade-off hyper-parameter, which again must be fixed before executing the policy miner and can be estimated using grid search. Note that $\lossfunc^r_{\mathit{RBAC}}$ penalizes not only policies that substantially disagree with $\Auth$, but also policies that are too complex.

The computation of $\meanfield{V \neq \mathfrak{f}}{\mathfrak{f} \mapsto \domi{b}}{\lossfunc^r_{\mathit{RBAC}}(\Auth, \mathfrak{X})}$ now also requires computing $\meanfield{V \neq \mathfrak{f}}{\mathfrak{f} \mapsto \domi{b}}{\left\|\mathfrak{X}\right\|}$, where $\left\|\mathfrak{X}\right\|$ is the random variable obtained by replacing each occurrence of $\mathfrak{I}$ in $\left\|\mathfrak{I}\right\|$ with $\mathfrak{X}$. Fortunately, one can see that $\left\|\mathfrak{X}\right\|$ is diverse. Hence, we can use the linearity of expectation and Lemma~\ref{lem:expectations_flas} to compute all needed expectations.

\section{Mining spatio-temporal RBAC policies}
\label{sub:st_rbac}

\newcommand{\timeSort}{\sortFont{TIME}}
\newcommand{\spaceSort}{\sortFont{SPACE}}
\newcommand{\riconsts}[1]{C\left(#1\right)}

We now use {\polymver} to build the first policy miner for RBAC extensions with spatio-temporal constraints~\cite{aich2007starbac, cotrini2015analyzing, joshi2004access, bhatti2005x, ben2016gemrbac, ben2016model}. In policies in these extensions, users are assigned permissions not only according to their roles, but also based on constraints depending on the current time and the user's and the permission's locations. The syntax for specifying these constraints allows for policies like ``a user is assigned the role Engineer from Monday through Friday and from 8:00 AM until 5:00 PM'' or ``the role Engineer is granted permission to access any object within a radius of three miles from the main building.''

We present a template formula $\varphi^{\mathit{st}}(\vari{t}, \vari{u}, \vari{p}) \in \Lang$ for a policy language that we call \emph{spatio-temporal RBAC}. This is an extension of RBAC with a syntax for spatial constraints based on~\cite{ben2016gemrbac, ben2016model} and a syntax for temporal constraints based on temporal RBAC~\cite{bertino2001trbac}.
$$\varphi^{\mathit{st}}(\vari{t}, \vari{u}, \vari{p}) = \bigvee_{i \leq \NumRoles} \left(\psi_\mathit{UA}(\vari{t}, \vari{u}, \ricons{r}_i) \land \psi_\mathit{PA}(\vari{t}, \ricons{r}_i, \vari{p})\right).$$
Here, we assume the existence of a sort $\timeSort$ and that $\vari{t}$ is a variable of this sort representing the time when $\vari{u}$ exercises $\vari{p}$. We also assume the existence of a sort $\spaceSort$ that we use to specify spatial constraints. The formulas $\psi_\mathit{UA}(\vari{t}, \vari{u}, \ricons{r}_i)$ and $\psi_\mathit{PA}(\vari{t}, \ricons{r}_i, \vari{p})$ describe when a user is assigned the role $\ricons{r}_i$ and when a permission is assigned to the role $\ricons{r}_i$, respectively. We use the rigid constants $\ricons{r}_1, \ldots, \ricons{r}_\NumRoles$ to denote roles.

The grammar $\Gamma_{\textit{st}}$ below defines the syntax of $\psi_{\mathit{UA}}$ and $\psi_{\mathit{PA}}$.
\newcommand{\gramterm}[1]{\left\langle \texttt{#1}\right\rangle}
\vspace{-1pt}
\begin{align*}
\gramterm{cstr\_list} &::=  \gramterm{cstr} ( \; \land \gramterm{cstr} \; )*\\
\gramterm{cstr} &::= \gramterm{sp\_cstr} (\; \lor \; \gramterm{sp\_cstr} \;)* \mid\\
&\phantom{::= }\gramterm{tmp\_cstr} (\; \lor \; \gramterm{tmp\_cstr} \;)*\\
\gramterm{sp\_cstr} &::= 
	 \left(\neg \texttt{?}\right) \underline{\mathit{isWithin}}\left(\underline{\mathit{Loc}}\left(\vari{o}\right), \cons{d}, \cons{b}\right)\\
\gramterm{tmp\_cstr} &::= \psi_{\mathit{cal}}\left(\vari{t}\right)
\end{align*}


An expression in this grammar is a conjunction of \emph{constraints}, each of which is either a disjunction of \emph{temporal constraints} or a disjunction of \emph{spatial constraints}. 

\subsection{Modeling spatial constraints}
\label{sub:uni_modeling_spatial_constrains}

A \emph{spatial constraint} is a (possibly negated) formula of the form $\underline
{\mathit{isWithin}}\left(\underline{\mathit{Loc}}\left(\vari{o}\right), \cons{d}, \cons{b}\right)$, where $\vari{o}$ is a variable of sort $\Users$ or $\Perms$, $\underline{\mathit{Loc}}\left(\vari{o}\right)$ denotes $\vari{o}$'s location, $\cons{d}$ is a flexible constant symbol of a sort whose carrier set is $\mathbb{N}^{\leq M} = \left\{\cons{0}, \cons{1}, \ldots, \cons{M}\right\}$ (where $M$ is a value fixed in advance), and $\cons{b}$ is a flexible constant symbol of a sort describing the organization's physical facilities. For example, $\underline{\mathit{isWithin}}\left(\underline{\mathit{Loc}}\left(\vari{u}\right), \cons{4}, \cons{MainBuilding}\right)$ holds when the user represented by $\vari{u}$ is within 4 space units of the main building.

Intuitively, the formula $\underline{\mathit{isWithin}}\left(\underline{\mathit{Loc}}\left(\vari{o}\right), \cons{d}, \cons{b}\right)$ evaluates whether the entity represented by $\vari{o}$ is located within $\cons{d}$ spatial units from $\cons{b}$. Observe that a policy miner does not need to compute interpretations for rigid function symbols like $\underline{\mathit{Loc}}$ or rigid relation symbols like $\underline{\mathit{isWithin}}$, since they already have a fixed interpretation.\looseness=-1

\subsection{Modeling temporal constraints}
\label{sub:uni_modeling_temporal_constraints}

A \emph{temporal constraint} is a formula $\psi_{\mathit{cal}}\left(\vari{t}\right)$ that represents a \emph{periodic expression}~\cite{bertino2001trbac}, which describes a set of time intervals. We give here a simplified overview and refer to the literature for full details~\cite{bertino2001trbac}.

\begin{definition}
A \emph{periodic expression} is a tuple $(\mathit{yearSet}, \mathit{monthSet}$, $\mathit{daySet}, \mathit{hourSet}, \mathit{hourDuration}) \in \left(2^\mathbb{N}\right)^4 \times \mathbb{N}.$
A \emph{time instant} is a tuple $(y, m, d, h) \in \mathbb{N}^4$. The time instant satisfies the periodic expression
if $y \in \mathit{yearSet}$, $m \in \mathit{monthSet}$, $d \in \mathit{daySet}$, and there is an $h' \in \mathit{hourSet}$ such that $h' \leq h \leq h' + \mathit{hourDuration}$.
\end{definition}

Previous works on analyzing temporal RBAC with SMT solvers~\cite{jha2014security} show that temporal constraints can be expressed as formulas in $\Lang$. Furthermore, one can verify that any expression in $\Gamma_{\textit{st}}$ and, therefore, $\varphi^{\mathit{st}}$ is in $\Lang$.\looseness=-1

As an objective function, we use $\lambda \left\|\mathfrak{I}\right\| + \lossfunc\left(\Auth, \mathfrak{I}; \varphi^{\mathit{st}}\right)$. Here, $\left\|\mathfrak{I}\right\|$ counts the number of spatial constraints plus the sum of the \emph{weighted structural complexities} of all temporal constraints~\cite{stoller2016mining}. For computing expectations, one can show that $\left\|\mathfrak{X}\right\|$ is diverse and that every atomic formula in $\varphi^{\mathit{st}}$ occurs exactly once. Hence, one can compute all necessary expectations using the linearity of expectation and Lemma~\ref{lem:expectations_flas}.\looseness=-1

\section{Experiments}
\label{sec:experiments}

In this section, we experimentally validate two hypotheses. First, using {\polymver}, we can build policy miners for a wide variety of policy languages. Second, the policies mined by these miners have as low complexity and high generalization ability as those mined by the state of the art.

\subsection{Datasets}
\label{sub:datasets}

Our experiments are divided into the following categories.

\paragraph{Mine RBAC policies from access control matrices} We use three access control matrices from three real organizations, named ``healthcare'', ``firewall'', and ``americas''~\cite{ene2008fast}. For healthcare, there are 46 users and 46 permissions, for firewall, there are 720 users and 587 permissions, and for americas, there are more than 10,000 users and around 3,500 permissions. We refer to these access control matrices as RBAC1, RBAC2, and RBAC3.

\paragraph{Mine ABAC policies from logs} We use four logs of access requests provided by Amazon for a Kaggle competition in 2013~\cite{amazonchallenge}, where participants had to develop mining algorithms that predicted from the logs which permissions must be assigned to which users. We refer to these logs as ABAC1, ABAC2, ABAC3, and ABAC4.

\paragraph{Mine business-meaningful RBAC policies from access control matrices} We use the access control matrix provided by Amazon for the IEEE MLSP 2012 competition~\cite{mlsp2012amazon}, available at the UCI machine learning repository~\cite{amazonuci}. It assigns three types of permissions, named ``HOST'', ``PERM\_GROUP'', and ``SYSTEM\_GROUP'' to 30,000 users. The number of permissions for each type are approximately 1,700, 6,000, and 20,000, respectively. For each type of permission, we sampled 5,000 users from all 30,000 users and used all permissions of that type to build an access control matrix. We explain in detail how we create these matrices in Appendix~\ref{sub:bm_rbac_datasets}. We refer to these matrices as BM-RBAC1, BM-RBAC2, and BM-RBAC3.

\paragraph{Mine XACML policies from access control matrices} We use Continue~\cite{krishnamurthi2003continue, fisler2005verification}, the most complex set of XACML policies in the literature. We use seven of the largest policies in the set. For each of them, we compute the set of all possible requests and decide which of them are authorized by the policy. We then mine a policy from this set of decided requests. For the simplest policy, there are around 60 requests and for the most complex policy, there are more than 30,000 requests. We call these seven sets of requests XACML1, XACML2,~..., XACML7.\looseness=-1

\paragraph{Mine spatio-temporal RBAC policies from logs} There are no publicly available datasets for mining spatio-temporal RBAC policies. Based on policies provided as examples in recent works~\cite{ben2016gemrbac, ben2016model}, we created a synthetic policy and a synthetic log by creating 1,000 access requests uniformly at random and evaluating them against the policy. We refer to this log as STARBAC. The synthetic policy is described in Appendix~\ref{sub:synth_pol_starbac}.

\subsection{Methodology}
\label{sub:uni_methodology}

For RBAC and ABAC, we mine two policies in the corresponding policy language's syntax. The first one using a miner built using {\polymver} and the second one using a state-of-the-art miner. Details on the miners built using {\polymver} are given in Sections~\ref{sec:examples} and~\ref{sub:st_rbac}, and Appendices~\ref{sec:app_examples} and~\ref{sub:xacml}. For RBAC, we use for comparison the miner presented in~\cite{frank2013role} and, for ABAC, we use for comparison the miner from~\cite{cotrini2018rhapsody}. For XACML and spatio-temporal RBAC, there are no other known miners. For business meaningful RBAC, we contacted the authors of miners for this RBAC extension~\cite{frank2013role, molloy2012generative}, but implementations of their algorithms were not available.


As an objective function we use $\lambda \left\|\mathfrak{I}\right\| + \lossfunc\left(\Auth, \mathfrak{I}; \varphi\right)$, where $\lambda$ is a trade-off hyper-parameter, $\left\|\mathfrak{I}\right\|$ is the complexity measure defined for $\mathfrak{I}$ in the policy language, and $\varphi$ is the template formula for the corresponding policy language. The values for the hyper-parameters were computed using grid search.

To evaluate miners for RBAC, BM-RBAC, and XACML, we use 5-fold cross-validation~\cite{d2015solution, yang2001mining, de2013data}, as described in Appendix~\ref{sub:evaluation}. To measure the mined policy's generalizability, we measure its \emph{true positive rate} (TPR) and its \emph{false positive rate} (FPR)~\cite{powers2011evaluation}. To measure a mined policy's complexity, we use $\left\|\mathfrak{I}\right\|$. To evaluate miners for ABAC and STARBAC, which receive a log instead of an access control matrix as input, we use universal cross-validation~\cite{cotrini2018rhapsody}. We measure the mined policy's TPR, FPR, precision, and complexity. We considered only those mined policies whose FPR was below 5\%.

All policy miners, except the one for BM-RBAC, were developed in Python 3.6 and were executed on machines with 2,8 GHz 8-core CPUs and 32 GB of RAM. The miner for BM-RBAC was developed in Pytorch version 0.4~\cite{paszke2017automatic} and executed on an NVIDIA GTX Titan X GPU with 12 GB of RAM. 
For all policy languages except STARBAC, our experiments finished within 4 hours. For STARBAC, they took 7 hours. We remark that organizations do not need to mine policies on a regular basis, so policies need not be mined in real time~\cite{cotrini2018rhapsody}.\looseness=-1

\subsection{Results}
\label{sub:results}

Figures~\ref{fig:tprs}--\ref{fig:precs} compare, respectively, the TPRs, complexities, and precisions of the policies we mined with those mined by the state of the art across the different datasets with respect to the different policy languages. We make the following observations.

%
%
%

\begin{itemize}
\item We mine policies whose TPR is within 5\% of the state-of-the-art policies' TPR. For the XACML and STARBAC scenarios, where no other miners exist, we mine policies with a TPR above 75\% in all cases.
\item In most cases, we mine policies with a complexity lower than the complexity of policies mined by the state of the art.
\item When mining from logs, we mine policies that have a similar or greater precision than those mined by the state of the art, sometimes substantially greater.
\item In all cases, we mine policies with an FPR $\leq 5\%$ (not shown in the figures).
\end{itemize} 

\begin{figure}
\centering
\includegraphics[scale=1]{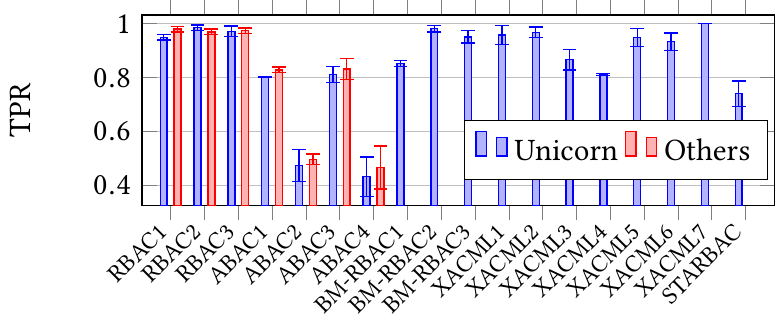} 
\caption{Comparison of the TPRs between policies mined using {\polymver} and policies mined by the state of the art across different policy languages. Policies with \emph{higher} TPRs are better at granting permissions to the correct users.}
\label{fig:tprs}
\end{figure}

\begin{figure}
\centering
\includegraphics[scale=1]{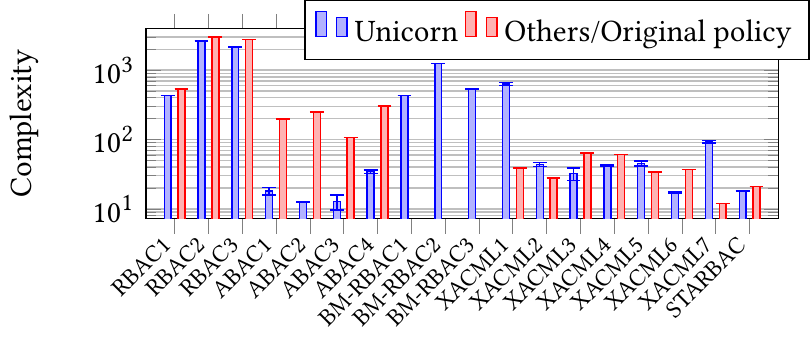} 
%
\caption{Comparison of the complexities between policies mined using {\polymver} and policies mined using the state of the art across different policy languages. Policies with \emph{lower} complexities are better as they are easier to interpret by humans. For XACML and STARBAC, there is no known miner, but we compared the mined policy's complexity with that of the original policy.}
\label{fig:sizes}
\end{figure}

\begin{figure}
\centering
\includegraphics[scale=0.9]{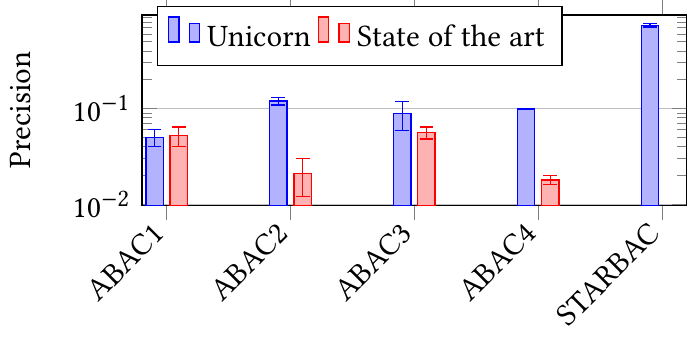} 
\caption{Comparison of the precision between policies mined using {\polymver} and policies mined by a state of the art policy miner across different policy languages. Policies with \emph{higher} precision are better as they avoid incorrect authorizations. We only compare the precision of mined policies when mining from logs, as discussed in~\cite{cotrini2018rhapsody}.}
\label{fig:precs}
\end{figure}

\subsection{Discussion}
\label{sub:discussion}

Our experimental results show that, with the exception of ABAC, all policies we mined attain a TPR of at least 80\% in most of the cases. The low TPR in ABAC is due to the fact that the logs contain only 7\% of all possible requests~\cite{cotrini2018rhapsody}. But even in that case, the ABAC miner we built attains a TPR that is within 5\% of the TPR attained by the state of the art~\cite{cotrini2018rhapsody}. Moreover, our ABAC miner mines policies with substantially lower complexity and higher precision. 
These results support our hypothesis that by using {\polymver} we can build competitive policy miners for a wide variety of policy languages.

These results also suggest that the miners built are well-suited for practical use. In this regard, note that policy miners are tools that facilitate the specification and maintenance of policies. They are not intended to replace human policy administrators, especially when the miners work on logs. This is because logs contain just an incomplete view of how permissions should be assigned to users. Very sparse logs, like those used for the experiments on ABAC, contain barely 7\% of all possible authorization requests. Hence, we cannot expect policy miners to deduce how all permissions should be assigned from such logs. The policy administrator must review the mined policy and specify how it should decide groups of requests that are not well represented in the log. For this reason, mined policies must also be simple. The main application of policy miners is to reduce the cumbersome effort of manually analyzing logs (or, more generally, permission assignments) and mine policies that \emph{generalize well} (see Section~\ref{sub:quality_criteria}).\looseness=-1

Observe that the mined policies correctly authorize at least 40\% of future requests in all cases for ABAC and that in some cases they correctly authorize 80\% of all requests. All this with a false positive rate below 5\%. This means that the mined policy has already reduced the policy administrator's work by at least 40\% and in most of the cases by at least 80\%. The administrator now only needs to decide how the policy should decide groups of requests that are not represented in the log.

\section{Related work}
\label{sec:related_work}


\subsection{Policy mining}
\label{sec:policy_mining}

\subsubsection{RBAC mining}

Early research on policy mining focused on RBAC~\cite{ene2008fast, kuhlmann2003role, vaidya2010role}. The approaches developed used combinatorial algorithms to find, for an assignment of permissions to users, an approximately minimal set of role assignments, e.g.,~\cite{vaidya2007role, lu2008optimal, schlegelmilch2005role, vaidya2006roleminer, zhang2007role}. A major step forward was the use of machine-learning techniques like latent Dirichlet allocation~\cite{molloy2012generative} and deterministic annealing~\cite{frank2013role, streich2009multi} to compute models that maximize the likelihood of the given assignment of permission to users. More recent works mine RBAC policies with time constraints~\cite{mitra2015generalized, mitra2016mining} and role hierarchies~\cite{stoller2016mining, guo2008role}, using combinatorial techniques that are specific to the RBAC extension.\looseness=-1

Despite the plethora of RBAC miners, there are still many RBAC extensions for which no miner has been developed. A recent survey in role mining~\cite{mitra2016survey}, covering over a dozen RBAC miners, reports not a single RBAC miner that can mine spatio-temporal constraints, even though there have been several spatio-temporal extensions of RBAC since 2000, e.g.,~\cite{ray2007spatio, chen2008spatio, kumar2006strbac, toahchoodee2009ensuring, chandran2005lot, aich2007starbac, cui2007ex}, and additional extensions are under way~\cite{ben2016gemrbac, ben2016model}. {\polymver} offers a practical solution to mining RBAC policies for these extensions. As illustrated in Section~\ref{sub:st_rbac}, we can now mine spatio-temporal RBAC policies.\looseness=-1

\subsubsection{Other miners}

Miners have recently been proposed for other policy languages like ABAC~\cite{xu2014mining, cotrini2018rhapsody} and ReBAC (Relationship-Based Access Control)~\cite{bui19mining}. These algorithms use dedicated combinatorial and machine-learning methods to mine policies tailored to the given policy language. {\polymver} has the advantage of being applicable to a much broader class of policy languages.

%


\subsection{Interpretable machine learning}
\label{sec:interpretable_ml}

Machine-learning algorithms have been proposed that train an \emph{interpretable} model~\cite{kavvsek2006apriori, jovanoski2001classification, steinberg2009cart, clark1991rule, angelino2017learning} consisting of a set of human-readable rules that describe how an instance is classified. Such algorithms are attractive for policy mining, as policies must not only correctly grant and deny access, they should also be easy to understand.\looseness=-1

The main limitation of the rules mined by these models is that they often do not comply with the underlying policy language's syntax. State-of-the-art algorithms in this field~\cite{steinberg2009cart, clark1991rule, angelino2017learning} produce rules that are simply conjunctions of constraints on the instances' features. This is insufficient for many policy languages, like XACML, where policies can consist of nested subpolicies that are composed with XACML's policy combination algorithms~\cite{godik2002oasis}. 

The main advantage of {\polymver} is that it can mine policies that not only correctly grant and deny access in most cases, but are also compliant with a given policy language's syntax, like XACML. Moreover, as illustrated in Section~\ref{sub:succinct_rbac}, one can tailor the objective function so that the policy miner searches for a simple policy.

\section{Conclusion}
\label{sec:uni_conclusion}

The difficulty of specifying and maintaining access control policies has spawned a large and growing number of policy languages with associated policy miners. However, developing such miners is challenging and substantially more difficult than creating a new policy language. This problem is exacerbated by the fact that existing mining algorithms are inflexible in that they cannot be easily modified to mine policies for other policy languages with different features. In this paper, we demonstrated that it is in fact possible to create a universal method for building policy miners that works very well for a wide variety of policy languages.

We validated {\polymver}'s effectiveness experimentally, including a comparison against state-of-the-art policy miners for different policy languages. In all cases, the miners built using {\polymver} are competitive with the state of the art.

As future work, we plan to automate completely the workflow in Figure~\ref{fig:unicorn}. We envision a \emph{universal policy mining algorithm} based on Algorithm~\ref{algo:varinfer} that, given as input the policy language, the permission assignment, and the objective function, automatically computes the probabilistic model and the most likely policy constrained by the given permission assignment.

\bibliographystyle{plain}
\bibliography{unpolymerbiblio}

\appendix

\section{Proof of Lemma~\ref{lem:cond_prob_is_vec_prob}}
\label{sec:proof_cond_prob_is_vec_prob}

\begin{customthm}{\ref{lem:cond_prob_is_vec_prob}} For a policy language $\varphi \in \Lang$,
\begin{equation}
\prob\left(\mathfrak{I} \mid \Auth\right) = \prob\left(\left(\mathfrak{f}^{\mathfrak{X}}\right)_{\mathfrak{f} \in \atomfacts{\varphi}{}} = \left(\mathfrak{f}^{\mathfrak{I}}\right)_{\mathfrak{f} \in \atomfacts{\varphi}{}} \middle| \Auth\right).
\end{equation}
%
\end{customthm}

\begin{proof}
The first equality follows from $\mathfrak{X}$'s definition, so we prove the second equality only. 
Let $\mathcal{F}$ be the random vector $\left(\mathfrak{f}^{\mathfrak{X}}\right)_{\mathfrak{f} \in \atomfacts{\varphi}{}}$. That is, $\mathcal{F}$ is a random variable that maps $\mathfrak{I}' \in \Omega$ to $\left(\mathfrak{f}^{\mathfrak{I}'}\right)_{\mathfrak{f} \in \atomfacts{\varphi}{}}$. It suffices to show that $E_1 := \left\{\mathfrak{I}\right\}$ and $E_2 := \left\{\mathfrak{I}' \mid \mathcal{F}\left(\mathfrak{I}'\right) = \left(\mathfrak{f}^{\mathfrak{I}}\right)_{\mathfrak{f} \in \atomfacts{\varphi}{}}\right\}$ are the same. To achieve this, it suffices to show that, for any $\mathfrak{I}'$, if $\mathcal{F}\left(\mathfrak{I}'\right) = \left(\mathfrak{f}^{\mathfrak{I}}\right)_{\mathfrak{f} \in \atomfacts{\varphi}{}}$ then $\mathfrak{I}' = \mathfrak{I}$. But this is true, as the values $\left(\mathfrak{f}^{\mathfrak{I}}\right)_{\mathfrak{f} \in \atomfacts{\varphi}{}}$ also turn out to completely determine $\mathfrak{I}$. We conclude then that $E_1 = E_2$, which yields the desired result.
\end{proof}

\section{Evaluating policy miners}
\label{sub:evaluation}

We describe here evaluation methods for policy miners. Recall from Section~\ref{sub:quality_criteria} that policy miners are evaluated by the generalization and the complexity of the policies they mine. 

For the sake of clarity, we assume that a policy miner $M$ is a program that receives as input a subset $A \subseteq U \times P$ denoting the currently known permission assignment. 

\subsection{Cross-validation}
\label{sub:cross_val}

To measure the complexity of a mined policy, one simply evaluates a suitable metric on the policy. Measuring a mined policy's generalization is harder and is related to a fundamental question in machine learning: how to measure a model's prediction accuracy on new data. The generalization ability can be measured with a technique called \emph{cross-validation}~\cite{friedman2001elements}. We give an informal overview and point to the references for a formal explanation~\cite{bishop2006pattern, friedman2001elements}. In particular, for cross-validation in RBAC, we refer to~\cite{frank2013role}, and for cross-validation in ABAC with logs, we refer to~\cite{cotrini2018rhapsody}.

In cross-validation, we take a representative set $Q \subseteq U \times P$ of all possible requests. Recall that a request is a pair consisting of a user and a permission. We assume known a relation $\mathit{Auth} \subseteq U \times P$ that specifies which requests should be authorized. The set $Q$ is then split uniformly at random into two disjoint sets: a training set $\mathit{Tr}$ and a testing set $\mathit{Ts}$. $\mathit{Auth} \cap \mathit{Tr}$ is given as input to the policy miner. Let $\pi$ be the mined policy. Let $A := \mathit{Ts} \cap \mathit{Auth}$ and $D := \mathit{Ts} \setminus A$. Let $\hat{A} := \{q \in \mathit{Ts} \mid \pi(q) = 1\}$ be the requests authorized by the policy and let $\hat{D} := \mathit{Ts} \setminus \hat{A}$. We measure the mined policy's generalizability with the \emph{true positive rate} (TPR) and the \emph{false positive rate} (FPR)~\cite{powers2011evaluation}:
\begin{equation*}
\mathit{TPR} := \frac{\left|\hat{A} \cap A\right|}{\left|A\right|} \qquad \mathit{FPR} := \frac{\left|\hat{A} \cap D\right|}{\left|D\right|}.
\label{eq:tpr_fpr}
\end{equation*}
Intuitively, since the TPR and the FPR are computed on requests that were not part of the miner's input, they evaluate how well the mined policy decides new requests in the future. Hence, good policies have a high TPR and a low FPR.

\paragraph{$K$-fold cross-validation} This is a variant of cross-validation. Instead of splitting $Q$ into two disjoint sets, $Q$ is split uniformly at random into $K$ sets $Q_1, \ldots, Q_K$. Afterwards, for each $i \leq K$, one performs cross-validation using $\cup_{j \neq i} Q_j$ as a training set and $Q_i$ as a testing set. Each of these $K$ iterations is called a \emph{fold}. Finally, one measures the average of the true positive rates and the average of the false positive rates over all $K$ folds. 

\subsection{Grid search for policy mining}
\label{sub:grid_search}

Policy miners sometimes require input values for \emph{hyper-parameters}. To determine the values that make the miners compute the best policies, we use \emph{grid search}~\cite{hyperparamsearch}, which we briefly recall next. 

Let $\alpha_1, \ldots, \alpha_N$ be the hyper-parameters of a policy miner. In grid search, for $i \leq N$, one defines a set $D_i$ of \emph{candidate values}. Then for each $(a_1, \ldots, a_N) \in D_1 \times \ldots \times D_N$, one executes cross-validation using $(a_1, \ldots, a_N)$ as values for the hyper-parameters. Let $\text{TPR}\left(a_1, \ldots, a_N\right)$ and $\text{FPR}\left(a_1, \ldots, a_N\right)$ be the TPR and FPR of the policy mined during that execution of cross-validation, respectively. Finally, one chooses the tuple $(a^*_1, \ldots, a^*_N)$ that maximizes $\text{TPR}\left(a_1, \ldots, a_N\right)$ subject to $\text{FPR}\left(a^*_1, \ldots, a^*_N\right) \leq c$.
%
%
The threshold $c$ for the FPR is arbitrary (we used 0.05) and can be adjusted. It defines a maximum bound of false positives that can be tolerated from a mined policy.

\section{Simplifying the computation of expectations}
\label{sec:app_uni_expectation_computation}

We prove here Lemma~\ref{lem:expectations_flas}. We start with some auxiliary lemmas and definitions.

\begin{lemma} Let $\mathfrak{f}$ and $\mathfrak{g}$ be facts, $\varphi$ be a formula in $\Lang$, $(\domi{u},\domi{p}) \in U \times P$, and 
 $\{\psi_i\}_i \subseteq \Lang$ such that $\{\psi_i^{\mathfrak{X}}(\domi{u},\domi{p})\}_i$ is a set of mutually independent random variables under the distribution $q$.
\begin{align*}
\meanfield{V \neq \mathfrak{f}}{\mathfrak{f} \mapsto \domi{b}} {\mathfrak{g}} &= 
\begin{cases}
\domi{b} & \text{ if $\mathfrak{f} = \mathfrak{g}$}\\
\sum_{\domi{b} \in \rangerandfact{\mathfrak{g}}}\tilde{q}_{\mathfrak{g}}\left(\domi{b}\right) \domi{b} & \text{ otherwise. }
\end{cases}\\
\meanfield{V \neq \mathfrak{f}}{\mathfrak{f} \mapsto \domi{b}} {\left(\neg \varphi\right)^\mathfrak{X}(\domi{u},\domi{p})} &= 1 - \meanfield{V \neq \mathfrak{f}}{\mathfrak{f} \mapsto \domi{b}} {\varphi^\mathfrak{X}(\domi{u},\domi{p})}.\\
\meanfield{V \neq \mathfrak{f}}{\mathfrak{f} \mapsto \domi{b}} {\left(\bigwedge_i \psi_i\right)^\mathfrak{X}(\domi{u},\domi{p})} &= 
\prod_i\meanfield{V \neq \mathfrak{f}}{\mathfrak{f} \mapsto \domi{b}} {\psi_i^\mathfrak{X}(\domi{u},\domi{p})}.\\
%
%
\end{align*}
\label{lem:pre_expectations_flas}
\end{lemma}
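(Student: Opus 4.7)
The plan is to unpack the notation of $\mathbb{E}_{V\neq \mathfrak{f},\,\mathfrak{f}\mapsto \domi{b}}[\,\cdot\,]$ in terms of the mean-field distribution $q$ from Equation~\ref{eq:post_approx_dist}, and then prove the three claims in order. Throughout, I will use the identification of the Boolean values $\false,\true$ with $0,1\in\mathbb{R}$, so that Boolean connectives become arithmetic operations: $(\neg\psi)^{\mathfrak{X}} = 1 - \psi^{\mathfrak{X}}$ and $(\psi_1\wedge\ldots\wedge\psi_k)^{\mathfrak{X}} = \prod_i \psi_i^{\mathfrak{X}}$ as real-valued random variables. The crucial structural fact I will invoke repeatedly is that $q$ factorizes as $q(\mathfrak{I}) = \prod_{\mathfrak{h}\in\atomfacts{\varphi}{\mathfrak{I}}} \tilde{q}_{\mathfrak{h}}(\mathfrak{h}^{\mathfrak{I}})$, so the family of random facts $\{\mathfrak{h}\}_{\mathfrak{h}\in\atomfacts{\varphi}{\mathfrak{I}}}$ is mutually independent under $q$.

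For the first equality, I would argue directly from the definition. The operator $\mathbb{E}_{V\neq \mathfrak{f},\,\mathfrak{f}\mapsto \domi{b}}$ substitutes $\domi{b}$ for $\mathfrak{f}$ and then takes the expectation of the resulting random variable under the marginal of $q$ over the remaining facts. If $\mathfrak{g}=\mathfrak{f}$, the substitution turns $\mathfrak{g}$ into the deterministic constant $\domi{b}$, so the expectation is $\domi{b}$. If $\mathfrak{g}\neq \mathfrak{f}$, then $\mathfrak{g}$'s marginal under $q$ is exactly $\tilde{q}_{\mathfrak{g}}$ (by the factorization), and the substitution leaves $\mathfrak{g}$ untouched, yielding $\sum_{\domi{b}'\in\rangerandfact{\mathfrak{g}}}\tilde{q}_{\mathfrak{g}}(\domi{b}')\,\domi{b}'$.

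For the second equality, I use linearity of expectation: $\mathbb{E}_{V\neq \mathfrak{f},\,\mathfrak{f}\mapsto \domi{b}}[(\neg\varphi)^{\mathfrak{X}}(\domi{u},\domi{p})] = \mathbb{E}_{V\neq \mathfrak{f},\,\mathfrak{f}\mapsto \domi{b}}[1 - \varphi^{\mathfrak{X}}(\domi{u},\domi{p})] = 1 - \mathbb{E}_{V\neq \mathfrak{f},\,\mathfrak{f}\mapsto \domi{b}}[\varphi^{\mathfrak{X}}(\domi{u},\domi{p})]$. For the third equality, I rewrite $(\bigwedge_i \psi_i)^{\mathfrak{X}}(\domi{u},\domi{p}) = \prod_i \psi_i^{\mathfrak{X}}(\domi{u},\domi{p})$ and then invoke the hypothesis that $\{\psi_i^{\mathfrak{X}}(\domi{u},\domi{p})\}_i$ is mutually independent under $q$: the expectation of a product of independent random variables equals the product of their expectations.

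\textbf{Main obstacle.}
The delicate point is that the substitution $\mathfrak{f}\mapsto \domi{b}$ could in principle affect the mutual independence of the $\psi_i^{\mathfrak{X}}(\domi{u},\domi{p})$. I would justify that it does not by observing that substituting a constant is equivalent to conditioning $q$ on the event $\{\mathfrak{f}=\domi{b}\}$, and since $q$ factorizes, conditioning on a single coordinate leaves the joint distribution of the remaining facts a product of their marginals. Consequently, any functions $\psi_i^{\mathfrak{X}}(\domi{u},\domi{p})$ that were mutually independent under $q$ remain so after the substitution, which is what the product rule requires.
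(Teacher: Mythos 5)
Your proposal is correct and follows essentially the same route as the paper's proof: identify Boolean values with $0$ and $1$ so that $\neg$ and $\wedge$ become $1-(\cdot)$ and products, use that the expectation of a Bernoulli variable is the probability it equals $1$ (equivalently, the weighted sum over $\tilde{q}_{\mathfrak{g}}$), apply linearity for negation, and factor the expectation of the conjunction using the assumed mutual independence under the factorized $q$. Your additional remark that substituting $\mathfrak{f}\mapsto\domi{b}$ amounts to conditioning on one coordinate of a product distribution, and hence preserves the independence of the remaining facts, is a correct and welcome elaboration of a point the paper leaves implicit.
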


\begin{proof}
Observe that, for a Bernoulli random variable $X$, $\mathbb{E}\left[X\right] = \prob\left(X = 1\right)$. Recall also that $\mathbb{E}\left[XY\right] = \mathbb{E}\left[X\right]\mathbb{E}\left[Y\right]$, whenever $X$ and $Y$ are mutually independent. With these observations and using standard probability laws, one can derive the equations above.
\end{proof}
%

\begin{lemma}
Let $\varphi \in \Lang$ and let $(\domi{u}, \domi{p}) \in U \times P$. If $\varphi^{\mathfrak{X}}\left(\domi{u}, \domi{p}\right)$ is diverse, then $\meanfield{}{\mathfrak{f} \mapsto \domi{b}}{\varphi^\mathfrak{X}\left(\domi{u}, \domi{p}\right)}$ can be computed using only the equations from Lemma~\ref{lem:expectations_flas}.
\label{lem:diverse_expectations}
\end{lemma}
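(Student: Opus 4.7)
The plan is to proceed by structural induction on $\varphi$, restricted to the connectives $\neg$ and $\land$ (since the paper already noted that these form a complete set for $\Lang$, any other Boolean operator can be rewritten away as a preprocessing step without affecting diversity).

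For the base case, $\varphi^{\mathfrak{X}}(\domi{u},\domi{p})$ is itself a random fact $\mathfrak{g}$ (after substituting $\domi{u}$ and $\domi{p}$ into the atomic formula). The expectation is then computed directly by the first equation of Lemma~\ref{lem:expectations_flas}, reading off whether $\mathfrak{g}$ coincides with the pinned fact $\mathfrak{f}$ or not. For the negation step, if $\varphi = \neg\psi$ and $\varphi^{\mathfrak{X}}(\domi{u},\domi{p})$ is diverse, then so is $\psi^{\mathfrak{X}}(\domi{u},\domi{p})$ (same set of random facts, each occurring once); the induction hypothesis gives us $\meanfield{V \neq \mathfrak{f}}{\mathfrak{f}\mapsto\domi{b}}{\psi^{\mathfrak{X}}(\domi{u},\domi{p})}$, and the second equation of Lemma~\ref{lem:expectations_flas} finishes the case.

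The conjunction step is the only subtle one. If $\varphi = \bigwedge_i \psi_i$ and $\varphi^{\mathfrak{X}}(\domi{u},\domi{p})$ is diverse, then by Definition~\ref{def:diverse_rv} each random fact occurs at most once across the whole expression, so the sets $\atomfacts{\psi_i}{\mathfrak{I}}$ are pairwise disjoint and each $\psi_i^{\mathfrak{X}}(\domi{u},\domi{p})$ is itself diverse. The key observation is that under the mean-field distribution $q$ from Equation~\ref{eq:post_approx_dist}, the random facts are mutually independent by construction, so any family of random variables built from disjoint sets of random facts is also mutually independent under $q$. This is exactly the hypothesis needed to invoke the third equation of Lemma~\ref{lem:expectations_flas}, which reduces the expectation of the conjunction to a product of expectations of the individual $\psi_i^{\mathfrak{X}}(\domi{u},\domi{p})$; each of those is handled by the induction hypothesis.

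The main obstacle, and the only place where diversity is really used, is precisely this independence argument in the conjunction step: without diversity a shared random fact would appear in two conjuncts and their product would not factor through $q$, breaking the third equation of Lemma~\ref{lem:expectations_flas}. Once that step is justified, the induction closes cleanly, and the whole computation is simply a recursive application of the four equations, which is what the statement claims.
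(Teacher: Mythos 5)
Your proof is correct and follows essentially the same route as the paper: the paper's (one-line) argument is precisely ``induction on $\varphi$, using that distinct random facts are independent under $q$ because of $q$'s factorization,'' and your expanded version — disjointness of the conjuncts' random-fact sets being the exact consequence of diversity that licenses the product rule — is the right way to fill in that sketch. Your explicit observation that functions of disjoint sets of random facts remain mutually independent under $q$ is the same fact the paper later isolates and proves as Lemma~\ref{lem:unrelated_independent}, so nothing is missing.
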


This lemma is proved by induction on $\varphi$ and by recalling that any two different random facts are independent under the distribution $q$, which follows from the way that the distribution $q$ is factorized.

%


\begin{corollary}
\begin{equation*}
\meanfield{V \neq \mathfrak{f}}{\mathfrak{f} \mapsto \domi{b}}{\lossfunc\left(\Auth, \mathfrak{X}; \varphi\right)} =\sum_{(\domi{u}, \domi{p}) \in U \times P} \left|\mathit{Auth}(\domi{u}, \domi{p}) - \meanfield{V \neq \mathfrak{f}}{\mathfrak{f} \mapsto \domi{b}}{\varphi^\mathfrak{X}(\domi{u}, \domi{p})}\right|.
\end{equation*}
\label{cor:expectation_loss_func}
\end{corollary}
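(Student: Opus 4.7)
The plan is to reduce the statement to linearity of expectation together with a small case analysis that eliminates the absolute value. First, I would invoke the linearity of the conditional-style expectation operator $\meanfield{V \neq \mathfrak{f}}{\mathfrak{f} \mapsto \domi{b}}{\cdot}$ (which is a genuine expectation under the distribution $q$ with the fact $\mathfrak{f}$ pinned to $\domi{b}$) to commute it with the finite sum over $(\domi{u}, \domi{p}) \in U \times P$ in the definition of $\lossfunc$. This reduces the claim to proving, for each fixed $(\domi{u}, \domi{p})$, the single-term identity
\begin{equation*}
\meanfield{V \neq \mathfrak{f}}{\mathfrak{f} \mapsto \domi{b}}{\left|\Auth(\domi{u},\domi{p}) - \varphi^{\mathfrak{X}}(\domi{u},\domi{p})\right|} = \left|\Auth(\domi{u},\domi{p}) - \meanfield{V \neq \mathfrak{f}}{\mathfrak{f} \mapsto \domi{b}}{\varphi^\mathfrak{X}(\domi{u}, \domi{p})}\right|.
\end{equation*}

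Next, I would fix $(\domi{u}, \domi{p})$ and split into two cases according to the value of the constant $\Auth(\domi{u},\domi{p}) \in \{0,1\}$. Recall that $\varphi^{\mathfrak{X}}(\domi{u},\domi{p})$, being built from a quantifier-free formula applied to a structure, is $\{0,1\}$-valued. If $\Auth(\domi{u},\domi{p}) = 0$, then pointwise $|\Auth(\domi{u},\domi{p}) - \varphi^{\mathfrak{X}}(\domi{u},\domi{p})| = \varphi^{\mathfrak{X}}(\domi{u},\domi{p})$; if $\Auth(\domi{u},\domi{p}) = 1$, then pointwise $|\Auth(\domi{u},\domi{p}) - \varphi^{\mathfrak{X}}(\domi{u},\domi{p})| = 1 - \varphi^{\mathfrak{X}}(\domi{u},\domi{p})$. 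In either case the integrand is an affine function of $\varphi^{\mathfrak{X}}(\domi{u},\domi{p})$, so linearity of expectation moves the operator through it.

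Finally, to close the case analysis I would use that $\meanfield{V \neq \mathfrak{f}}{\mathfrak{f} \mapsto \domi{b}}{\varphi^\mathfrak{X}(\domi{u}, \domi{p})} \in [0,1]$, since it is the expectation of a $\{0,1\}$-valued random variable. This guarantees that $1 - \meanfield{}{}{\varphi^\mathfrak{X}} = |1 - \meanfield{}{}{\varphi^\mathfrak{X}}|$ in the first subcase and $\meanfield{}{}{\varphi^\mathfrak{X}} = |0 - \meanfield{}{}{\varphi^\mathfrak{X}}|$ in the second, matching exactly the right-hand side of the single-term identity.

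There is no real obstacle in this proof: the only thing worth double-checking is the implicit claim that the expectation operator $\meanfield{V \neq \mathfrak{f}}{\mathfrak{f} \mapsto \domi{b}}{\cdot}$, defined in Equation~\ref{eq:mf_approx} as a finite sum weighted by the $\tilde{q}_\mathfrak{g}$, is indeed linear — but this is immediate from its definition as a finite sum over policies $\mathfrak{I}$ with non-negative weights. Everything else is case analysis, so the proof will be short.
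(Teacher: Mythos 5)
Your proof is correct and follows essentially the same route as the paper's: the paper likewise splits the sum according to whether $(\domi{u},\domi{p}) \in \Auth$, rewrites each term as an affine function of $\varphi^{\mathfrak{X}}(\domi{u},\domi{p})$, and concludes by linearity of expectation. Your version is in fact slightly more careful, since you explicitly justify re-introducing the absolute value on the right-hand side via $\meanfield{V \neq \mathfrak{f}}{\mathfrak{f} \mapsto \domi{b}}{\varphi^{\mathfrak{X}}(\domi{u},\domi{p})} \in [0,1]$, a step the paper leaves implicit.
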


\begin{proof}
$\lossfunc\left(\Auth, \mathfrak{X}; \varphi\right)$ can be rewritten as follows:
$$
\sum_{(\domi{u}, \domi{p}) \in \Auth} \left(1 - \varphi^\mathfrak{X}\left(\domi{u}, \domi{p}\right)\right) + \sum_{(\domi{u}, \domi{p}) \in U \times P \setminus \Auth} \varphi^{\mathfrak{X}}(\domi{u}, \domi{p}). 
$$
The result follows from the linearity of expectation.
\end{proof}

Lemma~\ref{lem:expectations_flas} follows from Lemma~\ref{lem:diverse_expectations} and Corollary~\ref{cor:expectation_loss_func}.

\section{Policy miners built using Unicorn}
\label{sec:app_examples}

We present here technical details on how we built policy miners for different policy languages using {\polymver}.

\subsection{ABAC policies}
\label{sub:abac}

\newcommand{\SortAttVals}{\mathit{AttVals}}
\newcommand{\SortRules}{\mathit{Rules}}
\newcommand{\NumAttVals}{M}
\newcommand{\NumRules}{N}
ABAC is an access control paradigm where permissions are assigned to users depending on the users' and the permission's \emph{attribute values}. An \emph{ABAC policy} is a set of \emph{rules}. A rule is a set of attribute values. Recall that a request $(u,p)$ is a pair consisting of a user $u \in U$ and a permission $p \in P$. A rule \emph{assigns} a permission $p$ to a user $u$ if $u$ and $p$'s permission attribute values contain all of the rule's attribute values. A policy \emph{assigns} $p$ to $u$ if some rule in the policy assigns $p$ to $u$.

When mining ABAC policies, we are not only given a permission assignment $\Auth \subseteq U \times P$, but also \emph{attribute assignment relations} $\mathit{UAtt} \subseteq U \times \mathit{AttVals}$ and $\mathit{PAtt} \subseteq P \times \mathit{AttVals}$ that describe what attribute values each user and each permission has. Here, $\mathit{AttVals}$ denotes the set of possible attribute values. We refer to previous work for a discussion on how to obtain these attribute assignment relations~\cite{xu2015mining, cotrini2018rhapsody}.\looseness=-1

The objective in mining ABAC policies is to find a set of rules that assigns permissions to users based on the users' and the permissions' attribute values. We explain next how to build a policy miner for ABAC using {\polymver}.
Let $\SortRules$ and $\SortAttVals$ be sorts for rules and attribute values, respectively. Let $\mathit{RUA}$ and $\mathit{RPA}$ be flexible binary relation symbols of type $\SortRules \times \SortAttVals$. For $\NumAttVals, \NumRules \in \mathbb{N}$, the formula $\varphi^{\mathit{ABAC}}_{\NumAttVals, \NumRules}(\vari{u}, \vari{p})$ below is a template formula for ABAC:
\begin{equation}
\bigvee_{i \leq \NumRules} \bigwedge_{j \leq {\NumAttVals}} 
\left(
\begin{array}{l}
\left(\mathit{RUA}\left(\ricons{s}_i, \ricons{a}_j\right) \to \underline{\mathit{UAtt}}(\vari{u}, \ricons{a}_j)\right) \land\\
\left(\mathit{RPA}\left(\ricons{s}_i, \ricons{a}_j\right) \to \underline{\mathit{PAtt}}(\vari{p}, \ricons{a}_j)\right)
\end{array}
\right).
\label{fla:abac_auth}
\end{equation}

In this formula, $\ricons{s}_i$, for $i \leq \NumRules$, is a rigid constant symbol of sort $\SortRules$ denoting a rule. The symbol $\ricons{a}_j$, for $j \leq \NumAttVals$, is a rigid constant denoting an attribute value. The formula $\mathit{RUA}\left(\ricons{s}_i, \ricons{a}_j\right)$ describes whether rule $\ricons{s}_i$ requires the user to have the attribute value $\ricons{a}_j$. The formula $\mathit{RPA}\left(\ricons{s}_i, \ricons{a}_j\right)$ describes an analogous requirement. We use two rigid relation symbols $\underline{\mathit{UAtt}}$ and $\underline{\mathit{PAtt}}$ to represent the attribute assignment relations. The formulas $\underline{\mathit{UAtt}}\left(\vari{u}, \ricons{a}_j\right)$ and $\underline{\mathit{PAtt}}\left(\vari{p}, \ricons{a}_j\right)$ describe whether $\vari{u}$ and $\vari{p}$, respectively, are assigned the attribute value $\ricons{a}_j$. Intuitively, the formula $\varphi^{\mathit{ABAC}}_{\NumAttVals, \NumRules}(\vari{u}, \vari{p})$ is satisfied by $(\domi{u}, \domi{p}) \in U \times P$ if, for some rule $\ricons{s}_i$, $(\domi{u}, \domi{p})$ possesses all user and permission attribute values required by $\ricons{s}_i$ under $\mathit{RUA}$ and $\mathit{RPA}$.\looseness=-1

Observe that a policy miner does not need to find an interpretation for the symbols $\underline{\mathit{UAtt}}$ and $\underline{\mathit{PAtt}}$ because \emph{the organization already has interpretations for those symbols}. When mining ABAC policies, the organization already knows what attribute values each user and each permission has and wants to mine from them an ABAC policy. The miner only needs to specify which attribute values must be required by each rule. This is why we specify the attribute assignment relations with rigid symbols.\looseness=-1

We use $\lossfunc(\Auth, \mathfrak{I}; \varphi^{\mathit{ABAC}}_{\NumAttVals, \NumRules})$ as the objective function. Observe that every atomic formula occurs at most once in $\varphi^{\mathit{ABAC}}_{\NumAttVals, \NumRules}$, so, by Corollary~\ref{cor:occurs_once}, we can use Lemma~\ref{lem:expectations_flas} to compute all relevant expectations.

Finally, we can also add a regularization term to $\lossfunc(\Auth, \mathfrak{I}; \varphi^{\mathit{ABAC}}_{\NumAttVals, \NumRules})$ to avoid mining policies with too many rules or unnecessarily large rules. One such regularization term is\looseness=-1
$$\left\|\mathfrak{I}\right\| = \sum_{i \leq \NumRules}\sum_{j \leq \NumAttVals} \mathit{RUA}^\mathfrak{I}\left(\ricons{s}_i, \ricons{a}_j\right) + \mathit{RPA}^\mathfrak{I}\left(\ricons{s}_i, \ricons{a}_j\right).$$
The expression $\left\|\mathfrak{I}\right\|$ counts the number of attribute values required by each rule, which is a common way to measure an ABAC policy's complexity~\cite{xu2015mining, cotrini2018rhapsody}. If we instead use the objective function $\lambda \left\|\mathfrak{I}\right\| + \lossfunc(\Auth, \mathfrak{I}; \varphi^{\mathit{ABAC}}_{\NumAttVals, \NumRules})$, then the objective function penalizes not only policies that differ substantially from $\mathit{Auth}$, but also policies that are too complex. Observe that $\left\|\mathfrak{X}\right\|$ is diverse. Hence, we can use the linearity of expectation and Lemma~\ref{lem:expectations_flas} to compute all expectations needed to implement Algorithm~\ref{algo:varinfer}.

\subsection{ABAC policies from logs}
\label{sub:abac_logs}

Some miners are geared towards mining policies from logs of access requests~\cite{molloy2012generative,xu2014mining,cotrini2018rhapsody}. We now present an objective function that can be used to mine ABAC policies from access logs, instead of permission assignments. We let $\varphi := \varphi^{\mathit{ABAC}}_{M, N}$ for the rest of this subsection.\looseness=-1

A log $G$ is a disjoint union of two subsets $A$ and $D$ of $U \times P$, denoting the requests that have been \emph{authorized} and \emph{denied}, respectively.

In the case of ABAC, a policy mined from a log should aim to fulfill three requirements. The policy should be \emph{succinct}, \emph{generalize well}, and be \emph{precise}~\cite{cotrini2018rhapsody}. Therefore, we define an objective function $\lossfunc'_{\mathit{ABAC}}\left(G, \mathfrak{I}\right)$ as the sum
\begin{equation}
\lossfunc'_{\mathit{ABAC}}\left(G, \mathfrak{I}\right) = \lambda_0\left\|\mathfrak{I}\right\| + \lossfunc_1\left(G, \mathfrak{I}\right) + \lossfunc_2\left(G, \mathfrak{I}\right).
\label{eq:abac_loss_logs}
\end{equation}

The term $\left\|\mathfrak{I}\right\|$ is as defined in Section~\ref{sub:abac} and aims to make the policy succinct by penalizing complex policies. The term $\lossfunc_1\left(G, \mathfrak{I}\right)$ aims to make the mined policy generalize well and is defined as
\begin{align*}
\lossfunc_1\left(G, \mathfrak{I}\right) &= \lambda_{1, 1}\sum_{(\domi{u}, \domi{p}) \in A} \left(1 - \varphi^\mathfrak{I}(\domi{u}, \domi{p})\right) +\\
&\quad \lambda_{1, 2} \sum_{(\domi{u}, \domi{p}) \in D}\varphi^\mathfrak{I}(\domi{u}, \domi{p}).
\end{align*}

Finally, the function $\lossfunc_2\left(G, \mathfrak{I}\right)$ aims to make the mined policy precise by penalizing policies that authorize too many requests that are not in the log.

\begin{equation*}
\lossfunc_2\left(G, \mathfrak{I}\right) = \lambda_2 \sum_{(\domi{u}, \domi{p}) \in U \times P \setminus G} \varphi^\mathfrak{I}(\domi{u}, \domi{p}).
\end{equation*}

One can show that $\varphi^{\mathfrak{X}}\left(\domi{u}, \domi{p}\right)$ is diverse, for any $(\domi{u}, \domi{p}) \in U \times P$. Therefore, we can compute $\meanfield{}{\mathfrak{f} \mapsto \domi{b}}{\lossfunc'_{\mathit{ABAC}}\left(G, \mathfrak{X}\right)}$ using only the linearity of expectation and Lemma~\ref{lem:expectations_flas}.


\subsection{Business-meaningful RBAC policies}
\label{sub:business_rbac}

Frank et. al.~\cite{frank2013role} developed a probabilistic policy miner for RBAC policies that incorporated business information. Aside from a permission assignment, the miner takes as input an \emph{attribute-assignment relation} $\mathit{AA} \subseteq U \times \mathit{AVal}$, where $\mathit{AVal}$ denotes all possible combination of attribute values. It is assumed that each user is assigned exactly one combination of attribute values.\looseness=-1

This miner grants similar sets of roles to users that have similar attribute values. For this, it uses the following formula $\Delta(\domi{u},\domi{u}', \mathfrak{I})$ that measures the disagreement between the roles that a policy $\mathfrak{I}$ assigns to two users $\domi{u}$ and $\domi{u}'$:
\begin{align*}
&\Delta(\domi{u},\domi{u}', \mathfrak{I}) = \nonumber \\
&\quad \sum_{i \leq \NumRoles}\mathit{UA}^\mathfrak{I}(\domi{u}, \ricons{r}_i)\left(1 - 2\mathit{UA}^\mathfrak{I}(\domi{u}, \ricons{r}_i)\mathit{UA}^\mathfrak{I}(\domi{u}', \ricons{r}_i)\right).
\end{align*}

The formula $\left\|\mathfrak{I}\right\|$ below shows how Frank et al.'s miner measures an RBAC policy's complexity. The complexity increases whenever two users with the same combination of attribute values get assigned significantly different sets of roles.
\small
\begin{equation*}
\left\|\mathfrak{I}\right\| = \frac{1}{N} \sum_{\domi{u}, \domi{u}' \in U} \; \sum_{\ricons{a} \in \mathit{AVal}}\underline{\mathit{AA}}(\domi{u}, \ricons{a})\underline{\mathit{AA}}(\domi{u}', \ricons{a})\Delta(\domi{u},\domi{u}',\mathfrak{I}).
\label{eq:frank_total_disag}
\end{equation*}
\normalsize
Here, $N$ denotes the total of users. Note that $\underline{\mathit{AA}}$ is a rigid relation symbol representing $\mathit{AA}$. Its interpretation is therefore fixed and not computed by the policy miner. 

To mine business-meaningful RBAC policies, we use the objective function $
\lambda \left\|\mathfrak{I}\right\| + \lossfunc\left(\Auth, \mathfrak{I}; \varphi^{\mathit{RBAC}}_{N}\right)
$, where $\lambda > 0$ is a trade-off hyper-parameter. Observe that this objective function penalizes the following types of policies.
\begin{itemize}
\item Policies that assign significantly different sets of roles to users with the same attribute values.
\item Policies whose assignment of permissions to users substantially differs from the assignment given by $\mathit{Auth}$.
\end{itemize}

The random variable $\left\|\mathfrak{X}\right\|$ is, however, not diverse. This is because, for $i \leq \NumRoles$, the random fact $\mathit{UA}^\mathfrak{X}(\domi{u}, \ricons{r}_i)$ occurs more than once in $\Delta(\domi{u},\domi{u}', \mathfrak{X})$. Nonetheless, observe that
\begin{align*}
&\Delta(\domi{u},\domi{u}', \mathfrak{X}) = \nonumber \\
&\quad \sum_{i \leq \NumRoles}\mathit{UA}^\mathfrak{X}(\domi{u}, \ricons{r}_i) - 2\left(\mathit{UA}^\mathfrak{X}(\domi{u}, \ricons{r}_i)\right)^2\mathit{UA}^\mathfrak{X}(\domi{u}', \ricons{r}_i).
\end{align*}
One can then compute $\meanfield{}{\mathfrak{f} \to \domi{b}}{\Delta(\domi{u},\domi{u}', \mathfrak{X})}$ by using the linearity of expectation and the fact that $\mathbb{E}\left[X^n\right] = \left(\mathbb{E}\left[X\right]\right)^n$, for $n \in \mathbb{N}$ and $X$ a Bernoulli random variable. Hence, 
\begin{align*}
&\meanfield{}{\mathfrak{f} \to \domi{b}}{\Delta(\domi{u},\domi{u}', \mathfrak{X})} = \nonumber \\
&\quad \sum_{i \leq \NumRoles}\left(
\begin{array}{l}
\meanfield{}{\mathfrak{f} \to \domi{b}}{\mathit{UA}^\mathfrak{X}(\domi{u}, \ricons{r}_i)} - \\
2\left(\meanfield{}{\mathfrak{f} \to \domi{b}}{\mathit{UA}^\mathfrak{X}(\domi{u}, \ricons{r}_i)}\right)^2\meanfield{}{\mathfrak{f} \to \domi{b}}{\mathit{UA}^\mathfrak{X}(\domi{u}', \ricons{r}_i)}
\end{array}
\right).
\end{align*}
One can check that this observation and Lemma~\ref{lem:expectations_flas} suffice to compute the expectations necessary for Algorithm~\ref{algo:varinfer}.

\section{Mining XACML policies}
\label{sub:xacml}

Although XACML is the de facto standard for access control specification, no algorithm has previously been proposed for mining XACML policies. We now illustrate how, using {\polymver}, we have built the first XACML policy miner.


\subsection{Background}

\newcommand{\authDec}{\mathit{allow}}
\newcommand{\denyDec}{\mathit{deny}}
\newcommand{\allowov}{\mathit{AllowOv}}
\newcommand{\denyov}{\mathit{DenyOv}}
\newcommand{\firstapp}{\mathit{FirstApp}}
\newcommand{\XAttVals}{\mathit{AVals}}
\newcommand{\RequestSort}{\sortFont{REQS}}
\newcommand{\PolicySort}{\sortFont{POLS}}
\newcommand{\RuleSort}{\sortFont{RULES}}
\newcommand{\AttValSort}{\sortFont{AVALS}}

{\textbf{XACML syntax.}} To simplify the presentation, we use a reduced version of XACML, given as a BNF grammar below. However, our approach extends to the core XACML. Moreover, our reduced XACML is still powerful enough to express Continue~\cite{krishnamurthi2003continue, fisler2005verification}, a benchmark XACML policy used for policy analysis.
\begin{equation*}
\begin{array}{rcl}
\langle\texttt{Dec}\rangle &::= &\authDec \mid \denyDec\\
\langle\texttt{Rule}\rangle &::= &\left(\langle\texttt{Dec}\rangle, \alpha \right)\\
\langle\texttt{Comb}\rangle &::= &\firstapp \mid \allowov \mid \denyov\\
\langle\texttt{Pol}\rangle &::= &\left(\langle\texttt{Comb}\rangle, \left(\langle\texttt{Pol}\rangle^* \mid \langle\texttt{Rule}\rangle^*\right)\right)
\end{array}
\label{eq:xacml_bnf}
\end{equation*}

Fix a set $\XAttVals$ of attribute values. An \emph{XACML rule} is a pair $(\delta, \alpha)$, where $\delta \in \{\authDec, \denyDec\}$ is the \emph{rule's decision} and $\alpha$ is a subset of $\XAttVals$. An XACML \emph{policy} is a pair $(\kappa, \bar{\pi})$, where $\kappa \in \{\firstapp, \allowov, \denyov\}$ is a \emph{combination algorithm} and $\bar{\pi}$ is either a list of policies or a list of XACML rules. $\firstapp$, $\allowov$, $\denyov$ denote XACML's standard policy combination algorithms. We explain later how they work. For a policy $\pi$, we denote its combination algorithm by $\mathit{Comb}\left(\pi\right)$ and, for a rule $r$, we denote its decision by $\mathit{Dec}(r)$.\looseness=-1

{\textbf{XACML semantics.}} We now recall XACML's semantics. A \emph{request} is a subset of $\XAttVals$ denoting the attribute values that a subject $\domi{s}$, an action $\domi{a}$, and an object $\domi{o}$ satisfy when $\domi{s}$ attempts to execute $\domi{a}$ on $\domi{o}$. We denote by $2^\XAttVals$ the set of requests. A request \emph{satisfies} a rule $(\delta, \alpha)$ if the request contains all attributes in $\alpha$. In this case, if $\delta = \authDec$, then we say that the rule \emph{authorizes the request}; otherwise, we say that the rule \emph{denies the request}.

A policy $\pi$ of the form $(\allowov, \left(\pi'_1, \ldots, \pi'_\ell\right))$ authorizes a request $\domi{z}$ if there is an $i \leq \ell$ such that $\pi'_i$ authorizes $z$. The policy $\pi$ \emph{denies} $\domi{z}$ if no $\pi'_i$, for $i \leq \ell$, authorizes $\domi{z}$, but some $\pi'_j$, for $j \leq \ell$, denies it.

A policy $\pi$ of the form $(\denyov, \left(\pi'_1, \ldots, \pi'_\ell\right))$ denies a request if some $\pi'_i$ denies it. The policy authorizes the request if no $\pi'_i$ denies it, but some $\pi'_i$ authorizes it.

A policy $\pi$ of the form $(\firstapp, \left(\pi'_1, \ldots, \pi'_\ell\right))$ authorizes a request if there is an $i \leq \ell$ such that $\pi'_i$ authorizes it and $\pi'_j$, for $j < i$, neither authorizes it nor denies it. The policy denies a request if there is $i \leq \ell$ such that $\pi'_i$ denies it and $\pi'_j$, for $j < i$, neither authorizes it nor denies it.

\subsection{Auxiliary definitions}

For a policy $\pi =\left(\kappa, \left(\pi'_1, \ldots, \pi'_\ell\right)\right)$, we call $\pi'_i$ a \emph{child} of $\pi$. A policy is a \emph{descendant} of $\pi$ if it is a child of $\pi$ or is a descendant of a child of $\pi$.\looseness=-1

A policy $\pi$ has \emph{breadth} $N \in \mathbb{N}$ if $\ell \leq N$ and each of $\pi$'s children is either a rule or has breadth $N$. A policy $\pi$ has \emph{depth} is $M \in \mathbb{N}$ (i) if $M = 1$ and each of its children is a rule, or (ii) if $M > 1$ and some child of $\pi$ has depth $M - 1$ and the rest have depth at most $M - 1$.\looseness=-1

Two formulas $\psi_1, \psi_2 \in \Lang$ are \emph{mutually exclusive} if there is no $\mathfrak{I}$ and no $\domi{z} \in 2^{\XAttVals}$ such that both $\psi_1^\mathfrak{I}\left(\domi{z}\right)$ and $\psi_2^\mathfrak{I}\left(\domi{z}\right)$ hold. When $\psi_1$ and $\psi_2$ are mutually exclusive, we write $\psi_1 \oplus \psi_2$ instead of $\psi_1 \lor \psi_2$.


\subsection{A template formula for XACML}


For $M, N \in \mathbb{N}$, we present a template formula for the language of all XACML policies of depth and breadth at most $M$ and $N$, respectively.

Let $\mathcal{S}$ be the set of all $N$-ary sequences of length at most $M$ and let $\epsilon \in \mathcal{S}$ be the empty sequence. For $j \in \{0, \ldots, N-1\}$, we denote by $\sigma \triangleright j$ the result of appending $j$ to $\sigma$ and by $j \triangleleft \sigma$ the result of prepending $j$ to $\sigma$.

Let $\RequestSort$ be a sort representing all requests, $\AttValSort$ be a sort representing all attribute values, and $\PolicySort$ a sort representing policies and rules. For each $\sigma \in \mathcal{S}$, define a rigid constant $\ricons{y}_\sigma$ symbol of sort $\PolicySort$ such that $\ricons{y}_\sigma \neq \ricons{y}_{\sigma'}$, whenever $\sigma \neq \sigma'$.

The set of rigid constants $\{\ricons{y}_\sigma \mid \sigma \in \mathcal{S}\}$ are intended to represent a tree of XACML policies and rules. The constant $\ricons{y}_\epsilon$ is the root policy. For $\sigma \in \mathcal{S}$ with length less than $M$ and $j \in \{0, \ldots, N-1\}$, the constant $\ricons{y}_{\sigma \triangleright j}$ represents one of $\ricons{y}_{\sigma}$'s children.

Let $\vari{z}$ be a variable of sort $\RequestSort$. The formula $\varphi^{\text{XACML}}_{M, N}\left(\ricons{y}_{\epsilon}, \vari{z}\right)$ below is a template formula for the XACML fragment introduced above. We explain its main parts.

\newcommand{\xacmlAllows}{\texttt{allows}}
\newcommand{\xacmlDenies}{\texttt{denies}}
\newcommand{\lxacmlAllows}{\underline{\texttt{XAllows}}}
\newcommand{\lxacmlDenies}{\underline{\texttt{XDenies}}}
\newcommand{\ruleAllowFla}{\texttt{allowsRule}} 
\newcommand{\isRule}{\mathit{XIsRule}}
\newcommand{\polAllowFla}{\texttt{allowsPol}} 
\newcommand{\polNAFla}{\texttt{NA}} 
\newcommand{\ruleDenyFla}{\texttt{deniesRule}} 
\newcommand{\polDenyFla}{\texttt{deniesPol}} 
\newcommand{\xcomb}{\textit{Comb}}
\newcommand{\lxcomb}{\textit{XComb}}
\newcommand{\xdec}{\textit{Dec}}
\newcommand{\lxdec}{\textit{XDec}}
\newcommand{\activePol}{\mathit{XActive}}
\newcommand{\XRqA}{\mathit{XRequiresAVal}}
\newcommand{\XHasAttVal}{\underline{\mathit{hasAttVal}}}

\begin{itemize}
\item We define signature symbols that represent all terminal symbols in the BNF grammar above. For example, we define two rigid constant symbols $\lxacmlAllows$ and $\lxacmlDenies$ that represent the decisions $\authDec$ and $\denyDec$. We define two flexible function symbols $\lxdec$ and $\lxcomb$. For a rigid constant $\ricons{y}_\sigma$, $\lxdec\left(\ricons{y}_\sigma\right)$ denotes the decision of the rule represented by $\ricons{y}_\sigma$. Similarly, $\lxcomb\left(\ricons{y}_\sigma\right)$ denotes the combination algorithm of the policy represented by $\ricons{y}_\sigma$.
\item The formula $\xacmlAllows\left(\ricons{y}_\sigma, \vari{z}\right)$ holds if $\ricons{y}_\sigma$ authorizes the request represented by $\vari{z}$. The formula $\xacmlDenies\left(\ricons{y}_\sigma, \vari{z}\right)$ holds if $\ricons{y}_\sigma$ denies the request represented by $\vari{z}$ and is defined analogously. Observe that $\xacmlAllows\left(\ricons{y}_\sigma, \vari{z}\right)$ and $\xacmlDenies\left(\ricons{y}_\sigma, \vari{z}\right)$ denote formulas. Hence, $\xacmlAllows$ and $\xacmlDenies$ are not symbols in the signature we use to specify $\varphi^{\text{XACML}}_{M, N}$.
\item $\activePol$ is a flexible relation symbol and $\activePol\left(\ricons{y}_\sigma\right)$ holds if $\ricons{y}_{\sigma}$ is a descendant of $\ricons{y}_\epsilon$.
\item The formula $\polNAFla\left(\ricons{y}_\sigma, \vari{z}\right)$ holds if $\ricons{y}_\sigma$ neither authorizes nor denies the request represented by $\vari{z}$. It can be expressed in $\Lang$ as follows:
\begin{equation*}
\polNAFla\left(\ricons{y}_\sigma, \vari{z}\right) := \neg \activePol\left(\ricons{y}_\sigma\right) \lor \bigwedge_{j \leq N} \polNAFla\left(\ricons{y}_{\sigma \triangleright j}, \vari{z}\right).
\end{equation*}
\item The formula $\vari{z} \vDash \ricons{y}_\sigma$ holds if all attributes required by $\ricons{y}_\sigma$ are contained by the request represented by $\vari{z}$. This formula can be expressed in $\Lang$ as follows:
\begin{equation*}
\bigwedge_{\domi{a} \in \XAttVals} \left(\XRqA\left(\ricons{y}_\sigma, \ricons{a}\right) \to \XHasAttVal\left(\vari{z}, \ricons{a}\right)\right),
\end{equation*}
where $\XRqA$ is a flexible relation symbol and $\XHasAttVal$ and $\ricons{a}$, for $\domi{a} \in \XAttVals$, are rigid symbols. For a policy $\mathfrak{I}$, $\XRqA^{\mathfrak{I}}\left(\ricons{y}_\sigma, \ricons{a}\right)$ holds if $\ricons{y}_\sigma$ is a rule and requires attribute $\domi{a}$ to be satisfied. The formula $\XHasAttVal\left(\vari{z}, \ricons{a}\right)$ checks if the request contains attribute $\domi{a}$.
\end{itemize}

\begin{align*}
\varphi^{\text{XACML}}_{M, N}\left(\ricons{y}_{\epsilon}, \vari{z}\right) &:= \xacmlAllows\left(\ricons{y}_{\epsilon}, \vari{z}\right). \\
\xacmlAllows\left(\ricons{y}_{\sigma}, \vari{z}\right) &:= \left(\isRule\left(\ricons{y}_{\sigma}\right) \to \ruleAllowFla\left(\ricons{y}_{\sigma}, \vari{z}\right)\right) \land \\
& \quad \left(\neg \isRule\left(\ricons{y}_{\sigma}\right) \to \polAllowFla\left(\ricons{y}_{\sigma}, \vari{z}\right)\right).\\
\ruleAllowFla\left(\ricons{y}_\sigma, \vari{z}\right) &:= \activePol\left(\ricons{y}_\sigma\right) \land \lxdec\left(\ricons{y}_\sigma\right) = \underline{\authDec} \,\land \vari{z} \vDash \ricons{y}_\sigma.\\
\end{align*}
\vspace{-30pt}
\begin{align*}
&\polAllowFla\left(\ricons{y}_\sigma, \vari{z}\right) := \activePol\left(\ricons{y}_\sigma\right) \land\\[5pt]
&\quad \left[\begin{array}{l}
	\left(\begin{array}{l}
	\lxcomb\left(\ricons{y}_\sigma\right) = \underline{\allowov} \, \land\\ 
	\bigvee_{j \leq N} \xacmlAllows\left(\ricons{y}_{\sigma \triangleright j}, \vari{z}\right)
\end{array}\right) \oplus\\[15pt]
\left(\begin{array}{l}
	\lxcomb\left(\ricons{y}_\sigma\right) = \underline{\firstapp} \, \land\\[5pt]
	\bigoplus_{j \leq N} \left(
	\begin{array}{l}
		\bigwedge_{i < j} \polNAFla\left(\ricons{y}_{\sigma \triangleright i}, \vari{z}\right) \land\\
		\xacmlAllows\left(\ricons{y}_{\sigma \triangleright j}, \vari{z}\right)
	\end{array}\right)
\end{array}\right) \oplus\\[25pt]
\left(\begin{array}{l}
	\lxcomb\left(\ricons{y}_\sigma\right) = \underline{\denyov} \, \land\\[5pt]
	\displaystyle\bigoplus_{j \leq N} \left( 
	\begin{array}{l}
		\bigwedge_{i < j} \polNAFla\left(\ricons{y}_{\sigma \triangleright i}, \vari{z}\right) \land\\
		\xacmlAllows\left(\ricons{y}_{\sigma \triangleright j}, \vari{z}\right) \land\\
		\bigwedge_{i < k} \neg \xacmlDenies\left(\ricons{y}_{\sigma \triangleright k}, \vari{z}\right)
	\end{array}
	\right)
\end{array}\right)
\end{array}\right].
\end{align*}
\begin{lemma}
Formula $\varphi^{\text{XACML}}_{M, N}$ is a template formula for the language of all XACML policies of depth and breadth at most $M$ and $N$, respectively.
\label{lem:xacml_task_1}
\end{lemma}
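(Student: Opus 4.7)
The plan is to exhibit a mapping $\mathcal{M}$ from interpretations to XACML policies of depth at most $M$ and breadth at most $N$, establish surjectivity, and then verify the semantic condition required by Definition~\ref{def:uni_template_formula} by structural induction on the tree of policy constants. Throughout, the intuition is that the rigid constants $\{\ricons{y}_\sigma \mid \sigma \in \mathcal{S}\}$ form a skeleton $N$-ary tree of depth $M$, and an interpretation $\mathfrak{I}$ carves out an actual XACML policy by selecting which nodes are active (via $\activePol^\mathfrak{I}$), which of these are rules (via $\isRule^\mathfrak{I}$), together with their decisions ($\lxdec^\mathfrak{I}$), combination algorithms ($\lxcomb^\mathfrak{I}$), and required attributes ($\XRqA^\mathfrak{I}$).

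First I would define $\mathcal{M}(\mathfrak{I})$ recursively on $\sigma$, starting from $\ricons{y}_\epsilon$. If $\isRule^\mathfrak{I}(\ricons{y}_\sigma)$ holds, then $\mathcal{M}$ assigns to $\ricons{y}_\sigma$ the rule $(\lxdec^\mathfrak{I}(\ricons{y}_\sigma), \{\domi{a} \in \XAttVals \mid \XRqA^\mathfrak{I}(\ricons{y}_\sigma, \ricons{a})\})$; otherwise it assigns the policy $(\lxcomb^\mathfrak{I}(\ricons{y}_\sigma), (\mathcal{M}(\mathfrak{I})_{\sigma \triangleright j})_{j \leq N, \, \activePol^\mathfrak{I}(\ricons{y}_{\sigma \triangleright j})})$, using only active children. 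I would treat inactive or maximally deep nodes as leaves so that the construction always terminates within depth $M$ and each node has at most $N$ children. Surjectivity then follows by the converse construction: given any XACML policy $\pi$ of depth $\leq M$ and breadth $\leq N$, label its tree of sub-policies and rules with distinct sequences in $\mathcal{S}$ rooted at $\epsilon$, and define $\mathfrak{I}$ to make $\activePol$, $\isRule$, $\lxdec$, $\lxcomb$, and $\XRqA$ agree with the labelled tree (and arbitrary elsewhere).

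The core of the argument is the semantic equivalence, which I would prove by induction on the length of $M - |\sigma|$, simultaneously establishing three statements for every $\sigma \in \mathcal{S}$ with $\activePol^\mathfrak{I}(\ricons{y}_\sigma)$ and every $\domi{z} \in 2^{\XAttVals}$: (a) $\xacmlAllows^\mathfrak{I}(\ricons{y}_\sigma, \domi{z})$ iff $\mathcal{M}(\mathfrak{I})_\sigma$ authorizes $\domi{z}$; (b) $\xacmlDenies^\mathfrak{I}(\ricons{y}_\sigma, \domi{z})$ iff $\mathcal{M}(\mathfrak{I})_\sigma$ denies $\domi{z}$; and (c) $\polNAFla^\mathfrak{I}(\ricons{y}_\sigma, \domi{z})$ iff $\mathcal{M}(\mathfrak{I})_\sigma$ is not applicable to $\domi{z}$ (neither authorizes nor denies). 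The base case treats rules: $\ruleAllowFla(\ricons{y}_\sigma, \domi{z})$ unfolds exactly to the XACML semantics of a rule being satisfied with decision $\underline{\authDec}$. The inductive step splits on $\lxcomb^\mathfrak{I}(\ricons{y}_\sigma)$ and matches, clause-by-clause, with the semantics of $\allowov$, $\firstapp$, and $\denyov$; the inductive hypothesis translates child-level $\xacmlAllows$, $\xacmlDenies$, and $\polNAFla$ into the corresponding semantic predicates on $\mathcal{M}(\mathfrak{I})_{\sigma \triangleright j}$. Finally, applying (a) at $\sigma = \epsilon$ yields precisely the template-formula condition $\domi{z} \in \left(\varphi^{\text{XACML}}_{M,N}\right)^\mathfrak{I}$ iff $\mathcal{M}(\mathfrak{I})$ authorizes $\domi{z}$.

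The main obstacle is the $\firstapp$ and $\denyov$ cases, because the XACML semantics for these algorithms is order-sensitive and requires explicit reference to siblings that are not applicable or not deny. I would need to verify that the disjuncts under $\bigoplus_{j \leq N}$ are genuinely mutually exclusive (so that the $\oplus$ is justified), that the inactive or absent children behave as ``not applicable'' under $\polNAFla$ so they are silently skipped in the combination, and that the nested conjunctions $\bigwedge_{i<j}\polNAFla(\ricons{y}_{\sigma \triangleright i}, \vari{z})$ (and, for $\denyov$, $\bigwedge_{i<k} \neg \xacmlDenies(\ricons{y}_{\sigma \triangleright k}, \vari{z})$) correctly encode ``the first applicable child authorizes'' and ``no sibling denies'', respectively. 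Once these bookkeeping lemmas are in place, the induction closes and the lemma follows.
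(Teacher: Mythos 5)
Your construction of $\mathcal{M}$, the labelling of the policy tree by sequences in $\mathcal{S}$, and the surjectivity argument match the paper's proof essentially step for step. The one point worth flagging is that you go further than the paper does: the paper's proof only defines $\mathcal{M}$ (via the auxiliary $\mathcal{M}'$) and verifies surjectivity, i.e.\ condition (i) of Definition~\ref{def:uni_template_formula}; it never explicitly checks condition (ii), the semantic equivalence $\domi{z} \in \left(\varphi^{\text{XACML}}_{M,N}\right)^{\mathfrak{I}}$ iff $\mathcal{M}(\mathfrak{I})$ authorizes $\domi{z}$. Your simultaneous induction on the three predicates (allows, denies, not-applicable), together with the bookkeeping for $\firstapp$ and $\denyov$ and the mutual exclusivity of the $\bigoplus$ disjuncts, is exactly the missing verification, so your proposal is the more complete argument. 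One small divergence: the paper's $\mathcal{M}'$ declares a node to be a rule based on the structural condition (``$\sigma$ has length $M$ or all children are inactive''), whereas you use $\isRule^{\mathfrak{I}}(\ricons{y}_\sigma)$, which is the predicate the formula $\xacmlAllows$ actually branches on. Your choice aligns $\mathcal{M}$ with the formula's semantics and makes the inductive equivalence go through cleanly; with the paper's structural condition one would additionally have to restrict attention to (or argue about) interpretations where $\isRule^{\mathfrak{I}}$ agrees with the structure, so your variant is the safer one for establishing condition (ii).
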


\begin{proof}
We define a mapping $\mathcal{M}$ from interpretation functions to XACML policies using an auxiliary mapping $\mathcal{M}'$. For a sequence $\sigma \in \mathcal{S}$, we inductively define $\mathcal{M}'\left(\mathfrak{I}, \sigma\right)$ as follows:
\begin{itemize}
\item If $\sigma$ has length $M$ or $\activePol^{\mathfrak{I}}\left(\ricons{y}_{\sigma \triangleright j}\right) = 0$, for all $j \leq N$, then
\begin{equation*}
\mathcal{M}'\left(\mathfrak{I}, \sigma\right) = \left(\lxdec^{\mathfrak{I}}\left(\ricons{y}_\sigma\right), \mathit{AV} \right),
\label{eq:m_i_sigma_base}
\end{equation*}
where,
\begin{equation*}
\mathit{AV} = \left\{a \in \XAttVals \mid \XRqA^{\mathfrak{I}}\left(\ricons{y}_\sigma, \ricons{a}\right)\right\}.
\label{eq:m_i_sigma_base}
\end{equation*}
\item Otherwise,
\begin{equation*}
\mathcal{M}'\left(\mathfrak{I}, \sigma\right) = \left(\lxcomb^{\mathfrak{I}}\left(\ricons{y}_\sigma\right), \left(\mathcal{M}'\left(\mathfrak{I}, \sigma \triangleright j\right)_{j \leq N}\right)\right).
\label{eq:m_i_sigma_ind}
\end{equation*}
\end{itemize}
For an interpretation function $\mathfrak{I}$, we define $\mathcal{M}\left(\mathfrak{I}\right) = \mathcal{M}'\left(\mathfrak{I}, \epsilon\right)$. We show that $\mathcal{M}$ is surjective. Let $\pi$ be a XACML policy. 
For $\sigma \in \mathcal{S}$ and $\pi'$ a descendant of $\pi$, we inductively define the following policy:
\begin{equation*}
\pi'[\sigma] := 
\begin{cases}
\pi' &\text{if $\sigma = \epsilon$ and}\\
\pi'_i[\sigma'] &\text{if $\sigma = i \triangleleft \sigma'$ and $\pi' = \left(\kappa, \left(\pi'_1, \ldots, \pi'_k\right)\right)$.}
\end{cases}
\end{equation*}
%

We now present an interpretation function $\mathfrak{I}$ such that $\mathcal{M}\left(\mathfrak{I}\right) = \pi$. Let $\sigma \in \mathcal{S}$ and $\bot$ be any arbitrary value. Then 

\begin{equation*}
\activePol^{\mathfrak{I}}\left(\ricons{y}_{\sigma}\right) 
\begin{cases}
1 & \text{if there is a descendant $\pi'$ of $\pi$}\\
& \quad \text{with $\pi\left[\sigma\right] = \pi'$ and}\\
0 & \text{otherwise.}
\end{cases}
\label{eq:active_pol_i}
\end{equation*}

\begin{equation*}
\lxcomb^{\mathfrak{I}}\left(\ricons{y}_{\sigma}\right)
\begin{cases}
\xcomb\left(\pi[\sigma]\right) & \text{if $\pi\left[\sigma\right]$ is a policy and}\\
\bot & \text{otherwise.}
\end{cases}
\label{eq:comb_i}
\end{equation*}

\begin{equation*}
\lxdec^{\mathfrak{I}}\left(\ricons{y}_{\sigma}\right)
\begin{cases}
\xdec\left(\pi[\sigma]\right) & \text{if $\pi\left[\sigma\right]$ is a rule and}\\
\bot & \text{otherwise.}
\end{cases}
\label{eq:dec_i}
\end{equation*}

\begin{equation*}
\isRule^{\mathfrak{I}}\left(\ricons{y}_{\sigma}\right)
\begin{cases}
1 & \text{if $\pi\left[\sigma\right]$ is a rule and}\\
0 & \text{otherwise.}
\end{cases}
\label{eq:is_rule_i}
\end{equation*}

\begin{equation*}
\XRqA^{\mathfrak{I}}\left(\ricons{y}_{\sigma}, \ricons{a}\right)
\begin{cases}
1 & \text{if $\pi[\sigma]$ is a rule of the form $\left(d, \alpha\right)$,}\\
  & \text{$\alpha \subseteq \XAttVals$, and $a \in \alpha$}\\
  &\\
0 & \text{otherwise.}
\end{cases}
\label{eq:requires_aval_i}
\end{equation*}

It is straightforward to verify that $\mathcal{M}\left(\mathfrak{I}\right) = \pi$.
\end{proof}

Having a template formula for this XACML fragment, we now define an objective function. 
An example of an objective function is $
\lambda \left\|\mathfrak{I}\right\| + \lossfunc\left(\Auth, \mathfrak{I}; \varphi^{\text{XACML}}_{M, N}\right)$, where $\lambda > 0$ is a hyper-parameter and $\left\|\mathfrak{I}\right\|$ defines $\mathfrak{I}$'s complexity. We inductively define the complexity $\mathit{compl}\left(\pi\right)$ of a XACML policy $\pi$ as follows.
\begin{itemize}
\item If $\pi=\left(\delta, \alpha\right)$, then $\mathit{compl}\left(\pi\right) = \left|\alpha\right|$.
\item If $\pi=\left(\kappa, \left(\pi_1, \ldots, \pi_k\right)\right)$, then $\mathit{compl}\left(\pi\right) = \left|\alpha\right|$.
\end{itemize}
Finally, we define $\left\|\mathfrak{I}\right\|$ as $\mathit{compl}\left(\mathcal{M}\left(\mathfrak{I}\right)\right)$.

\subsection{Computing expectations} 

For a formula $\varphi \in \Lang$ and a request $\domi{z} \in 2^{\XAttVals}$, we define the random variable $\varphi^{\mathfrak{X}}\left(\domi{z}\right)$ in a way similar to the one given in Section~\ref{sub:aux_defs}. We now give some auxiliary definitions that help to compute $\meanfield{}{\mathfrak{f} \to \domi{b}}{\left(\varphi^{\text{XACML}}_{M, N}\right)^{\mathfrak{X}}\left(\domi{z}\right)}$.

\begin{lemma}
Let $\domi{z} \in 2^\XAttVals$ and $\psi_1, \psi_2$ be mutually exclusive formulas, then
\begin{equation*}
\meanfield{V \neq \mathfrak{f}}{\mathfrak{f} \mapsto \domi{b}} {\left(\psi_1 \oplus \psi_2\right)^\mathfrak{X}\left(\domi{z}\right)} = 
\meanfield{V \neq \mathfrak{f}}{\mathfrak{f} \mapsto \domi{b}} {\psi_1^\mathfrak{X}\left(\domi{z}\right)} + \meanfield{V \neq \mathfrak{f}}{\mathfrak{f} \mapsto \domi{b}} {\psi_2^\mathfrak{X}\left(\domi{z}\right)}.
\label{eq:exp_mutually_disjoint}
\end{equation*}
\label{lem:disjoint_flas}
\end{lemma}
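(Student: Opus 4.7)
The plan is to reduce the claim to an almost trivial application of the linearity of expectation by exploiting what mutual exclusivity gives at the level of random variables. Recall that $\psi_1$ and $\psi_2$ are mutually exclusive if there is no interpretation $\mathfrak{I}$ and no $\domi{z} \in 2^{\XAttVals}$ such that both $\psi_1^{\mathfrak{I}}(\domi{z})$ and $\psi_2^{\mathfrak{I}}(\domi{z})$ hold. In particular, for every $\mathfrak{I} \in \Omega$ at most one of $\psi_1^{\mathfrak{I}}(\domi{z}), \psi_2^{\mathfrak{I}}(\domi{z})$ evaluates to true, so identifying \textbf{true} with $1$ and \textbf{false} with $0$, the Boolean disjunction coincides with integer addition on this pair of values.

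First, I would upgrade this pointwise observation to an equality of random variables. The random variable $(\psi_1 \oplus \psi_2)^{\mathfrak{X}}(\domi{z})$ maps $\mathfrak{I} \mapsto (\psi_1^{\mathfrak{I}}(\domi{z}) \lor \psi_2^{\mathfrak{I}}(\domi{z}))$, and by the previous paragraph this value equals $\psi_1^{\mathfrak{I}}(\domi{z}) + \psi_2^{\mathfrak{I}}(\domi{z})$ for every $\mathfrak{I}$. Hence, as functions on $\Omega$,
\begin{equation*}
(\psi_1 \oplus \psi_2)^{\mathfrak{X}}(\domi{z}) = \psi_1^{\mathfrak{X}}(\domi{z}) + \psi_2^{\mathfrak{X}}(\domi{z}).
\end{equation*}
Crucially, mutual exclusivity is stated universally over all interpretations, so it is preserved when we restrict to the modified distribution used by $\meanfield{V \neq \mathfrak{f}}{\mathfrak{f} \mapsto \domi{b}}{\cdot}$: the substitution $\mathfrak{f} \mapsto \domi{b}$ only fixes one random fact, but still leaves us evaluating over a set of interpretations on which mutual exclusivity holds pointwise.

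Finally, I would apply the linearity of expectation, which is valid without any independence assumption and therefore applies to the expectation operator $\meanfield{V \neq \mathfrak{f}}{\mathfrak{f} \mapsto \domi{b}}{\cdot}$ defined in Equation~\ref{eq:mf_approx} just as for ordinary expectations. This yields
\begin{equation*}
\meanfield{V \neq \mathfrak{f}}{\mathfrak{f} \mapsto \domi{b}}{(\psi_1 \oplus \psi_2)^{\mathfrak{X}}(\domi{z})} = \meanfield{V \neq \mathfrak{f}}{\mathfrak{f} \mapsto \domi{b}}{\psi_1^{\mathfrak{X}}(\domi{z})} + \meanfield{V \neq \mathfrak{f}}{\mathfrak{f} \mapsto \domi{b}}{\psi_2^{\mathfrak{X}}(\domi{z})},
\end{equation*}
which is precisely the claim. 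There is no real obstacle here: the only point that deserves a sentence of justification is that mutual exclusivity, being a universal property over $\Omega$, is robust to the conditioning implicit in the mean-field expectation, so step one does not require $\psi_1^{\mathfrak{X}}(\domi{z})$ and $\psi_2^{\mathfrak{X}}(\domi{z})$ to be independent.
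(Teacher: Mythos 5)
Your proof is correct and is essentially the paper's own argument in a slightly different dress: the paper expands $\meanfield{V \neq \mathfrak{f}}{\mathfrak{f} \mapsto \domi{b}}{\left(\psi_1 \oplus \psi_2\right)^\mathfrak{X}\left(\domi{z}\right)}$ as $\sum_{\mathfrak{I}} q(\mathfrak{I})\left(\psi_1 \oplus \psi_2\right)^{\mathfrak{I}}(\domi{z})$ and splits the sum over the two disjoint sets of interpretations, which is exactly your pointwise identity $(\psi_1 \oplus \psi_2)^{\mathfrak{X}}(\domi{z}) = \psi_1^{\mathfrak{X}}(\domi{z}) + \psi_2^{\mathfrak{X}}(\domi{z})$ followed by linearity of expectation. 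Your extra remark that mutual exclusivity survives the substitution $\mathfrak{f} \mapsto \domi{b}$ is a small point of care the paper glosses over, but otherwise the two proofs coincide.
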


\begin{proof}
Note that $\left(\psi_1 \oplus \psi_2\right)^{\mathfrak{I}}\left(\domi{z}\right) = 1$ iff either $\psi_1^{\mathfrak{I}}\left(\domi{z}\right) = 1$ or $\psi_2^{\mathfrak{I}}\left(\domi{z}\right) = 1$.
\begin{align*}
\meanfield{V \neq \mathfrak{f}}{\mathfrak{f} \mapsto \domi{b}} {\left(\psi_1 \oplus \psi_2\right)^\mathfrak{X}\left(\domi{z}\right)}
&= \sum_{\mathfrak{I}}q\left(\mathfrak{I}\right) \left(\psi_1 \oplus \psi_2\right)^\mathfrak{I}\left(\domi{z}\right)\\
&= \sum_{\mathfrak{I}: \left(\psi_1 \oplus \psi_2\right)^\mathfrak{I}\left(\domi{z}\right) = 1}q\left(\mathfrak{I}\right)\\
&= \sum_{\mathfrak{I}: \psi_1^\mathfrak{I}\left(\domi{z}\right) = 1}q\left(\mathfrak{I}\right) + \sum_{\mathfrak{I}: \psi_2^\mathfrak{I}\left(\domi{z}\right) = 1}q\left(\mathfrak{I}\right)\\
&= \sum_{\mathfrak{I}}q\left(\mathfrak{I}\right) \psi_1^\mathfrak{I}\left(\domi{z}\right) + \sum_{\mathfrak{I}}q\left(\mathfrak{I}\right) \psi_2^\mathfrak{I}\left(\domi{z}\right)\\
&= \meanfield{V \neq \mathfrak{f}}{\mathfrak{f} \mapsto \domi{b}} {\psi_1^\mathfrak{X}\left(\domi{z}\right)} + \meanfield{V \neq \mathfrak{f}}{\mathfrak{f} \mapsto \domi{b}} {\psi_2^\mathfrak{X}\left(\domi{z}\right)}.
\end{align*}
\end{proof}

\begin{definition}
A set $\Phi \subseteq \Lang$ of formulas is \emph{unrelated} if for every $\varphi \in \Phi$ and every atomic formula $\alpha$ occurring in $\varphi$, there is no $\psi \in \Phi \setminus \left\{\varphi\right\}$ such that $\alpha$ occurs in $\varphi$.
\label{def:unrelated_flas}
\end{definition}

\begin{lemma}
If $\domi{z} \in 2^\XAttVals$ and $\Phi$ is a set of unrelated formulas, then $\left\{\varphi^{\mathfrak{X}}\left(\domi{z}\right) \mid \varphi \in \Phi\right\}$, under the distribution $q$, is mutually independent.
\label{lem:unrelated_independent}
\end{lemma}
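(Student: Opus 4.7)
The plan is to reduce the claim to a standard fact: functions of disjoint sub-collections of mutually independent random variables are themselves mutually independent. The ingredients are already in place: Equation~\ref{eq:post_approx_dist} factorizes $q$ as $q(\mathfrak{I}) = \prod_{\mathfrak{f} \in \atomfacts{\varphi}{\mathfrak{I}}} q_{\mathfrak{f}}(\mathfrak{f}^{\mathfrak{I}})$, and this factorization directly entails that the random facts $\{\mathfrak{f} \mid \mathfrak{f} \in \atomfacts{\varphi}{\mathfrak{I}}\}$ form a mutually independent family under $q$. So the task is to exhibit $\{\varphi^{\mathfrak{X}}(\domi{z}) \mid \varphi \in \Phi\}$ as a family of functions that each depend on a pairwise disjoint slice of this base family.

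First I would formalize the dependence of $\varphi^{\mathfrak{X}}(\domi{z})$ on random facts. By induction on the construction of $\varphi \in \Lang$ from atomic formulas, one checks that $\varphi^{\mathfrak{X}}(\domi{z})$ can be written as a Boolean (equivalently, arithmetic) combination of exactly those random facts $\mathfrak{f}$ arising from flexible symbols inside the atomic subformulas of $\varphi$ (with the arguments of those atomic formulas evaluated at $\domi{z}$ and any fixed interpretations of rigid symbols). Call this set $\mathcal{F}(\varphi, \domi{z}) \subseteq \atomfacts{\varphi}{\mathfrak{I}}$. Hence $\varphi^{\mathfrak{X}}(\domi{z})$ is measurable with respect to the $\sigma$-algebra generated by $\mathcal{F}(\varphi, \domi{z})$.

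Next I would use the hypothesis that $\Phi$ is unrelated (Definition~\ref{def:unrelated_flas}): no atomic formula occurring in some $\varphi \in \Phi$ occurs in any other $\psi \in \Phi$. Since each random fact in $\mathcal{F}(\varphi, \domi{z})$ is determined by a flexible symbol occurring in an atomic subformula of $\varphi$, the sets $\mathcal{F}(\varphi, \domi{z})$ and $\mathcal{F}(\psi, \domi{z})$ are disjoint whenever $\varphi \neq \psi$ in $\Phi$. Combined with the mutual independence of all random facts under $q$, the standard grouping lemma for independence (functions of disjoint subfamilies of an independent family are independent) yields that $\{\varphi^{\mathfrak{X}}(\domi{z}) \mid \varphi \in \Phi\}$ is mutually independent.

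The only mildly delicate step is the first one: making precise that $\varphi^{\mathfrak{X}}(\domi{z})$ really is a function of $\mathcal{F}(\varphi, \domi{z})$ and not of any other random facts. This follows directly from Definition~\ref{def:random_formula} and the inductive definition of $\varphi^{\mathfrak{I}}$ in terms of $\mathfrak{I}$'s interpretation of the flexible symbols appearing in $\varphi$, together with the observation that rigid symbols are interpreted identically across all $\mathfrak{I}$ and thus contribute constants. Once that is dispatched, disjointness of the supports is immediate from Definition~\ref{def:unrelated_flas}, and mutual independence follows. This also gives, as an immediate corollary used in the expectation calculations for $\varphi^{\text{XACML}}_{M, N}$, that $\mathbb{E}\left[\prod_{\varphi \in \Phi} \varphi^{\mathfrak{X}}(\domi{z})\right] = \prod_{\varphi \in \Phi}\mathbb{E}[\varphi^{\mathfrak{X}}(\domi{z})]$, which is exactly what Lemma~\ref{lem:pre_expectations_flas}'s conjunction clause requires in the XACML setting.
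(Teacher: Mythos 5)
Your proof is correct and follows essentially the same route as the paper's: the paper also splits $\mathfrak{I}$ into disjoint pieces interpreting the atomic formulas of each member of $\Phi$, factorizes $q$ accordingly, and splits the resulting sums, which is exactly the grouping lemma you invoke, just carried out explicitly for $\left|\Phi\right| = 2$ with a remark that the general case is analogous. Your phrasing in terms of each $\varphi^{\mathfrak{X}}\left(\domi{z}\right)$ being a function of a disjoint slice of the mutually independent family of random facts is a clean (and slightly more general) packaging of the same argument.
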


\begin{proof}
For simplicity, we assume that $\Phi = \left\{\varphi_1, \varphi_2\right\}$. The proof for the general case is analogous.

Observe that, since $\Phi$ is unrelated, an interpretation function $\mathfrak{I}$ can be regarded as the union of two interpretation functions $\mathfrak{I}_1$ and $\mathfrak{I}_2$ where $\mathfrak{I}_1$ interprets the atomic formulas occurring in $\varphi_1$ and $\mathfrak{I}_2$ interprets those in $\varphi_2$. Consequently, the distribution $q\left(\mathfrak{I}\right)$ can be factorized as $q_1\left(\mathfrak{I}_1\right)q_2\left(\mathfrak{I}_2\right)$, where $q_i$, for $i \leq 2$, is the marginal mean-field approximating the joint distribution of the random facts of $\varphi_i$.

For an event $A$, let $\prob_q\left(A\right)$ denote the probability of $A$ under the distribution $q$. To prove the lemma, it suffices to show, for $\domi{b}_1, \domi{b}_2 \in \{0, 1\}$, that 
\begin{equation}
\begin{aligned}
&\prob_q\left(\varphi_1^{\mathfrak{X}}\left(\domi{z}\right) = \domi{b}_1, \varphi_2^{\mathfrak{X}}\left(\domi{z}\right) = \domi{b}_2\right) =\\
&\qquad \qquad \prob_q\left(\varphi_1^{\mathfrak{X}}\left(\domi{z}\right) = \domi{b}_1\right)\prob_q\left(\varphi_2^{\mathfrak{X}}\left(\domi{z}\right) = \domi{b}_2\right),
\end{aligned}
\end{equation}
which implies that $\Phi$ is mutually independent.

\begin{align*}
&\prob_q\left(\varphi_1^{\mathfrak{X}}\left(\domi{z}\right) = \domi{b}_1, \varphi_2^{\mathfrak{X}}\left(\domi{z}\right) = \domi{b}_2\right) \\
&= \sum_{\substack{\mathfrak{I} : \varphi_1^{\mathfrak{I}}\left(\domi{z}\right) = \domi{b}_1\\ \quad \varphi_2^{\mathfrak{I}}\left(\domi{z}\right) = \domi{b}_2}} q\left(\mathfrak{I}\right)\\
&= \sum_{\substack{\mathfrak{I} : \varphi_1^{\mathfrak{I}}\left(\domi{z}\right) = \domi{b}_1\\ \quad \varphi_2^{\mathfrak{I}}\left(\domi{z}\right) = \domi{b}_2}} q_1\left(\mathfrak{I}_1\right)q_2\left(\mathfrak{I}_2\right)\\
&= \sum_{\mathfrak{I}_1 : \varphi_1^{\mathfrak{I}_1}\left(z\right) = \domi{b}_1}\sum_{\mathfrak{I}_2 : \varphi_2^{\mathfrak{I}_2}\left(\domi{z}\right) = \domi{b}_2} q_1\left(\mathfrak{I}_1\right)q_2\left(\mathfrak{I}_2\right)\\
&= \left(\sum_{\mathfrak{I}_1 : \varphi_1^{\mathfrak{I}_1}\left(\domi{z}\right) = \domi{b}_1}q_1\left(\mathfrak{I}_1\right)\right)\left(\sum_{\mathfrak{I}_2 : \varphi_2^{\mathfrak{I}_2}\left(\domi{z}\right) = \domi{b}_2} q_2\left(\mathfrak{I}_2\right)\right)\\
&= \left(\sum_{\substack{\mathfrak{I}_1,\mathfrak{I}_2\\\mathfrak{I}_1 : \varphi_1^{\mathfrak{I}_1}\left(\domi{z}\right) = \domi{b}_1}}q_1\left(\mathfrak{I}_1\right)q_2\left(\mathfrak{I}_2\right)\right)\left(\sum_{\substack{\mathfrak{I}_1, \mathfrak{I}_2\\\mathfrak{I}_2 : \varphi_2^{\mathfrak{I}_2}\left(z\right) = \domi{b}_2}} q_1\left(\mathfrak{I}_1\right)q_2\left(\mathfrak{I}_2\right)\right)\\
&= \left(\sum_{\mathfrak{I} : \varphi_1^{\mathfrak{I}}\left(\domi{z}\right) = \domi{b}_1} q\left(\mathfrak{I}\right)\right)
\left(\sum_{\mathfrak{I} : \varphi_2^{\mathfrak{I}}\left(\domi{z}\right) = \domi{b}_2} q\left(\mathfrak{I}\right)\right)\\
&= \prob_q\left(\varphi_1^{\mathfrak{X}}\left(\domi{z}\right) = \domi{b}_1\right)\prob_q\left(\varphi_2^{\mathfrak{X}}\left(\domi{z}\right) = \domi{b}_2\right).
\end{align*}
\end{proof}

\begin{lemma}
We can compute $\meanfield{}{\mathfrak{f} \to \domi{b}}{\left(\varphi^{\text{XACML}}_{M, N}\right)^{\mathfrak{X}}\left(\domi{z}\right)}$ using only the equations given in Lemmas~\ref{lem:pre_expectations_flas} and~\ref{lem:disjoint_flas}.
\end{lemma}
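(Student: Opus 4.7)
I would proceed by structural induction on the length of $\sigma \in \mathcal{S}$, proving the stronger joint statement that for every $\sigma$, every request $\domi{z} \in 2^{\XAttVals}$, and every random fact $\mathfrak{f}$ and value $\domi{b}$, each of the expectations $\meanfield{}{\mathfrak{f} \mapsto \domi{b}}{\xacmlAllows(\ricons{y}_\sigma, \vari{z})^{\mathfrak{X}}(\domi{z})}$, $\meanfield{}{\mathfrak{f} \mapsto \domi{b}}{\xacmlDenies(\ricons{y}_\sigma, \vari{z})^{\mathfrak{X}}(\domi{z})}$, and $\meanfield{}{\mathfrak{f} \mapsto \domi{b}}{\polNAFla(\ricons{y}_\sigma, \vari{z})^{\mathfrak{X}}(\domi{z})}$ is computable using only the equations in Lemmas~\ref{lem:pre_expectations_flas} and~\ref{lem:disjoint_flas}. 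The target expectation $\meanfield{}{\mathfrak{f} \mapsto \domi{b}}{(\varphi^{\text{XACML}}_{M,N})^{\mathfrak{X}}(\domi{z})}$ is the case $\sigma = \epsilon$ applied to $\xacmlAllows$.

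\paragraph{Base case ($|\sigma| = M$).} Here $\xacmlAllows(\ricons{y}_\sigma, \vari{z})$ unfolds to $\ruleAllowFla(\ricons{y}_\sigma, \vari{z})$, which is the conjunction $\activePol(\ricons{y}_\sigma) \land \lxdec(\ricons{y}_\sigma) = \underline{\authDec} \land \bigwedge_{\domi{a} \in \XAttVals}(\XRqA(\ricons{y}_\sigma, \ricons{a}) \to \XHasAttVal(\vari{z}, \ricons{a}))$. Since $\activePol$, $\lxdec$, and the various $\XRqA(\ricons{y}_\sigma, \ricons{a})$ are distinct flexible symbols (and $\XHasAttVal$ is rigid, so it contributes a deterministic $0$ or $1$), the set of conjuncts is unrelated in the sense of Definition~\ref{def:unrelated_flas}; by Lemma~\ref{lem:unrelated_independent}, they are independent under $q$, and the conjunction rule of Lemma~\ref{lem:pre_expectations_flas} applies. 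The formula $\polNAFla$ at a leaf reduces to $\neg \activePol$, handled by the negation rule, and $\xacmlDenies$ is handled symmetrically to $\xacmlAllows$.

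\paragraph{Inductive step.} For $|\sigma| < M$, first rewrite $\xacmlAllows(\ricons{y}_\sigma, \vari{z}) = (\isRule(\ricons{y}_\sigma) \land \ruleAllowFla) \oplus (\neg \isRule(\ricons{y}_\sigma) \land \polAllowFla)$; the two disjuncts are mutually exclusive on $\isRule(\ricons{y}_\sigma)$, so Lemma~\ref{lem:disjoint_flas} splits the expectation into a sum, and each summand is a conjunction of formulas over disjoint flexible symbols, again unrelated. The rule summand is covered by the base-case analysis. For the policy summand, $\polAllowFla = \activePol(\ricons{y}_\sigma) \land \Psi$ where $\Psi$ is the three-way $\oplus$ over $\lxcomb(\ricons{y}_\sigma) \in \{\underline{\allowov}, \underline{\firstapp}, \underline{\denyov}\}$; mutual exclusivity is immediate since the three branches fix $\lxcomb(\ricons{y}_\sigma)$ to distinct values, so Lemma~\ref{lem:disjoint_flas} splits again. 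Within each branch, the formulas following the $\lxcomb$ constraint involve only atomic formulas about children $\ricons{y}_{\sigma \triangleright j}$ and their descendants: for $j \neq j'$ these involve rigid constants indexing disjoint subtrees, hence the subformulas are pairwise unrelated. I would then handle the three combining-algorithm cases: the $\allowov$ disjunction $\bigvee_j \xacmlAllows(\ricons{y}_{\sigma \triangleright j}, \vari{z})$ is rewritten as $\neg \bigwedge_j \neg \xacmlAllows(\ldots)$ (the same device used for $\varphi^{\mathit{RBAC}}_N$), and the $\firstapp$ and $\denyov$ branches already appear as $\bigoplus_j$, whose mutual exclusivity follows because the $k$-th disjunct ($k > j$) asserts $\polNAFla(\ricons{y}_{\sigma \triangleright j}, \vari{z})$ while the $j$-th asserts $\xacmlAllows(\ricons{y}_{\sigma \triangleright j}, \vari{z})$, and these are incompatible by the very definition of $\polNAFla$. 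The inductive hypothesis applied to each child $\ricons{y}_{\sigma \triangleright j}$ provides the subexpectations. The cases for $\xacmlDenies$ and $\polNAFla$ at $\sigma$ are treated identically.

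\paragraph{Main obstacle.} The routine bookkeeping is easy; the technically delicate point is establishing the unrelatedness needed to apply the product rule of Lemma~\ref{lem:pre_expectations_flas} at every conjunction. One must verify that all genuinely shared atomic formulas (notably $\isRule(\ricons{y}_\sigma)$, $\lxcomb(\ricons{y}_\sigma)$, and the various $\polNAFla(\ricons{y}_{\sigma \triangleright j}, \vari{z})$ appearing across different disjuncts of the $\bigoplus_j$) have been pushed into positions where they drive a mutual-exclusivity argument rather than a product argument. The inductive proof therefore hinges on carefully choosing the rewriting of each compound subformula so that, at every conjunction, the conjuncts range over disjoint subtrees of the $\ricons{y}_\sigma$ hierarchy, while every shared symbol is consumed by an $\oplus$.
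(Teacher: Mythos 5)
Your proposal takes essentially the same route as the paper's proof: both decompose $\varphi^{\text{XACML}}_{M,N}$ along the recursive structure of $\xacmlAllows$, use mutual exclusivity (Lemma~\ref{lem:disjoint_flas}) to turn the case splits on $\isRule\left(\ricons{y}_\sigma\right)$ and $\lxcomb\left(\ricons{y}_\sigma\right)$ into sums of expectations, and use unrelatedness (Lemma~\ref{lem:unrelated_independent} feeding the product rule of Lemma~\ref{lem:pre_expectations_flas}) for conjunctions whose conjuncts live in disjoint parts of the $\ricons{y}_\sigma$ hierarchy. The differences are in rigor rather than method, and they favor your version. First, you make the recursion an explicit induction on $\left|\sigma\right|$ with a strengthened hypothesis that also covers $\xacmlDenies$ and $\polNAFla$; the paper's proof silently needs exactly this, since $\polNAFla\left(\ricons{y}_{\sigma\triangleright i},\vari{z}\right)$ and $\xacmlDenies\left(\ricons{y}_{\sigma\triangleright k},\vari{z}\right)$ occur inside $\polAllowFla$ but the computability of their expectations is never separately established there. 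Second, you rewrite the if-then-else $\left(\isRule \to \ruleAllowFla\right) \land \left(\neg\isRule \to \polAllowFla\right)$ as an exclusive disjunction \emph{before} applying the product rule; this matters because, as written, the two implications both contain the atomic formula $\isRule\left(\ricons{y}_\sigma\right)$, and $\ruleAllowFla$ and $\polAllowFla$ both contain $\activePol\left(\ricons{y}_\sigma\right)$, so the set $\left\{\isRule, \ruleAllowFla, \polAllowFla\right\}$ that the paper declares ``unrelated'' is not, and the product rule cannot be applied to the conjunction until the shared atoms have been absorbed into an $\oplus$ as you do. One small point that neither you nor the paper fully discharges: in the $\denyov$ branch, a single summand of the $\bigoplus_j$ conjoins $\polNAFla\left(\ricons{y}_{\sigma\triangleright i},\vari{z}\right)$ with $\neg\xacmlDenies\left(\ricons{y}_{\sigma\triangleright i},\vari{z}\right)$ for the \emph{same} child $i$; these share atomic formulas, so the product rule is inapplicable until the conjunction is first simplified (e.g., using that $\polNAFla$ entails $\neg\xacmlDenies$, so the second conjunct is redundant for $i<j$). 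This is precisely the kind of local rewriting your ``main obstacle'' paragraph anticipates, so it is a detail to fill in rather than a flaw in your strategy.
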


\begin{proof}
Observe that every atomic formula in $\ruleAllowFla\left(\ricons{y}_\sigma, \vari{z}\right)$ occurs exactly once, so $\ruleAllowFla^{\mathfrak{X}}\left(\ricons{y}_\sigma, \domi{z}\right)$ is diverse. Hence, by Corollary~\ref{cor:occurs_once}, we can use Lemma~\ref{lem:expectations_flas} to compute the expectation $\meanfield{}{\mathfrak{f}\to \domi{b}}{\ruleAllowFla^{\mathfrak{X}}\left(\ricons{y}_\sigma, \domi{z}\right)}$.


The formula $\polAllowFla\left(\ricons{y}_\sigma, \vari{z}\right)$ can be rewritten as
\begin{equation*}
\begin{aligned}
\left(\activePol\left(\ricons{y}_\sigma\right) \land \psi_1\left(\ricons{y}_\sigma, \vari{z}\right)\right) \oplus
\left(\activePol\left(\ricons{y}_\sigma\right) \land \psi_2\left(\ricons{y}_\sigma, \vari{z}\right)\right) \oplus\\\
\left(\activePol\left(\ricons{y}_\sigma\right) \land \psi_3\left(\ricons{y}_\sigma, \vari{z}\right)\right).
\end{aligned}
\end{equation*}
Each formula $\psi_i\left(\ricons{y}_\sigma, \vari{z}\right)$ is built from a set of unrelated formulas. Hence, by Lemma~\ref{lem:unrelated_independent}, we can use Lemma~\ref{lem:pre_expectations_flas} to compute $\meanfield{}{\mathfrak{f}\to \domi{b}}{\psi_i\left(\ricons{y}_\sigma, \vari{z}\right)}$. 
Using Lemmas~\ref{lem:disjoint_flas} and~\ref{lem:pre_expectations_flas}, we can show that
\begin{align*}
&\meanfield{}{\mathfrak{f}\to \domi{b}}{\polAllowFla^{\mathfrak{X}}\left(\ricons{y}_\sigma, \domi{z}\right)} = \\
&\qquad \meanfield{}{\mathfrak{f}\to \domi{b}}{\activePol^{\mathfrak{X}}\left(\ricons{y}_\sigma, \domi{z}\right)} \times \left(
\begin{array}{l}
\meanfield{}{\mathfrak{f}\to \domi{b}}{\psi_1^{\mathfrak{X}}\left(\ricons{y}_\sigma, \domi{z}\right)} +\\ 
\meanfield{}{\mathfrak{f}\to \domi{b}}{\psi_2^{\mathfrak{X}}\left(\ricons{y}_\sigma, \domi{z}\right)} +\\ 
\meanfield{}{\mathfrak{f}\to \domi{b}}{\psi_3^{\mathfrak{X}}\left(\ricons{y}_\sigma, \domi{z}\right)}
\end{array}
\right).
\end{align*}
Therefore, $\meanfield{}{\mathfrak{f}\to \domi{b}}{\polAllowFla^{\mathfrak{X}}\left(\ricons{y}_\sigma, \domi{z}\right)}$ can be computed using only Lemmas~\ref{lem:disjoint_flas} and~\ref{lem:pre_expectations_flas}.

Finally, recall that $\varphi^{\text{XACML}}_{M, N}\left(\ricons{y}_{\epsilon}, \vari{z}\right) = \xacmlAllows\left(\ricons{y}_{\epsilon}, \vari{z}\right)$. Observe now that $\xacmlAllows\left(\ricons{y}_{\epsilon}, \vari{z}\right)$ is built from the following unrelated set: 
\begin{equation*}
\left\{\isRule\left(\ricons{y}_{\sigma}\right), \ruleAllowFla\left(\ricons{y}_{\sigma}, \vari{z}\right), \polAllowFla\left(\ricons{y}_{\sigma}, \vari{z}\right)\right\}.
\end{equation*}
By Lemma~\ref{lem:unrelated_independent}, the corresponding set of random variables is independent. Hence, we can use Lemma~\ref{lem:pre_expectations_flas} to reduce the computation of $\meanfield{}{\mathfrak{f} \to \domi{b}}{\left(\varphi^{\text{XACML}}_{M, N}\right)^{\mathfrak{X}}\left(\domi{z}\right)}$ to the computation of $\meanfield{}{\mathfrak{f} \to \domi{b}}{\isRule^{\mathfrak{X}}\left(\ricons{y}_{\sigma}\right)}$, $\meanfield{}{\mathfrak{f} \to \domi{b}}{\ruleAllowFla^{\mathfrak{X}}\left(\ricons{y}_{\sigma}, \domi{z}\right)}$, and $\meanfield{}{\mathfrak{f} \to \domi{b}}{\polAllowFla^{\mathfrak{X}}\left(\ricons{y}_{\sigma}, \domi{z}\right)}$. However, as observed above, all these expectations can be computed using Lemmas~\ref{lem:pre_expectations_flas} and~\ref{lem:disjoint_flas}. Hence, we can compute $\meanfield{}{\mathfrak{f} \to \domi{b}}{\left(\varphi^{\text{XACML}}_{M, N}\right)^{\mathfrak{X}}\left(\domi{z}\right)}$ using only those two lemmas.
\end{proof}

Having proven the previous lemmas, we can now implement Algorithm~\ref{algo:varinfer} to produce a policy miner for XACML policies.

\section{Modeling RBAC temporal constraints}
\label{sec:app_uni_modeling_temp_constraints}

\subsection{Periodic expressions}
\label{sub:periodic_expressions}

We recall here the definition of temporal constraints and then explain how we model them in our language $\mathcal{L}$. We start by defining a periodic expression. In order to prove some results, we use a formal definition instead of the original definition~\cite{bertino2001trbac}.

\begin{definition}
A \emph{calendar sequence} is a tuple $C = \left(C_0, \ldots, C_n\right)$ of strings. Each $C_i$, for $i \leq n$, is called a \emph{calendar}.
\end{definition}

Each string in a calendar sequence denotes a time unit. A standard example of a calendar sequence is $C_s =$ ($``\domi{Months}''$, $``\domi{Days}''$, $``\domi{Hours}''$, $``\domi{Hours}''$). The string $``\domi{Hours}''$ intentionally occurs twice.

\begin{definition}
Let $C = \left(C_i\right)_{i \leq n}$ be a calendar sequence. A \emph{$C$-periodic expression} is a tuple 
$$\left(O_1, O_2, \ldots, O_{n-1}, \domi{w}\right) \in \left(2^{\mathbb{N}}\right)^{n-1} \times \mathbb{N}.$$ 
The set of all $C$-periodic expressions is denoted by $\mathit{PEx}\left(C\right)$.
\end{definition}

Intuitively, for $i < n$, the set $O_i$ represents a set of time units in $C_i$ and $r$ is a time length, measured with the unit $C_n$. We give an example.
%

\begin{example}
Let $C_s = \left(``\domi{Months}'', ``\domi{Days}'', ``\domi{Hours}'', ``\domi{Hours}''\right)$. The first eight hours of the first and fifth day of each even month can be represented with the following $C_s$-periodic expression:
$$\left(\left\{2, 4, \ldots, 12\right\}, \left\{1, 5\right\}, \left\{1\right\}, 8\right).$$
Here, the first set in the tuple denotes the even months, the second set denotes the first and fifth day, the third set denotes the first hour, and the last number denotes the 8-hour length.
\label{ex:uni_first_ten_hours}
\end{example}

\begin{definition}
Let $C = \left(C_i\right)_{i \leq n}$ be a calendar sequence. A \emph{$C$-time instant} $\domi{t}$ is a tuple in $\mathbb{N}^{n-1}$. A $C$-time instant $\domi{t} = \left(\domi{t}_i\right)_{i < n}$ \emph{satisfies} a $C$-periodic expression $\left(O_1, \ldots, O_{n-1}, \domi{w}\right)$ if all of the following hold:
\begin{itemize}
\item For $i < n - 1$, $\domi{t}_i \in O_i$.
\item There is $\domi{t}'_{n - 1} \in O_{n-1}$ such that $\domi{t}'_{n-1} \leq \domi{t}_{n-1} < \domi{t}'_{n-1} \oplus \domi{w}$. Here, $\domi{t}'_{n-1} \oplus \domi{w}$ is the result of transforming $\domi{t}'_{n-1}$ and $\domi{w}$ to a common time unit and then adding them.
\end{itemize}
\label{def:uni_time_instant_satisfies}
\end{definition}

\begin{example}
Let $P$ be the $C_s$-periodic expression from Example~\ref{ex:uni_first_ten_hours}. Then the $C_s$-time instant $(4, 1, 2)$ (i.e., the second hour of the first day of April) satisfies $P$, whereas the $C_s$-time instant $(5, 1, 2)$ (i.e., the second hour of the first day of May) does not.
\label{ex:uni_first_ten_hours_satisfies}
\end{example}

\subsection{Formalizing periodic expressions in $\Lang$}
\label{sub:formalizing_periodic_expr_lang}

Let $C$ be a calendar sequence. We now show how to formalize periodic expressions in $\mathit{PEx}\left(C\right)$ in our language $\mathcal{L}$. For illustration, we show how to do this when $C = C_s$, but the general case is analogous. We start by defining a signature for defining temporal constraints. We first formally introduce the components of this signature and then give some intuition.

\begin{definition}
Let $\Sigma_{t}$ be a signature containing the following:
\begin{itemize}
\item A sort $\sortFont{INSTANTS}$ for denoting $C_s$-time instants.
\item Three sorts $\sortFont{MONTHS}, \sortFont{DAYS}, \sortFont{HOURS}$ for denoting months, days, and hours, respectively.
\item Three flexible unary relation symbols:
\begin{itemize}
\item $\mathit{PM} : \sortFont{MONTHS}$.
\item $\mathit{PD} : \sortFont{DAYS}$.
\item $\mathit{PH} : \sortFont{HOURS}$.
\end{itemize}
\item Three rigid unary function symbols:
\begin{itemize}
\item $\underline{\mathit{monthOf}} : \sortFont{INSTANTS} \to \sortFont{MONTHS}$.
\item $\underline{\mathit{dayOf}} : \sortFont{INSTANTS} \to \sortFont{DAYS}$.
\item $\underline{\mathit{hourOf}} : \sortFont{INSTANTS} \to \sortFont{HOURS}$.
\end{itemize}
\item Rigid binary relation symbols $\leq : \sortFont{HOURS} \times \sortFont{HOURS}$ and $< : \sortFont{HOURS} \times \sortFont{HOURS}$.
\item Rigid constant symbols $\ricons{1}, \ricons{2}, \ldots, \ricons{12}$ of sort $\sortFont{MONTHS}$.
\item Rigid constant symbols $\ricons{1}, \ricons{2}, \ldots, \ricons{31}$ of sort $\sortFont{DAYS}$.
\item Rigid constant symbols $\ricons{1}, \ricons{2}, \ldots, \ricons{24}$ of sort $\sortFont{HOURS}$.
\end{itemize}
\label{def:signat_temp_constraints}
\end{definition}

For an interpretation function $\mathfrak{I}$, the sets $\mathit{PM}^\mathfrak{I}$, $\mathit{PD}^\mathfrak{I}$, and $\mathit{PH}^{\mathfrak{I}}$ define the first three entries of a $C_s$-periodic expression. The functions $\underline{\mathit{monthOf}}^{\mathfrak{I}}$, $\underline{\mathit{dayOf}}^{\mathfrak{I}}$, and $\underline{\mathit{hourOf}}^{\mathfrak{I}}$ compute, respectively, the month, day, and hour of a time instant. Recall that $\underline{\mathit{monthOf}}$, $\underline{\mathit{dayOf}}$, and $\underline{\mathit{hourOf}}$ are rigid symbols, so $\underline{\mathit{monthOf}}^{\mathfrak{I}}$, $\underline{\mathit{dayOf}}^{\mathfrak{I}}$, and $\underline{\mathit{hourOf}}^{\mathfrak{I}}$ do not depend on $\mathfrak{I}$.

Let $\sortFont{INSTANTS}$ be a sort denoting $C_s$-time instants, $\vari{t}$ be a variable of sort $\sortFont{INSTANTS}$, and $\varphi\left(\vari{t}\right)$ be the following formula:
%
%
\begin{align*}
\left(\bigvee_{1 \leq \domi{m} \leq 12} \left(\mathit{PM}\left(\ricons{m}\right) \land \underline{\mathit{monthOf}}\left(\vari{t}\right) = \ricons{m}\right)\right) \land \\
\left(\bigvee_{1 \leq \domi{d} \leq 31} \left(\mathit{PD}\left(\ricons{d}\right) \land \underline{\mathit{dayOf}}\left(\vari{t}\right) = \ricons{d}\right)\right) \land \\
\left(\bigvee_{1 \leq \domi{h} \leq 24} \left(\mathit{PH}\left(\ricons{h}\right) \land \ricons{h} \leq \underline{\mathit{hourOf}}\left(\vari{t}\right) < \ricons{h} + \cons{w}\right)\right).
\end{align*}

\begin{definition}
We define $\mathcal{M}$ as the following mapping from interpretation functions to $\mathit{PEx}\left(\mathcal{C}_s\right)$ defined as follows. For an interpretation function $\mathfrak{I}$, $\mathcal{M}\left(\mathfrak{I}\right) := \left(M_{\mathfrak{I}}, D_{\mathfrak{I}}, H_{\mathfrak{I}}, 1\right)$, where
\begin{itemize}
\item $M_{\mathfrak{I}} := \left\{\domi{m} \mid \domi{m} \in \mathit{PM}^{\mathfrak{I}}\right\}$,
\item $D_{\mathfrak{I}} := \left\{\domi{d} \mid \domi{d} \in \mathit{PD}^{\mathfrak{I}}\right\}$,
\item $H_{\mathfrak{I}} := \left\{\domi{h} \mid \domi{h}' \leq \domi{h} < \domi{h}' + \cons{w}^{\mathfrak{I}}, \text{ for some $\domi{h}' \in \mathit{PH}^{\mathfrak{I}}$}\right\}$.
\end{itemize}
\label{def:uni_interp_mapping}
\end{definition}

We use $\mathcal{M}$ to prove Theorem~\ref{thm:uni_temp_constraints_translation} below, which claims that $\varphi$ is a template formula for the set of periodic expressions. Note that $\mathcal{M}$ is not surjective on $\mathit{PEx}\left(\mathcal{C}_s\right)$. However, any periodic expression $(O_1, O_2, O_3, \domi{w})$ is equivalent to an expression of the form $(O_1, O_2, O^{\domi{w}}_3, 1)$, where $O^{\domi{w}}_3 := \left\{\domi{o} + \domi{w'} \mid \domi{o} \in \domi{O}_3, \domi{w}' < \domi{w}\right\}.$ Therefore, although $\mathcal{M}$ is not surjective, it is expressive enough to capture all periodic expressions up to equivalence.

\begin{theorem}
For every interpretation function $\mathfrak{I}$, a $C_s$-time instant $\domi{t} = \left(\domi{m}, \domi{d}, \domi{h}\right)$ satisfies $\mathcal{M}\left(\mathfrak{I}\right) = \left(M_{\mathfrak{I}}, D_{\mathfrak{I}}, H_{\mathfrak{I}}, 1\right)$ iff $\domi{t} \in \varphi^{\mathfrak{I}}$.
\label{thm:uni_temp_constraints_translation}
\end{theorem}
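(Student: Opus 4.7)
The plan is a direct unpacking of definitions on both sides of the biconditional, matched conjunct-by-conjunct. Fix an interpretation function $\mathfrak{I}$ and a $C_s$-time instant $\domi{t} = (\domi{m}, \domi{d}, \domi{h})$. I will introduce three intermediate equivalences, one per calendar level, and then assemble them.

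First, I will unfold the left-hand side. By Definition~\ref{def:uni_interp_mapping}, $\mathcal{M}(\mathfrak{I}) = (M_{\mathfrak{I}}, D_{\mathfrak{I}}, H_{\mathfrak{I}}, 1)$, so by Definition~\ref{def:uni_time_instant_satisfies}, $\domi{t}$ satisfies $\mathcal{M}(\mathfrak{I})$ iff $\domi{m} \in M_{\mathfrak{I}}$, $\domi{d} \in D_{\mathfrak{I}}$, and there is $\domi{h}' \in H_{\mathfrak{I}}$ with $\domi{h}' \leq \domi{h} < \domi{h}' + 1$, i.e., $\domi{h} \in H_{\mathfrak{I}}$ (since hours are integers and $\mathit{hourDuration} = 1$). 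Substituting the definitions of $M_{\mathfrak{I}}$, $D_{\mathfrak{I}}$, and $H_{\mathfrak{I}}$, this reduces to $\domi{m} \in \mathit{PM}^{\mathfrak{I}}$, $\domi{d} \in \mathit{PD}^{\mathfrak{I}}$, and the existence of $\domi{h}'' \in \mathit{PH}^{\mathfrak{I}}$ with $\domi{h}'' \leq \domi{h} < \domi{h}'' + \cons{w}^{\mathfrak{I}}$.

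Second, I will unfold the right-hand side. Evaluating $\varphi^{\mathfrak{I}}$ at $\vari{t} \mapsto \domi{t}$ yields three conjuncts. Since $\underline{\mathit{monthOf}}$, $\underline{\mathit{dayOf}}$, and $\underline{\mathit{hourOf}}$ are rigid and compute the corresponding components of $\domi{t}$, the first conjunct holds iff there is $1 \leq \domi{m}' \leq 12$ with $\ricons{m}' \in \mathit{PM}^{\mathfrak{I}}$ and $\domi{m}' = \domi{m}$, which collapses to $\domi{m} \in \mathit{PM}^{\mathfrak{I}}$; similarly, the second conjunct collapses to $\domi{d} \in \mathit{PD}^{\mathfrak{I}}$; and the third to the existence of $1 \leq \domi{h}'' \leq 24$ with $\ricons{h}'' \in \mathit{PH}^{\mathfrak{I}}$ and $\domi{h}'' \leq \domi{h} < \domi{h}'' + \cons{w}^{\mathfrak{I}}$. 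A small bookkeeping step is to note that the carrier set of $\sortFont{HOURS}$ is exhausted by the constants $\ricons{1}, \ldots, \ricons{24}$, so quantifying over this range in the disjunction loses nothing: any witness $\domi{h}'' \in \mathit{PH}^{\mathfrak{I}}$ is named by some $\ricons{h}''$ with $1 \leq \domi{h}'' \leq 24$, and conversely every such constant lies in $\sortFont{HOURS}^{\mathfrak{S}}$.

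Combining the two reductions gives the same three elementary conditions on both sides, establishing the biconditional. I do not foresee a real obstacle here; the only delicate points are (i) the collapse of the hour interval $[\domi{h}', \domi{h}' + 1)$ to the singleton $\{\domi{h}'\}$ on the LHS, which is what allows the unfolding of $H_{\mathfrak{I}}$ to reappear, and (ii) verifying that the rigid constants in $\Sigma_t$ enumerate the carrier sets of $\sortFont{MONTHS}$, $\sortFont{DAYS}$, and $\sortFont{HOURS}$, so that the bounded disjunctions in $\varphi$ quantify over all relevant elements. Both are easy to check from Definition~\ref{def:signat_temp_constraints}.
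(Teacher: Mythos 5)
Your proposal is correct and follows essentially the same route as the paper's proof: a direct chain of equivalences that unfolds the satisfaction of $\mathcal{M}(\mathfrak{I})$ via the definitions of $M_{\mathfrak{I}}$, $D_{\mathfrak{I}}$, $H_{\mathfrak{I}}$ on one side and evaluates $\varphi^{\mathfrak{I}}$ conjunct-by-conjunct on the other, meeting in the middle at the same three elementary conditions. Your extra remarks on the collapse of $[\domi{h}', \domi{h}'+1)$ to a singleton and on the constants exhausting the carrier sets are sound bookkeeping that the paper leaves implicit.
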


\begin{proof}
\begin{align*}
&\text{$\domi{t}$ satisfies $\mathcal{M}\left(\mathfrak{I}\right)$}\\
&\Leftrightarrow \domi{m} \in M_{\mathfrak{I}}, \domi{d} \in D_{\mathfrak{I}}, \domi{h} \in H_{\mathfrak{I}}\\
&\Leftrightarrow \domi{m} \in \mathit{PM}^{\mathfrak{I}}, \domi{d} \in \mathit{PD}^{\mathfrak{I}}, \text{ there is $\domi{h}' \in \mathit{PH}^{\mathfrak{I}}$ s.t. $\domi{h}' \leq \domi{h} < \domi{h}' + \domi{w}^{\mathfrak{I}}$}\\
&\Leftrightarrow 
\left(
\begin{array}{l}
\domi{m} \in \mathit{PM}^{\mathfrak{I}}, \, \underline{\mathit{monthOf}}^{\mathfrak{I}}\left(\domi{t}\right) = \domi{m},\\
\domi{d} \in \mathit{PD}^{\mathfrak{I}}, \, \underline{\mathit{dayOf}}^{\mathfrak{I}}\left(\domi{t}\right) = \domi{d},\\
\text{ there is $\domi{h}' \in \mathit{PH}^{\mathfrak{I}}$ s.t. $\domi{h}' \leq \domi{h} < \domi{h}' + \domi{w}^{\mathfrak{I}}$}, \, \underline{\mathit{hourOf}}^{\mathfrak{I}}\left(\domi{t}\right) = \domi{h}\\
\end{array}
\right)\\
&\Leftrightarrow
\domi{t} \in \varphi^{\mathfrak{I}}.
\end{align*}
\end{proof}

\section{Datasets and synthetic policies used for experiments}

\subsection{Datasets for BM-RBAC}\label{sub:bm_rbac_datasets} We use the access control matrix provided by Amazon for the IEEE MLSP 2012 competition~\cite{mlsp2012amazon}. They assign three types of permissions, named ``HOST'', ``PERM\_GROUP'', and ``SYSTEM\_GROUP''. For each type of permission, we created an access control matrix by collecting all users and all permissions belonging to that type. There are approximately 30,000 users, 1,700 permissions of type ``HOST'', 6,000 of type ``PERM\_GROUP'', and 20,000 of type ``SYSTEM\_GROUP''.
 
The resulting access control matrices are far too large to be handled efficiently by the policy miner we developed. To address this, during 5-fold cross-validation (see Section~\ref{sub:evaluation} for an overview), we worked instead with an access control submatrix induced by a sample of 30\% of all users. Each fold used a different sample of users. To see why this is enough, we remark that, in RBAC policies, the number $R$ of roles is usually much smaller than the number $N$ of users. Moreover, the number $K$ of possible subsets of permissions that users are assigned by RBAC policies is small in comparison to the whole set of possible subsets of permissions. If $N$ is much larger than $K$, then, by the pidgeonhole principle, many users have the same subset of permissions. Therefore, it is not necessary to use all $N$ users to mine an adequate RBAC policy, as only a fraction of them has all the necessary information. The high TPR (above 80\%) of the policy that we mined supports the fact that using a submatrix is still enough to mine policies that generalize well.

\subsection{Synthetic policy for spatio-temporal RBAC}
\label{sub:synth_pol_starbac}
We present here the synthetic spatio-temporal RBAC policy that we used for our experiments. We assume the existence of five rectangular buildings, described in Table~\ref{tab:starbac_buildings}. The left column indicates the building's name and the right column describes the two-dimensional coordinates of the building's corners. There are five roles, which we describe next. We regard a permission as an \emph{action} executed on an \emph{object}.
\begin{table}[h!btp]
\centering
\begin{tabular}{|c|c|}
\hline
Name & Corners\\
\hline
\hline
Main building & $(1, 3), (1, 4), (4, 4), (4, 3)$\\
\hline
Library & $(1, 1), (1, 2), (2, 2), (2, 1)$\\
\hline
Station & $(8, 1), (8, 9), (9, 9), (9, 1)$\\
\hline
Laboratory & $(2, 6), (2, 8), (4, 8), (4, 6)$\\
\hline
Computer room & $(6, 6), (6, 7), (7, 7), (7, 6)$\\
\hline
\end{tabular}
\caption{}
\label{tab:starbac_buildings}
\end{table}

The first role assigns a permission to a user if all of the following hold:
\begin{itemize}
\item The user is at most 1 meter away from the computer room.
\item The object is in the computer room or in the laboratory.
\item The current day is an odd day of the month.
\item The current time is between 8AM and 5PM.
\end{itemize}

The second role assigns a permission to a user if all of the following hold:
\begin{itemize}
\item The user is outside the library.
\item The object is at most 1 meter away from the library.
\item Either
\begin{itemize}
\item the current day is before the 10th day of the month and the current time is between 2PM and 8PM or
\item the current day is after the 15th day of the month and the current time is between 8AM and 12PM.
\end{itemize}
\end{itemize}

The third role assigns a permission to a user if all of the following hold:
\begin{itemize}
\item The user is at most 3 meters away from the main building.
\item The object is at most 3 meters away from the main building.
\end{itemize}

The fourth role assigns a permission to a user if all of the following hold:
\begin{itemize}
\item The user is inside the library. 
\item The object is outside the library.
\item The current day is before the 15th day of the month.
\item The current time is between 12AM and 12PM.
\end{itemize}

The fifth role assigns a permission to a user if all of the following hold:
\begin{itemize}
\item The user is inside the main building, at most 1 meter away from the library, inside the laboratory, at most 2 meters away from the computer room, or inside the station.
\item The object satisfies the same spatial constraint.
\item The current day is before the 15th day of the month.
\item The current time is between 12AM and 12PM.
\end{itemize}

\end{document}